\keywords{monoidal categories, tree width, rank width}
\begin{document}
\title{Monoidal Width}
\author[E. Di Lavore]{Elena {Di Lavore}\lmcsorcid{0000-0002-7783-5079}}
\author[P. Soboci\'nski]{{Pawe\l} Soboci\'nski\lmcsorcid{0000-0002-7992-9685}}

\address{\begin{center}Tallinn University of Technology, Tallinn, Estonia\end{center}}
\thanks{This research was supported by the ESF funded Estonian IT Academy research measure (project 2014-2020.4.05.19-0001).
The second author was additionally supported by the Estonian Research Council grant PRG1210.}
\begin{abstract}
  We introduce monoidal width as a measure of complexity for morphisms in monoidal categories.
  Inspired by well-known structural width measures for graphs, like tree width and rank width, monoidal width is based on a notion of syntactic decomposition:
  a monoidal decomposition of a morphism is an expression in the language of monoidal categories, where operations are monoidal products and compositions, that specifies this morphism.
  Monoidal width penalises the composition operation along ``big'' objects, while it encourages the use of monoidal products.
  We show that, by choosing the correct categorical algebra for decomposing graphs, we can capture tree width and rank width.
  For matrices, monoidal width is related to the rank.
  These examples suggest monoidal width as a good measure for structural complexity of processes modelled as morphisms in monoidal categories.
\end{abstract}
\maketitle
\tableofcontents
\section{Introduction}
In recent years, a current of research has emerged with focus on the interaction of \emph{structure} --- especially algebraic, using category theory and related subjects --- and \emph{power}, that is algorithmic and combinatorial insights stemming from graph theory, game theory and related subjects. Recent works include~\cite{abramsky2017pebbling,abramsky2021relating,montacute2022pebble}.

The algebra of monoidal categories is a fruitful source of \emph{structure} --- it can be seen as a general process algebra of concurrent processes, featuring a sequential ($\dcomp$) as well as a parallel ($\tensor$) composition. Serving as a process algebra in this sense, it has been used to describe artefacts of a computational nature as \emph{arrows} of appropriate monoidal categories. Examples include Petri nets~\cite{fong2018seven}, quantum circuits~\cite{Coecke2017,DuncanKPW20},
signal flow graphs~\cite{fong2018seven,Bonchi0Z21}, electrical circuits~\cite{Comfort2021,Boisseau2021}, digital circuits~\cite{GhicaJL17}, stochastic processes~\cite{fritz2020,cho2019} and games~\cite{GhaniHWZ18}.

Given that the algebra of monoidal categories has proved its utility as a language for describing computational artefacts in various applications areas, a natural question is to examine its relationship with \emph{power}: can monoidal structure help us to design efficient algorithms? To begin to answer this question, let us consider a mainstay of computer science: \emph{divide-and-conquer} algorithms. Such algorithms rely on the internal geometry of the global artefact under consideration to ensure the ability to \emph{divide}, that is, decompose it consistently into simpler components, inductively compute partial solutions on the components, and then recombine these local results to obtain a global solution.

\begin{figure}[h!]
  \interchangelawDecFig{}
  \caption{This morphism can be decomposed in two different ways: \((f \tensor f') \dcomp (g \tensor g') = (f \dcomp g) \tensor (f' \dcomp g')\).}\label{fig:interchange-law}
\end{figure}

Let us now return to systems described as arrows of monoidal categories. In applications, the parallel ($\tensor$) composition typically means placing systems side-by-side with no explicit interconnections. On the other hand, the sequential ($\dcomp$) composition along an object typically means communication, resource sharing or synchronisation, the complexity of which is determined by the object along which the composition is performed. Based on examples in the literature, our basic motivating intuition is:
\slogan{An algorithmic problem on an artefact that is a `$\tensor$' lends itself to a divide-and-conquer approach more easily than one that is a `$\dcomp$'.}
Moreover, the ``size'' of the object along which the `$\dcomp$' occurs matters; typically the ``larger'' the object, the more work is needed in order to recombine results in any kind of divide-and-conquer approach. An example is compositional reachability checking in Petri nets of Rathke et.\ al.~\cite{rathke2014compositional}: calculating the sequential composition is exponential in the size of the boundary. Another recent example is the work of Master~\cite{master2022compose} on a compositional approach to calculating shortest paths.

\medskip
On the other hand, (monoidal) category theory equates different descriptions of systems. Consider what is known as middle-four interchange, illustrated in Figure~\ref{fig:interchange-law}.
Although monoidal category theory asserts that \((f \tensor f') \dcomp (g \tensor g') = (f \dcomp g) \tensor (f' \dcomp g')\), considering the two sides of the equations as decomposition blueprints for a divide-and-conquer approach, the right-hand side of the equation is clearly preferable since it maximises parallelism by minimising the size of the boundary along which composition occurs. This, roughly speaking, is the idea of \emph{width} -- expressions in the language of monoidal categories are assigned a natural number that measures ``how good'' they are as decomposition blueprints. The \emph{monoidal width} of an arrow is then the width of its most efficient decomposition. In concrete examples, arrows with low width lend themselves to efficient divide-and-conquer approaches, following a width-optimal expression as a decomposition blueprint.

\medskip

The study of efficient decompositions of combinatorial artefacts is well-established, especially in graph theory.
A number of \emph{graph widths} --- by which we refer to related concepts like tree width, path with, branch width, cut width, rank width or twin width --- have become known in computer science because of their relationship with algorithmic properties.
All of them share a similar basic idea: in each case, a specific notion of legal decomposition is priced according to the most expensive operation involved, and the price of the cheapest decomposition is the width.

Perhaps the most famous of these is tree width, a measure of complexity for graphs that was independently defined by different authors~\cite{bertele1973treewidth,halin1976treewidth,robertson1986graph-minorsII}.
Every nonempty graph has a tree width, which is a natural number.
Intuitively, a tree decomposition is a recipe for decomposing a graph into smaller subgraphs that form a tree shape.
These subgraphs, when some of their vertices are identified, need to compose into the original graph, as shown in \Cref{fig:ex-tree-dec}.
Courcelle's theorem
\slogan{Every property expressible in the monadic second order logic of graphs can be verified in linear time on graphs with bounded tree width.}
is probably the best known among several results that establish links with algorithms~\cite{bodlaender1992tourist,bodlaender2008combinatorial,courcelle1990monadic} thus illustrating its importance for computer science.
\begin{figure}[h!]
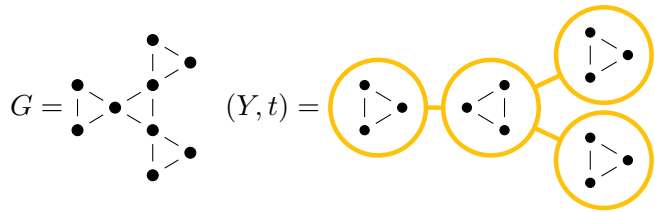

  \[\treeDecExFig{}\]
  \caption{A tree decomposition cuts the graph along its vertices.}\label{fig:ex-tree-dec}
\end{figure}

Another important measure is rank width~\cite{oum2006rank-width} --- a relatively recent development that has attracted significant attention in the graph theory community.
A rank decomposition is a recipe for decomposing a graph into its single-vertex subgraphs by cutting along edges.
The cost of a cut is the rank of the adjacency matrix that represents it, as illustrated in \Cref{fig:ex-rank-cut}.
An intuition for rank width is that it is a kind of \virgolette{Kolmogorov complexity} for graphs, with higher rank widths indicating that the connectivity data of the graph cannot be easily compressed.
For example, while the family of cliques has unbounded tree width, their connectivity rather simple: in fact, all cliques have rank width $1$.
\begin{figure}[h!]
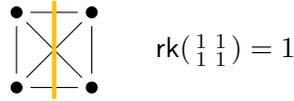

  \[\cutrankExFig{}\]
  \caption{A cut and its matrix in a rank decomposition.}\label{fig:ex-rank-cut}
\end{figure}

\paragraph{Contribution.}
Building on our conference paper~\cite{2022monoidalwidth}, our goals are twofold.
Firstly, to introduce the concept of monoidal width and begin to develop techniques for reasoning about it.

\medskip
Before describing concrete, technical contributions, let us take a bird's eye view. It is natural for the seasoned researcher to be sceptical of a new abstract framework that seeks to generalise known results. The best abstract approaches \textit{(i)} simplify existing known arguments, \textit{(ii)} clean up the research landscape by connecting existing notions, or \textit{(iii)} introduce  techniques that allow one to prove new theorems. This paper does not (yet) bring strong arguments in favour of monoidal width if one uses these three points as yardsticks. Our high-level, conceptual contribution is, instead, the fact that the algebra of monoidal categories -- already used in several contexts in theoretical computer science -- is a multi-purpose algebra for specifying decompositions of graph-like structures important for computer scientists. There are several ways of making this work, and making these monoidal \emph{algebras} of ``open graphs'' explicit as \emph{monoidal categories} is itself a valuable endeavour.
Indeed, identifying a monoidal category automatically yields a particular notion of decomposition: the instance of monoidal width in the monoidal category of interest. This point of view therefore demystifies ad hoc notions of decomposition that accompany each notion of width that we consider in this paper.
Moreover, having an explicit algebra is also useful because it suggests a data structure --- the expression in the language of monoidal categories --- as a way of describing decompositions.

The results in this paper can  be seen as a ``sanity check'' of these general claims, but can also be seen as taking the first technical steps in order to build towards points \textit{(i)-(iii)} of the previous paragraph.
To this end we examine monoidal width in the presence of common structure, such as coherent comultiplication on objects, and in a foundational setting such as the monoidal category of matrices.
Secondly, connecting this approach with previous work, to examine graph widths through the prism of monoidal width.
The two widths we focus on are tree width and rank width.
We show that both can be seen as instances of monoidal width.
The interesting part of this endeavour is identifying the monoidal category, and thus the relevant ``decomposition algebra'' of interest.

Unlike the situation with graph widths, it does not make sense to talk about monoidal width per se, since it is dependent on the choice of underlying monoidal category and thus a particular ``decomposition algebra''.
The decomposition algebras that underlie tree and rank decompositions reflect their intuitive understanding.
For tree width, this is a cospan category whose morphisms represent graphs with vertex interfaces, while for rank width it is a category whose morphisms represent graphs with edge interfaces, with adjacency matrices playing the role of tracking connectivity information within a graph.
We show that the monoidal width of a morphism in these two categories is bounded, respectively, by the branch (\Cref{th:mwd-bwd}) and rank width (\Cref{th:mwd-rwd}) of the corresponding graph. In the first instance, this is enough to establish the connection between monoidal width and tree width, given that it is known that tree width and branch width are closely related.
A small technical innovation is the definition of intermediate inductive notions of branch (\Cref{def:rec-branch-decomposition}) and rank (\Cref{def:rec-rank-dec}) decompositions, equivalent to the original  definitions via ``global'' combinatorial notions of graph decomposition. The inductive presentations are closer in spirit to the inductive definition of monoidal decomposition, and allow us to give direct proofs of the main correspondences.

\paragraph{String diagrams.}
String diagrams~\cite{joyal1991geometry} are a convenient syntax for monoidal categories, where a morphism \(f \colon X \to Y\) is depicted as a box with input and output wires: \(\morphismFig{}\).
Morphisms in monoidal categories can be composed sequentially, using the composition of the category, and in parallel, using the monoidal structure.
These two kinds of composition are reflected in the string diagrammatic syntax:
the sequential composition \(f \dcomp g\) is depicted by connecting the output wire of \(f\) with the input wire of \(g\);
the parallel composition \(f \tensor f'\) is depicted by writing \(f\) on top of \(f'\).
\begin{align*}
  f \dcomp g &= \sequentialFig{} & f \tensor f' &= \parallelFig{}
\end{align*}
The advantage of this syntax is that all coherence equations for monoidal categories are trivially true when written with string diagrams.
An example is the middle-four interchange law \((f \tensor f') \dcomp (g \tensor g') = (f \dcomp g) \tensor (g \dcomp g')\).
These two expressions have one representation in terms of string diagrams, as shown in \Cref{fig:interchange-law}.
The coherence theorem for monoidal categories~\cite{maclane78} ensures that string diagrams are a sound and complete syntax for morphisms in monoidal categories.

\paragraph{Related work.}
This paper contains the results of~\cite{2021monoidalwidth} and~\cite{2022monoidalwidth} with detailed proofs.
We generalise the results of~\cite{2021monoidalwidth} to undirected hypergraphs and provide a syntactic presentation of the subcategory of the monoidal category of cospans of hypergraphs on discrete objects.

Previous syntactical approaches to graph widths are the work of Pudl{\'a}k, R{\"o}dl and Savick{\`y}~\cite{pudlak1988graph-complexity} and the work of Bauderon and Courcelle~\cite{bauderon1987graph}.
Their works consider different notions of graph decompositions, which lead to different notions of graph complexity.
In particular, in~\cite{bauderon1987graph}, the cost of a decomposition is measured by counting \emph{shared names}, which is clearly closely related to penalising sequential composition as in monoidal width.
Nevertheless, these approaches are specific to particular, concrete notions of graphs, whereas our work concerns the more general algebraic framework of monoidal categories.

Abstract approaches to width have received some attention recently, with a number of diverse contributions.
Blume et.\ al.~\cite{blume2011treewidth}, similarly to our work, use (the category of) cospans of graphs as a formal setting to study graph decompositions: indeed, a major insight of loc.\ cit.\ is that tree decompositions are tree-shaped diagrams in the cospan category, and the original graph is reconstructed as a colimit of such a diagram. Our approach is more general, however, emphasising the relevance of the algebra of monoidal categories, of which cospan categories are just one family of examples.

The literature on comonads for game semantics characterises tree and path decompositions of relational structures (and graphs in particular) as coalgebras of certain comonads~\cite{abramsky2017pebbling,abramsky2021relating,montacute2022pebble,abramsky2021comonadic,conghaile2021game}.
Bumpus and Kocsis~\cite{bumpus2021spined,bumpus2021generalizing} and, later, Bumpus, Kocsis and Master~\cite{bumpus2023structured} also generalise tree width to the categorical setting, although their approach is conceptually and technically removed from ours.
Their work takes a combinatorial perspective on decompositions, following the classical graph theory literature.
Given a shape of decomposition, called the spine in~\cite{bumpus2021spined}, a decomposition is defined globally as a functor out of that shape.
This generalises the characterisation of tree width based on Halin's S-functions~\cite{halin1976treewidth}.
In contrast, monoidal width is algebraic in flavour, following Bauderon and Courcelle's insights on tree decompositions~\cite{bauderon1987graph}.
Monoidal decompositions are syntax trees defined inductively and rely on the decomposition algebra given by monoidal categories.

\paragraph{Synopsis.}
The definition of monoidal width is introduced in \Cref{sec:mwd}, together with a worked out example.
In \Cref{sec:mwd-twd} we recover tree width by instantiating monoidal width in a suitable category of cospans of hypergraphs.
We recall it in \Cref{sec:cospans-hypergraphs} and provide a syntax for it in \Cref{sec:frobunnius-graphs}.
Similarly, in \Cref{sec:mwd-rwd} we recover rank width by instantiating monoidal width in a prop of graphs with boundaries where the connectivity information is stored in adjacency matrices, which we recall in \Cref{sec:prop-graph}.
This motivates us to study monoidal width for matrices over the natural numbers in \Cref{sec:mwd-matrices}.

\section{Monoidal width}\label{sec:mwd}
We introduce monoidal width, a notion of complexity for morphisms in monoidal categories that relies on explicit syntactic \emph{decompositions}, relying on the algebra of monoidal categories.
We then proceed with a simple, yet useful examples of efficient monoidal decompositions in~\Cref{sec:mwd-copy}.

A monoidal decomposition of a morphism \(f\) is a binary tree where internal nodes are labelled with the operations of composition \(\dcomp\) or monoidal product \(\tensor\), and leaves are labelled with ``atomic'' morphisms.
A decomposition, when evaluated in the obvious sense, results in \(f\).
We do not assume that the set of atomic morphisms \(\decgenerators\) is minimal, they are merely morphisms that do not necessarily need to be further decomposed.
We assume that \(\decgenerators\) contains enough atoms to have a decomposition for every morphism.
In most cases, we will take \(\decgenerators\) to contain all the morphisms.

\begin{defi}[Monoidal decomposition]\label{def:monoidal-decomposition}
  Let \(\cat{C}\) be a monoidal category and \(\decgenerators\) be a subset of its morphisms to which we refer as \emph{atomic}.
  The set \(\decset{f}\) of monoidal decompositions of \(f \colon A \to B\) in \(\cat{C}\) is defined inductively:
  \begin{align*}
    \decset{f} \quad \Coloneqq \quad & \leafgenerator{f} &\text{if } f \in \decgenerators \\
    \mid \quad & \nodegenerator{d_{1}}{\tensor}{d_{2}} &\text{if } d_1\in \decset{f_1},\,d_2\in \decset{f_2} \text{ and } f =_\cat{C} f_1\tensor f_2 \\
    \mid \quad & \nodegenerator{d_{1}}{\dcomp_{X}}{d_{2}} &\text{if }d_1\in \decset{f_1 \colon A\to X},\,d_2\in \decset{f_2 \colon X\to B} \text{ and }f =_\cat{C} f_1\dcomp f_2
  \end{align*}
\end{defi}

In general, a morphism can be decomposed in different ways and decompositions that maximise parallelism are deemed more efficient.
The monoidal width of a morphism is the cost of its cheapest monoidal decomposition.

Formally, each operation and atom in a decomposition is assigned a weight that will determine the cost of the decomposition. This is captured by the concept of a \emph{weight function}.
\begin{defi}\label{def:weight-function}
  Let \(\cat{C}\) be a monoidal category and let \(\decgenerators\) be its atomic morphisms.
  A \emph{weight function} for \((\cat{C},\decgenerators)\) is a function \(\nodeweight \colon \decgenerators \union \monoidaloperations{\cat{C}} \to \naturals\) such that \(\nodeweight(X \tensor Y) = \nodeweight(X) + \nodeweight(Y)\), and \(\nodeweight(\tensor) = 0\).
\end{defi}

A prop is a strict symmetric monoidal category where objects are natural numbers and the monoidal product on them is addition.
If \(\cat{C}\) is a prop, then, typically, we let \(\nodeweight(1)\defn 1\).
The idea behind giving a weight to an object $X\in\cat{C}$ is that $\nodeweight(X)$ is the cost paid for composing along $X$.

\begin{defi}[Monoidal width]\label{def:mwd}
  Let \(\nodeweight\) be a weight function for \((\cat{C},\decgenerators)\).
  Let \(f\) be in \(\cat{C}\) and \(d\in \decset{f}\).
  The \emph{width} of \(d\) is defined inductively as follows:
  \begin{align*}
    \decwidth(d) \defn\ &  \nodeweight(f) & \text{if }d=\leafgenerator{f} \\
                        &  \max \{\decwidth(d_1),\decwidth(d_2)\} & \text{if }d= \nodegenerator{d_{1}}{\tensor}{d_{2}}\\
                        &  \max \{\decwidth(d_1),\,\nodeweight(X),\,\decwidth(d_2)\} & \text{if } d= \nodegenerator{d_{1}}{\dcomp_{X}}{d_{2}}
  \end{align*}
  The \emph{monoidal width} of \(f\) is \(\mwd(f) \defn \min_{d\in \decset{f}} \decwidth(d)\).
\end{defi}

\begin{exa}\label{ex:monoidal-decomposition}
  Let \(f \colon 1 \to 2\) and \(g \colon 2 \to 1\) be morphisms in a prop such that \(\mwd(f)=\mwd(g)=2\).
  The following figure represents the monoidal decomposition of \(f \dcomp (f \tensor f) \dcomp (g \tensor g) \dcomp g\) given by
  \[\nodegenerator{f}{\dcomp_{2}}{\nodegenerator{\nodegenerator{\nodegenerator{f}{\dcomp_{2}}{g}}{\tensor}{\nodegenerator{f}{\dcomp_{2}}{g}}}{\dcomp_{2}}{g}}.\]
  \[\monoidaldecExFig{}\]
  Indeed, taking advantage of string diagrammatic syntax,
  decompositions can be illustrated by enhancing string diagrams with additional annotations that indicate the order of decomposition. Throughout this paper, we use thick yellow dividing lines for this purpose.

  Given that the width of a decomposition is the most expensive operation or atom, the above has width is \(2\) as compositions are along at most \(2\) wires.
\end{exa}

\begin{exa}\label{ex:mwd-number-like-morphisms}
  With the data of \Cref{ex:monoidal-decomposition}, define a family of morphisms \(h_{n} \colon 1 \to 1\) inductively as \(h_{0} \defn f \dcomp_{2} g\), and \(h_{n+1} \defn f \dcomp_{2} (h_{n} \tensor h_{n}) \dcomp_{2} g\).
  \[\mwdNumberLikeExFig{}\]

  Each \(h_{n}\) has a decomposition of width \(2^{n}\) where the root node is the composition along the middle wires.
  However --- following the schematic diagram above --- we have that \(\mwd(h_{n}) \leq 2\) for any \(n\).
\end{exa}

\subsection{Monoidal width of copy}\label{sec:mwd-copy}

Although monoidal width is a very simple notion, reasoning about it in concrete examples can be daunting because of the combinatorial explosion in the number of possible decompositions of any morphism. For this reason, it is useful to examine some commonly occurring structures that one encounters ``in the wild'' and examine their decompositions. One such situation is when the objects are equipped with a coherent comultiplication structure.

\begin{defi}
  Let \(\cat{C}\) be a symmetric monoidal category, with symmetries \(\swap{X,Y} \colon X \tensor Y \to Y \tensor X\).
  We say that \(\cat{C}\) has coherent copying if there is a class of objects \(\copiable{\cat{C}} \subseteq \obj{\cat{C}}\), satisfying
  \begin{itemize}
    \item \(X, Y \in \copiable{\cat{C}}\) iff \(X \tensor Y \in \copiable{\cat{C}}\);
    \item Every object \(X \in \copiable{\cat{C}}\) is endowed with a morphism \(\cp_{X} \colon X \to X \tensor X\);
    \item For every \(X, Y \in \copiable{\cat{C}}\), \(\cp_{X \tensor Y} = (\cp_{X} \tensor \cp_{Y}) \dcomp (\id{X} \tensor \swap{X,Y} \tensor \id{Y})\) (coherence).
  \end{itemize}
  \[\coherenceCopyFig{}\]
\end{defi}
An example is any cartesian prop, where the copy morphisms are the universal ones given by the cartesian structure: \(\cp_n \defn \productmap{\id{n}}{\id{n}} \colon n \to n + n\).
For props with coherent copy, we assume that copy morphisms, symmetries and identities are atoms, \(\cp_{X},\swap{X,Y}, \id{X} \in \decgenerators\), and that their weight is given by \(\nodeweight(\cp_{X}) \defn 2 \cdot \nodeweight(X)\), \(\nodeweight(\swap{X,Y}) \defn \nodeweight(X) + \nodeweight(Y)\) and \(\nodeweight(\id{X}) \defn \nodeweight(X)\).

\begin{exa}
  Let \(\cat{C}\) be a prop with coherent copy and suppose that \(1 \in \copiable{\cat{C}}\).
  This implies that every \(n \in \copiable{\cat{C}}\) and there are copy morphisms \(\cp_{n} \colon n \to 2n\) for all \(n\).
  Let \(\gamma_{n,m} \defn (\cp_{n} \tensor \id{m}) \dcomp (\id{n} \tensor \swap{n,m}) \colon n + m \to n + m + n\).
  We can decompose \(\gamma_{n,m}\) in terms of \(\gamma_{n-1,m+1}\) (in the dashed box), \(\cp_{1}\) and \(\swap{1,1}\) by cutting along at most \(n+1+m\) wires:
  \[\gamma_{n,m} = (\id{n-1} \tensor ((\cp_{1} \tensor \id{1}) \dcomp (\id{1} \tensor \swap{1,1}))) \dcomp_{n+1+m} (g_{n-1,m+1} \tensor \id{1}).\]
  \[\gamma_{n,m} = \mwdCopyExFig{}\]
  This allows us to decompose \(\cp_{n} = \gamma_{n,0}\) cutting along only \(n+1\) wires. In particular, this means that \(\mwd(\cp_n) \leq n+1\).
\end{exa}

The following lemma generalises the above example and is used in the proofs of some results in later sections, \Cref{prop:mwd-bwd-upper} and \Cref{prop:mwd-lessthan-domain-codomain}.

\begin{lem}\label{lemma:mwd-copy}
  Let \(\cat{C}\) be a symmetric monoidal category with coherent copying.
  Suppose that \(\decgenerators\) contains \(\cp_X\) for all \(X \in \copiable{\cat{C}}\), and \(\swap{X,Y}\) and \(\id{X}\) for all \(X \in \obj{\cat{C}}\).
  Let \(\overline{X}\defn X_1 \tensor \cdots \tensor X_n\), with \(X_{i} \in \copiable{\cat{C}}\), \(f \colon Y \tensor \overline{X} \tensor Z \to W\) and let \(d \in \decset{f}\).
  Let \(\gamma_{\overline{X}}(f) \defn (\id{Y} \tensor \cp_{\overline{X}} \tensor \id{Z}) \dcomp (\id{Y \tensor \overline{X}} \tensor \swap{\overline{X}, Z}) \dcomp (f \tensor \id{\overline{X}})\).
  \[\gamma_{\overline{X}}(f) \,\defn\quad\lemmamwdcopyStateFig{}\]
  Then there is a monoidal decomposition \(\copyMdec_{\overline{X}}(d)\) of \(\gamma_{\overline{X}}(f)\) such that
  \[\decwidth(\copyMdec_{\overline{X}}(d)) \leq \max \{\decwidth(d), \nodeweight(Y) + \nodeweight(Z) + (n+1) \cdot \max_{i = 1,\ldots,n} \nodeweight(X_i)\}.\]
\end{lem}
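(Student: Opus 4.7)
The plan is to induct on $n$. In the base case $n=0$, $\overline{X}$ is the monoidal unit, so both $\cp_{\overline{X}}$ and $\swap{\overline{X},Z}$ are identities and $\gamma_{\overline{X}}(f)=f$; taking $\copyMdec_{\overline{X}}(d)\defn d$ yields width $\decwidth(d)$, which is trivially bounded by the right-hand side.

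For the inductive step ($n\geq 1$) I would peel off the \emph{last} factor $X_n$ rather than the first. Writing $\overline{X}'\defn X_1\tensor\cdots\tensor X_{n-1}$ and using the coherence equation for $\cp$ together with naturality of the symmetries, one verifies the identity
\[
\gamma_{\overline{X}}(f)\;=\;L\,\dcomp\,\bigl(\gamma_{\overline{X}'}(f)\tensor\id{X_n}\bigr),
\]
where $L\defn(\id{Y\tensor\overline{X}'}\tensor\cp_{X_n}\tensor\id{Z})\dcomp(\id{Y\tensor\overline{X}'\tensor X_n}\tensor\swap{X_n,Z})$, and the recursive $\gamma_{\overline{X}'}(f)$ is obtained by applying the lemma to the same $f$ viewed with $X_n\tensor Z$ in the role of the ``$Z$''. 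The state produced by $L$ is $Y\tensor\overline{X}'\tensor X_n\tensor Z\tensor X_n$: the fresh copy of $X_n$ sits in the rightmost position, precisely where the $\id{X_n}$ strand of the recursive call will leave it, so no final rearrangement on $W\tensor\overline{X}$ is required. The decomposition $\copyMdec_{\overline{X}}(d)$ is then the obvious syntax tree, with $L$ expanded as tensor-then-compose over its atoms and the IH-provided subtree $\copyMdec_{\overline{X}'}(d)$ placed in parallel with $\id{X_n}$ on the right of the outer $\dcomp$.

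It remains to verify the width bound. The two $\dcomp$-cuts introduced at this level --- inside $L$ at $Y\tensor\overline{X}'\tensor X_n\tensor X_n\tensor Z$, and between $L$ and the recursive call at $Y\tensor\overline{X}'\tensor X_n\tensor Z\tensor X_n$ --- both have weight $\nodeweight(Y)+\nodeweight(\overline{X})+\nodeweight(X_n)+\nodeweight(Z)$, which is at most $\nodeweight(Y)+\nodeweight(Z)+(n+1)\max_i\nodeweight(X_i)$ since $\nodeweight(\overline{X})\leq n\max_i\nodeweight(X_i)$. The individual atoms of $L$ (identities, $\cp_{X_n}$, $\swap{X_n,Z}$) all have weights trivially below this bound. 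Finally, applying the IH with $Z':=X_n\tensor Z$ and $n':=n-1$ gives the bound $\max\{\decwidth(d),\,\nodeweight(Y)+\nodeweight(X_n)+\nodeweight(Z)+n\max_{i\leq n-1}\nodeweight(X_i)\}$ for $\copyMdec_{\overline{X}'}(d)$, and the estimate $\nodeweight(X_n)+n\max_{i\leq n-1}\nodeweight(X_i)\leq (n+1)\max_i\nodeweight(X_i)$ completes the argument.

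The main obstacle is precisely this choice of peeling direction. A left-to-right recursion (peeling $X_1$ first) would leave the extra copy of $X_1$ at the wrong end of the intermediate state, forcing a final factor $\id{W}\tensor\swap{\overline{X}',X_1}$ whose $\dcomp$-cut on $W\tensor\overline{X}$ has weight $\nodeweight(W)+\nodeweight(\overline{X})$; since $\nodeweight(W)$ is not in general controlled by the lemma's right-hand side, the bound would break. Recursing on the rightmost factor sidesteps this completely, and the rest is routine bookkeeping with the coherence of copy.
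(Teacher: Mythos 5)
Your proof is correct and follows essentially the same route as the paper's: induction on $n$, peeling off the last tensor factor via the coherence of $\cp$, applying the induction hypothesis with the peeled factor absorbed into $Z$, and observing that the dominant cut along $Y\tensor\overline{X}\tensor X_n\tensor Z$ (plus the extra copy of $X_n$) costs $\nodeweight(Y)+\nodeweight(Z)+\nodeweight(\overline{X})+2\nodeweight(X_n)\leq\nodeweight(Y)+\nodeweight(Z)+(n+1)\max_i\nodeweight(X_i)$. Your closing remark on why peeling from the left would fail (it would force a cut involving $\nodeweight(W)$) is a useful observation not made explicit in the paper, but the construction itself is the same.
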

\begin{proof}
  Proceed by induction on the number \(n\) of objects being copied.
  If \(n = 0\), then we are done because we keep the decomposition \(d\) and define \(\copyMdec_{\monoidalunit}(d) \defn d\).

  Suppose that the statement is true for any \(f' \colon Y \tensor \overline{X} \tensor Z' \to W\).
  Let \(f \colon Y \tensor \overline{X} \tensor X_{n+1} \tensor Z \to W\).
  By coherence of \(\cp\), we can rewrite \(\gamma_{\overline{X} \tensor X_{n+1}}(f)\).
  \begin{equation*}
    \lemmamwdcopyProofFigOne  = \lemmamwdcopyProofFigThree
  \end{equation*}
  Let \(\gamma_{\overline{X}}(f)\) be the morphism in the above dashed box.
  By the induction hypothesis, there is a monoidal decomposition \(\copyMdec_{\overline{X}}(d)\) of \(\gamma_{\overline{X}}(f)\) of bounded width: \(\decwidth(\copyMdec_{\overline{X}}(d)) \leq \max \{\decwidth(d), \nodeweight(Y) + \nodeweight(X_{n+1} \tensor Z) + (n+1) \cdot \max_{i = 1,\ldots,n} \nodeweight(X_i)\}\).
  We can use this decomposition to define a monoidal decomposition \(\copyMdec_{\overline{X} \tensor X_{n+1}}(d)\) of \(\gamma_{\overline{X}\tensor X_{n+1}}(f)\) as shown below.
  \[\lemmamwdcopyProofFigCuts{}\]
  Note that the only cut that matters is the longest vertical one, the composition node along \(Y \tensor \overline{X} \tensor X_{n+1} \tensor Z \tensor X_{n+1}\), because all the other cuts are cheaper.
  The cost of this cut is \(\nodeweight(Y)+\nodeweight(Z)+2 \cdot \nodeweight(X_{n+1}) + \nodeweight(\overline{X}) = \nodeweight(\dcompnode{Y}) + \nodeweight(\dcompnode{Z}) + \nodeweight(\dcompnode{X_{n+1}}) + \sum_{i=1}^{n+1} \nodeweight(\dcompnode{X_i})\).
  With this observation and applying the induction hypothesis, we can compute the width of the decomposition \(\copyMdec_{\overline{X} \tensor X_{n+1}}(d)\).
  \begin{align*}
    & \decwidth(\copyMdec_{\overline{X} \tensor X_{n+1}}(d)) \\
    & = \max \big\lbrace \nodeweight(\dcompnode{Y}) + \nodeweight(\dcompnode{Z}) + \nodeweight(\dcompnode{X_{n+1}}) + \sum_{i=1}^{n+1} \nodeweight(\dcompnode{X_i}), \decwidth(\copyMdec_{\overline{X}}(d))\big\rbrace \\
    & \leq \max\big\{\nodeweight(\dcompnode{Y}) + \nodeweight(\dcompnode{Z}) + (n+2) \cdot \max_{i=1, \ldots, n+1} \nodeweight(\dcompnode{X_i}), \decwidth(d),\\
    & \qquad \nodeweight(\dcompnode{Y}) + \nodeweight(\dcompnode{X_{n+1} \tensor Z}) + (n+1) \cdot \max_{i=1, \ldots, n} \nodeweight(\dcompnode{X_i})\big\} \\
    & = \max\big\{\nodeweight(\dcompnode{Y}) + \nodeweight(\dcompnode{Z}) + (n+2) \cdot \max_{i=1, \ldots, n+1} \nodeweight(\dcompnode{X_i}), \decwidth(d)\big\}
  \qedhere
  \end{align*}
\end{proof}

\section{A monoidal algebra for tree width}\label{sec:mwd-twd}
Our first case study is tree width of undirected hypergraphs.
We show that monoidal width in a suitable monoidal category of hypergraphs is within constant factors of tree width.
We rely on branch width, a measure equivalent to tree width, to relate the latter with monoidal width.

After recalling tree and branch width and the bounds between them in \Cref{sec:twd}, we define the intermediate notion of inductive branch decomposition in \Cref{sec:rec-branch-dec} and show its equivalence to that of branch decomposition.
Separating this intermediate step allows a clearer presentation of the correspondence between branch decompositions and monoidal decompositions.
\Cref{sec:cospans-hypergraphs} recalls the categorical algebra of cospans of hypergraphs and \Cref{sec:frobunnius-graphs} introduces a syntactic presentations of them.
Finally, \Cref{sec:mwd-twd-proof} contains the main result of the present section, which relates inductive branch decompositions, and thus tree decompositions, with monoidal decompositions.

Classically, tree and branch widths have been defined for finite undirected multihypergraphs, which we simply call hypergraphs.
These have undirected edges that connect sets of vertices and they may have parallel edges.

\begin{defi}\label{def:hypergraph}
  A \emph{(multi)hypergraph} \(G = \mathgraph{E}{V}\) is given by a finite set of vertices \(V\), a finite set of edges \(E\) and an adjacency function \(\edgeendsfun \colon E \to \parti(V)\), where \(\parti(V)\) indicates the set of subsets of \(V\).
  A \emph{subhypergraph} of \(G\) is a hypergraph \(G' = \mathgraph{E'}{V'}\) such that \(V' \subseteq V\), \(E' \subseteq E\) and \(\edgeendsfun'(e) = \edgeends{e}\) for all \(e \in E'\).
\end{defi}

\begin{defi}
  Given two hypergraphs \(G = \mathgraph{E}{V}\) and \(H = \mathgraph{F}{W}\), a \emph{hypergraph homomorphism} \(\alpha \colon G \to H\) is given by a pair of functions \(\alpha_{V} \colon V \to W\) and \(\alpha_{E} \colon E \to F\) such that, for all edges \(e \in E\), \(\edgesetends[H]{\alpha_{E}(e)} = \alpha_{V}(\edgesetends[G]{e})\).
  \[\graphmorphismDiagram{}\]
  Hypergraphs and hypergraph homomorphisms form a category \(\UHGraph\), where composition and identities are given by component-wise composition and identities.
\end{defi}

Note that the category \(\UHGraph\) is not the functor category \([\{\bullet \to \bullet\}, \kleisli{\parti}]\): their objects coincide but the morphisms are different.

\begin{defi}\label{def:hyperedge-size}
  The \emph{hyperedge size} of a hypergraph \(G\) is defined as $\hyperedgesize(G) \defn \max_{e \in \edges(G)}$ $\card{\edgeends{e}}$.
  A \emph{graph} is a hypergraph with hyperedge size \(2\).
\end{defi}

\begin{defi}
  A \emph{neighbour} of a vertex \(v\) is a vertex \(w\) distinct from \(v\) with an edge \(e\) such that \(v,w \in \edgeends{e}\).
  A \emph{path} in a hypergraph is a sequence of vertices \((v_{1}, \dots,v_{n})\) such that, for every \(i = 1,\dots,n-1\), \(v_{i}\) and \(v_{i+1}\) are neighbours.
  A \emph{cycle} in a hypergraph is a path where the first vertex \(v_{1}\) coincides with the last vertex \(v_{n}\).
  A hypergraph is \emph{connected} if there is a path between every two vertices.
  A \emph{tree} is a connected acyclic hypergraph.
  A tree is \emph{subcubic} if every vertex has at most three neighbours.
\end{defi}

\begin{defi}
  The set of \emph{binary trees} with labels in a set \(\Lambda\) is either: a leaf \(\leafgenerator{\lambda}\) with label \(\lambda \in \Lambda\); or a label \(\lambda \in \Lambda\) with two binary trees \(T_{1}\) and \(T_{2}\) with labels in \(\Lambda\), \(\nodegenerator{T_{1}}{\lambda}{T_{2}}\).
\end{defi}

\subsection{Background: tree width and branch width}\label{sec:twd}
Intuitively, tree width measures ``how far'' a hypergraph \(G\) is from being a tree: a hypergraph is a tree iff it has tree width 1.
Hypergraphs with tree widths larger than 1 are not trees; for example, the family of cliques has unbounded tree width.

The definition relies on the concept of a \emph{tree decomposition}.
For Robertson and Seymour~\cite{robertson1986graph-minorsII}, a decomposition is itself a tree \(Y\), each vertex of which is associated with a subhypergraph of \(G\).
Then \(G\) can be reconstructed from  \(Y\) by identifying some vertices.

\begin{defiC}[\cite{robertson1986graph-minorsII}]\label{def:tree-decomposition}
  A \emph{tree decomposition} of a hypergraph \(G = \mathgraph{E}{V}\) is a pair \((Y,t)\) where \(Y\) is a tree and \(t \colon \vertices(Y) \to \parti(V)\) is a function such that:
  \begin{enumerate}
    \item Every vertex is in one of the components:
      \(\Union_{i \in \vertices(Y)} t(i) = V\).
    \item Every edge has its endpoints in a component:
      \(\forall e \in E \ \exists i \in \vertices(Y) \ \edgeends{e} \subseteq t(i)\).
    \item The components are glued in a tree shape:
      \(\forall i,j,k \in \vertices(Y) \ i \graphpath j \graphpath k \implies t(i) \intersection t(k) \subseteq t(j)\).
  \end{enumerate}
\end{defiC}

The cost is the maximum number of vertices of the component subhypergraphs.

\begin{exa}
  Consider the hypergraph $G$ and its tree decomposition \((Y,t)\) below.
  Its cost is \(3\) as its biggest component has three vertices.
  \[\treeDecExFig{}\]
\end{exa}

\begin{defi}[Tree width]
  Given a tree decomposition \((Y,t)\) of a hypergraph \(G\), its width is \(\decwidth(Y,t) \defn \max_{i \in \vertices(Y)} \card{t(i)}\).
  The tree width of \(G\) is given by the min-max formula:
  \[\treewidth(G) \defn \min_{(Y,t)} \decwidth(Y,t).\]
\end{defi}

Note that Robertson and Seymour subtract \(1\) from \(\treewidth(G)\) so that trees have tree width \(1\). To minimise bureaucratic overhead, we ignore this convention.\\

We use branch width~\cite{robertson1991graph-minorsX} as a technical stepping stone to relate monoidal width and tree width. Before presenting its definition, it is important to note that branch width and tree width are \emph{equivalent}, i.e.\ they are within a constant factor of each other.

\begin{thmC}[{\cite[Theorem 5.1]{robertson1991graph-minorsX}}]\label{th:twd-bwd}
  Branch width is equivalent to tree width.
  More precisely, for a hypergraph \(G = \mathgraph{E}{V}\),
  \[\max\{\branchwidth(G),\gamma(G)\} \leq \treewidth(G) \leq \max \{ \dfrac{3}{2} \branchwidth(G), \gamma(G), 1\}.\]
\end{thmC}

Branch width relies on branch decompositions, which, intuitively, record in a tree a way of iteratively partitioning the edges of a hypergraph.

\begin{defiC}[\cite{robertson1991graph-minorsX}]\label{def:branch-decomposition}
  A \emph{branch decomposition} of a hypergraph \(G = \mathgraph{E}{V}\) is a pair \((Y,b)\) where \(Y\) is a subcubic tree and \(b \colon \leaves(Y) \cong E\) is a bijection.
\end{defiC}

Each edge \(e\) in the tree \(Y\) determines a splitting of the hypergraph.
More precisely, it determines a two partition of the leaves of \(Y\), which, through \(b\), determines a 2-partition \(\{A_e,B_e\}\) of the edges of \(G\).
This corresponds to a splitting of the hypergraph \(G\) into two subhypergraphs \(G_{1}\) and \(G_{2}\).
Intuitively, the order of an edge \(e\) is the number of vertices that are glued together when joining \(G_{1}\) and \(G_{2}\) to get \(G\).
Given the partition \(\{A_e,B_e\}\) of the edges of \(G\), we say that a vertex \(v\) of \(G\) separates \(A_e\) and \(B_e\) whenever there are an edge in \(A_e\) and an edge in \(B_e\) that are both adjacent to \(v\).

Let \((Y,b)\) be a branch decomposition of a hypergraph \(G\).
Let \(e\) be an edge of \(Y\).
The \emph{order} of \(e\) is the number of vertices that separate \(A_e\) and \(B_e\): \(\edgeorder(e) \defn \card{\edgesetends{A_e} \intersection \edgesetends{B_e}}\).

\begin{defi}[Branch width]\label{def:branch-width}
  Given a branch decomposition \((Y,b)\) of a hypergraph \(G=\mathgraph{E}{V}\), define its width as \(\decwidth(Y,b) \defn \max_{e \in \edges(Y)} \edgeorder(e)\).

  The branch width of \(G\) is given by the min-max formula:
  \(\branchwidth(G) \defn \min_{(Y,b)} \decwidth(Y,b)\).
\end{defi}

\begin{exa}
  If we start reading the decomposition from an edge in the tree \(Y\), we can extend the labelling to internal vertices by labelling them with the glueing of the labels of their children.
  \[\branchDecExFig{}\]
  In this example, there is only one vertex separating the first two subgraphs of the decomposition.
  This means that the corresponding edge in the decomposition tree has order \(1\).
\end{exa}

\subsection{Hypergraphs with sources and inductive definition}\label{sec:rec-branch-dec}
We introduce a definition of decomposition that is intermediate between a branch decomposition and a monoidal decomposition.
It adds to branch decompositions the algebraic flavour of monoidal decompositions by using an inductive data type, that of binary trees, to encode a decomposition.

Our approach follows closely Bauderon and Courcelle's hypergraphs with sources~\cite{bauderon1987graph} and the corresponding inductive definition of tree decompositions~\cite{courcelle1992monadic}.
Courcelle's result~\cite[Theorem 2.2]{courcelle1992monadic} is technically involved as it translates between a combinatorial description of a decomposition to a syntactic one.
Our results in this and the next sections are similarly technically involved.

We recall the definition of hypergraphs with sources and introduce inductive branch decompositions of them.
Intuitively, the sources of a hypergraph are marked vertices that are allowed to be \virgolette{glued} together with the sources of another hypergraph.
Thus, the equivalence between branch decompositions and inductive branch decompositions formalises the intuition that a branch decomposition encodes a way of dividing a hypergraph into smaller subgraphs by \virgolette{cutting} along some vertices.

\begin{defiC}[\cite{bauderon1987graph}]
  A \emph{hypergraph with sources} is a pair \(\Gamma = (G,X)\) where \(G = \mathgraph{E}{V}\) is a hypergraph and \(X \subseteq V\) is a subset of its vertices, called the sources (\Cref{ex:graph-with-sources}).
  Given two hypergraphs with sources \(\Gamma = (G,X)\) and \(\Gamma' = (G',X')\), we say that \(\Gamma'\) is a subhypergraph of \(\Gamma\) whenever \(G'\) is a subhypergraph of \(G\).
\end{defiC}
Note that the sources of a subhypergraph \(\Gamma'\) of \(\Gamma\) need not to appear as sources of \(\Gamma\), nor vice versa.
In fact, if \(\Gamma\) is obtained by identifying all the sources of \(\Gamma_{1}\) with some of the sources of \(\Gamma_{2}\), the sources of \(\Gamma\) and \(\Gamma_{1}\) will be disjoint.
\begin{figure}[h!]
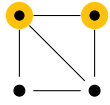

  \[\graphSourcesExFig{}\]
  \caption{Sources are marked vertices in the graph and are thought of as an interface that can be glued with that of another graph.}\label{ex:graph-with-sources}
\end{figure}

An inductive branch decomposition is a binary tree whose vertices carry subhypergraphs \(\Gamma'\) of the ambient hypergraph \(\Gamma\).  This set of all such binary trees is defined as follows
 \[\binarytrees{\Gamma} \ \Coloneqq \ \emptydec \ \mid \ (\binarytrees{\Gamma}, \Gamma', \binarytrees{\Gamma}) \]
where \(\Gamma'\) ranges over the non-empty subhypergraphs of \(\Gamma\).
An inductive branch decomposition has to satisfy additional conditions that ensure that ``glueing'' \(\Gamma_1\) and \(\Gamma_2\) together yields \(\Gamma\).

\begin{defi}\label{def:rec-branch-decomposition}
  Let \(\Gamma = (\mathgraph{E}{V},X)\) be a hypergraph with sources.
  An \emph{inductive branch decomposition} of \(\Gamma\) is \(T \in \binarytrees{\Gamma}\) where either:
  \begin{itemize}
  \item \(\Gamma\) is discrete (i.e. it has no edges) and \(T = \emptydec\);
  \item \(\Gamma\) has one edge and \(T = \nodegenerator{\emptydec}{\Gamma}{\emptydec}\). We will use the shorthand \(T = \leafgenerator{\Gamma}\) in this case;
  \item \(T = \nodegenerator{T_1}{\Gamma}{T_2}\) and \(T_i \in T_{\Gamma_{i}}\) are inductive branch decompositions of subhypergraphs \(\Gamma_i = (\mathgraph{E_i}{V_i},X_i)\) of \(\Gamma\) such that:
  \begin{itemize}
    \item The edges are partitioned in two, \(E = E_1 \disjointunion E_2\) and \(V = V_1 \union V_{2}\);
    \item The sources are those vertices shared with the original sources as well as those shared with the other subhypergraph, \(X_i = (V_1 \intersection V_2) \union (X \intersection V_i)\).
  \end{itemize}
\end{itemize}
\end{defi}

Note that \(\edgeends{E_i} \subseteq V_i\) and that not all subtrees of a decomposition \(T\) are themselves decompositions: only those \(T'\) that contain all the nodes in \(T\) that are below the root of \(T'\).
We call these \emph{full} subtrees and indicate with \(\labelling(T')\) the subhypergraph of \(\Gamma\) that \(T'\) is a decomposition of.
We sometimes write \(\Gamma_i = \labelling(T_i)\), \(V_i = \vertices(\Gamma_i)\) and \(X_i = \boundary(\Gamma_i)\).
Then,
\begin{equation}\label{eq:expr-boundary}
  \boundary(\Gamma_i) = (\vertices(\Gamma_1) \intersection \vertices(\Gamma_2)) \union (\boundary(\Gamma) \intersection \vertices(\Gamma_i)).
\end{equation}

\begin{defi}
  Let \(T = \nodegenerator{T_1}{\Gamma}{T_2}\) be an inductive branch decomposition of \(\Gamma = (G,X)\), with \(T_i\) possibly both empty.
  Define the \emph{width} of \(T\) inductively:
  \(\decwidth(\emptydec) \defn 0\),
  and \(\decwidth(T) \defn \max\{\decwidth(T_1), \decwidth(T_2),\card{\boundary(\Gamma)}\}\).
  Expanding this expression, we obtain
  \[\decwidth(T) = \max_{T' \text{ full subtree of } T} \card{\boundary(\labelling(T'))}.\]
  The \emph{inductive branch width} of \(\Gamma\) is defined by the min-max formula  \(\rbranchwidth(\Gamma) \defn \min_{T} \decwidth(T)\).
\end{defi}

We show that this definition is equivalent to the original one by exhibiting a mapping from branch decompositions to inductive branch decompositions that preserve the width and vice versa.
Showing that these mappings preserve the width is a bit involved because the order of the edges in a decomposition is defined ``globally'', while, for an inductive decomposition, the width is defined inductively.
Thus, we first need to show that we can compute the inductive width globally.

\begin{lem}\label{lemma:expr-boundaries}
  Let \(\Gamma = (G,X)\) be a hypergraph with sources and \(T\) be an inductive branch decomposition of  \(\Gamma\).
  Let \(T_{0}\) be a full subtree of \(T\) and let \(T'\ngtrless T_{0}\) denote a full subtree \(T'\) of \(T\) such that its intersection with \(T_{0}\) is empty.
  Then,
  \[\boundary(\labelling(T_{0})) = \vertices(\labelling(T_{0})) \intersection \left(X \union \Union_{T' \ngtrless T_{0}} \vertices(\labelling(T'))\right).\]
\end{lem}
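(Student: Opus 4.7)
The plan is to prove the equation by structural induction on $T$, with the claim quantified over all full subtrees $T_0$ simultaneously. The base cases ($T = \emptydec$ and $T = \leafgenerator{\Gamma}$) only admit $T_0 = T$ as a nonempty full subtree; then there are no full subtrees of $T$ disjoint from $T_0$, so the right-hand side reduces to $\vertices(\labelling(T)) \intersection X = X = \boundary(\labelling(T))$, using $X \subseteq \vertices(\Gamma)$.

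For the inductive step, take $T = \nodegenerator{T_1}{\Gamma}{T_2}$ with $\Gamma_i = \labelling(T_i) = (\mathgraph{E_i}{V_i}, X_i)$, and assume the claim holds for $T_1$ and $T_2$ viewed as decompositions of $\Gamma_1$ and $\Gamma_2$. The case $T_0 = T$ is handled as in the base case. Otherwise, by symmetry, I would assume that $T_0$ is a full subtree of $T_1$. Applying the induction hypothesis inside $T_1$ gives
\[\boundary(\labelling(T_0)) = \vertices(\labelling(T_0)) \intersection \Bigl(\boundary(\Gamma_1) \union \Union_{T' \ngtrless T_0 \text{ in } T_1} \vertices(\labelling(T'))\Bigr).\]
I would then unfold $\boundary(\Gamma_1) = (V_1 \intersection V_2) \union (X \intersection V_1)$ using equation~\eqref{eq:expr-boundary}, and absorb the redundant factor $V_1$ in $X \intersection V_1$ via $\vertices(\labelling(T_0)) \subseteq V_1$.

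The key combinatorial observation is that the full subtrees of $T$ disjoint from $T_0$ split into two families: those living inside $T_1$ and disjoint from $T_0$ there, and all full subtrees of $T_2$. Since $T_2$ itself lies in the latter family and $\vertices(\labelling(T_2)) = V_2$, the union of $\vertices(\labelling(T'))$ over full subtrees $T'$ of $T_2$ equals $V_2$. This lets the $V_1 \intersection V_2$ term that appeared after unfolding $\boundary(\Gamma_1)$ be absorbed into the union over full subtrees of $T_2$, yielding the desired global expression over all $T' \ngtrless T_0$ in $T$.

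The main technical hurdle is the bookkeeping: one must carefully verify that enlarging the ambient tree from $T_1$ to $T$ enlarges the collection of $T_0$-disjoint full subtrees by precisely the full subtrees of $T_2$, and that their labellings jointly cover $V_2$. Once this is in place, the rest is routine set-theoretic manipulation, with the clause from \Cref{def:rec-branch-decomposition} relating $X_i$ to $V_1 \intersection V_2$ and $X$ doing all the real work.
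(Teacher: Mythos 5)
Your proposal is correct and follows essentially the same route as the paper's own proof: induction on $T$, applying the induction hypothesis to the subtree ($T_1$, say) containing $T_0$, unfolding $\boundary(\Gamma_1)$ via Equation~\eqref{eq:expr-boundary}, absorbing the $V_1$ factor using $\vertices(\labelling(T_0)) \subseteq V_1$, and then replacing $V_1 \intersection V_2$ by the union over the full subtrees of $T_2$, which are exactly the new $T_0$-disjoint subtrees gained by passing from $T_1$ to $T$. The only cosmetic difference is that you treat the leaf case as a separate base case, whereas the paper folds it into the $T_0 = T$ branch of the inductive step.
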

\begin{proof}
  Proceed by induction on the decomposition tree \(T\).
  If it is empty, \(T = \emptydec\), then its subtree is also empty, \(T_0 = \emptydec\), and we are done.

  If \(T = \nodegenerator{T_1}{\Gamma}{T_2}\), then either \(T_{0}\) is a full subtree of \(T_{1}\), or it is a full subtree of \(T_{2}\), or it coincides with \(T\).
  If \(T_0\) coincides with \(T\), then their boundaries coincide and the statement is satisfied because \(\boundary(\labelling(T_0)) = X = V \intersection X\).
  Now suppose that \(T_0\) is a full subtree of \(T_{1}\).
  Then, by applying the induction hypothesis, \Cref{eq:expr-boundary}, and using the fact that \(\labelling(T_0) \subgrapheq \labelling(T_1)\), we compute the sources of \(T_{0}\):
  \begin{align*}
    & \boundary(\labelling(T_0)) \\
    & = \vertices(\labelling(T_0)) \intersection \left(\boundary(\labelling(T_1)) \union \Union_{T' \subtreeq T_1, T' \ngtrless T_0} \vertices(\labelling(T'))\right)\\
    & = \vertices(\labelling(T_0)) \intersection \left(\left(\vertices(\labelling(T_1)) \intersection (\vertices(\labelling(T_2)) \union X)\right) \union \Union_{T' \subtreeq T_1, T' \ngtrless T_0} \vertices(\labelling(T'))\right)\\
    & = \vertices(\labelling(T_0)) \intersection \left(\vertices(\labelling(T_2)) \union X \union \Union_{T' \subtreeq T_1, T' \ngtrless T_0} \vertices(\labelling(T'))\right)\\
    & = \vertices(\labelling(T_0)) \intersection \left(X \union \Union_{T' \subtreeq T, T' \ngtrless T_0} \vertices(\labelling(T'))\right)
  \end{align*}
  A similar computation can be done if \(T_0\) is a full subtree of \(T_{2}\).
\end{proof}

\begin{lem}\label{lemma:rec-bwd-upper}
  Let \(\Gamma = (G,X)\) be a hypergraph with sources and \(G = \mathgraph{E}{V}\) be its underlying hypergraph.
  Let \(T\) be an inductive branch decomposition of \(\Gamma\).
  Then, there is a branch decomposition \(\fromRecursiveDec(T)\) of \(G\) such that \(\decwidth(\fromRecursiveDec(T)) \leq \decwidth(T)\).
\end{lem}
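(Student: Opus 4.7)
The plan is to define $\fromRecursiveDec(T)$ by recursion on $T$, simultaneously producing a subcubic tree $Y$, the bijection $b$ between its leaves and the edges of $G$, and a distinguished vertex $r(T)$ of $Y$ that serves as the attachment point to the enclosing decomposition. For $T=\emptydec$ take the empty tree; for $T=\leafgenerator{\Gamma'}$ whose unique edge is $e$, take the one-vertex tree $\{\ell_e\}$ with $b(\ell_e)=e$ and set $r(T)=\ell_e$; and for $T=\nodegenerator{T_1}{\Gamma'}{T_2}$ with both $T_i$ non-empty, glue the recursively built trees $(Y_1,r_1),(Y_2,r_2)$ by introducing a fresh vertex $r$ connected to $r_1$ and $r_2$, setting $r(T)=r$. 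If exactly one $T_i$ is empty, keep the non-empty side unchanged.

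First I would verify by induction the invariant that $r(T)$ has degree at most $2$ in $Y$: a fresh $r$ is created with degree exactly $2$, while each $r_i$ gains exactly one neighbour and so, by the inductive hypothesis, stays at degree at most $3$. This shows that $Y$ is subcubic. Moreover, every non-gadget vertex acquires degree at least $2$ as soon as its parent merge step happens, so the leaves of $Y$ are precisely the gadget vertices $\ell_e$, yielding the required bijection $b$ with $E$.

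For the width bound, the key observation is that every edge of $Y$ is introduced in the merge step for a unique non-root, non-empty full subtree $T'$ of $T$: namely, the edge joining the freshly created vertex to $r(T')$ when $T'$ appears as a child at that step. Removing this edge partitions the leaves of $Y$ into those coming from $T'$---which, by the bijection, correspond to $E'\defn\edges(\labelling(T'))$---and the rest; the order of the edge is therefore $\lvert\edgesetends{E'}\cap\edgesetends{E\setminus E'}\rvert$. Any separating vertex $v$ lies in $\vertices(\labelling(T'))$ (being incident to an edge of $E'$) and in $\vertices(\labelling(T''))$ for some full subtree $T''\ngtrless T'$ (being incident to an edge of $E\setminus E'$, which belongs to some leaf subtree disjoint from $T'$), hence $v\in\boundary(\labelling(T'))$ by \Cref{lemma:expr-boundaries}. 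Thus the order of every edge of $Y$ is at most $\lvert\boundary(\labelling(T'))\rvert\leq\decwidth(T)$, giving $\decwidth(\fromRecursiveDec(T))\leq\decwidth(T)$.

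The main delicate point is the bookkeeping needed to maintain the structural invariants through the corner cases---empty subtrees, single-edge hypergraphs, and the root-attachment tracking---together with the inclusion argument that every separating vertex induced by an edge of $Y$ lies in the boundary of the corresponding subhypergraph. The latter is exactly what \Cref{lemma:expr-boundaries} was designed to supply, so once the correspondence between edges of $Y$ and non-root, non-empty full subtrees of $T$ has been made explicit, the width bound reduces to a one-line calculation.
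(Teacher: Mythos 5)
Your proof is correct and takes essentially the same approach as the paper's: both use the tree underlying $T$ as the subcubic tree $Y$ (you build it by explicit recursion with a root-degree invariant, the paper simply notes a binary tree is subcubic), identify leaves with edges of $G$, and bound the order of each edge of $Y$ by $\card{\boundary(\labelling(T'))}$ for the corresponding full subtree $T'$ via \Cref{lemma:expr-boundaries}. The only difference is bookkeeping, not substance.
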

\begin{proof}
  A binary tree is, in particular, a subcubic tree.
  Then, we can define \(Y\) to be the unlabelled tree underlying \(T\).
  The label of a leaf \(l\) of \(T\) is a subhypergraph of \(\Gamma\) with one edge \(e_l\).
  Then, there is a bijection \(b \colon \leaves(T) \to \edges(G)\) such that \(b(l) \defn e_l\).
  Then, \((Y,b)\) is a branch decomposition of \(G\) and we can define \(\fromRecursiveDec(T) \defn (Y,b)\).

  By construction, \(e \in \edges(Y)\) if and only if \(e \in \edges(T)\).
  Let \(\{v,w\} = \edgeends{e}\) with \(v\) parent of \(w\) in \(T\) and let \(T_w\) the full subtree of \(T\) with root \(w\).
  Let \(\{E_v,E_w\}\) be the (non-trivial) partition of \(E\) induced by \(e\).
  Then, for the edges sets, \(E_w = \edges(\labelling(T_{w}))\) and \(E_v = \Union_{T' \ngtrless T_{w}} \edges(\labelling(T'))\), and, for the vertices sets, \(\edgesetends{E_w} \subseteq \vertices(\labelling(T_{w}))\) and \(\edgesetends{E_v} \subseteq \Union_{T' \ngtrless T_{w}} \vertices(\labelling(T'))\).
  Using these inclusions and applying \Cref{lemma:expr-boundaries},
  \begin{align*}
    & \edgeorder(e) && \decwidth(Y,b)\\
    & \defn \card{\edgesetends{E_w} \intersection \edgesetends{E_v}} && \defn \max_{e \in \edges(Y)} \edgeorder(e)\\
    & \leq \card{\vertices(\labelling(T_{w})) \intersection \Union_{T' \ngtrless T_{w}} \vertices(\labelling(T'))} && \leq \max_{T' < T} \card{\boundary(\labelling(T'))}\\
    & \leq \card{\vertices(\labelling(T_{w})) \intersection (X \union \Union_{T' \ngtrless T_{w}} \vertices(\labelling(T')))} && \leq \max_{T' \leq T} \card{\boundary(\labelling(T'))}\\
    & = \card{\boundary(\labelling(T_{w}))} && = \decwidth(T)
    \qedhere
  \end{align*}
\end{proof}

\begin{lem}\label{lemma:rec-bwd-lower}
  Let \(\Gamma = (G,X)\) be a hypergraph with sources and \(G = \mathgraph{E}{V}\) be its underlying hypergraph.
  Let \((Y,b)\) be a branch decomposition of \(G\).
  Then, there is a branch decomposition \(\toRecursiveDec(Y,b)\) of \(\Gamma\) such that \(\decwidth(\toRecursiveDec(Y,b)) \leq \decwidth(Y,b) + \card{X}\).
\end{lem}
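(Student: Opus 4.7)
The plan is to build \(\toRecursiveDec(Y,b)\) by turning the subcubic tree \(Y\) into a rooted binary tree. A branch decomposition \((Y,b)\) already encodes a recursive splitting of \(\edges(G)\)---one partition per edge of \(Y\)---so what is really needed is a choice of root that casts this splitting as a binary tree shape compatible with \Cref{def:rec-branch-decomposition}.

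Concretely, I would first contract any degree-\(2\) internal node of \(Y\), an operation that leaves \(\decwidth(Y,b)\) unchanged because the partitions induced by the two edges around such a node coincide. The cases \(\card{\edges(G)}\leq 1\) are handled directly by returning \(\emptydec\) or a single leaf. Otherwise I would pick any edge \(e_0\in\edges(Y)\), delete it to obtain subtrees \(Y_1,Y_2\) rooted at the two endpoints of \(e_0\), and recurse inside each. After the contraction step every non-root node of \(Y_i\) originally had degree \(3\), so after losing its parent edge it has exactly two children, making \(Y_i\) binary. Label each node \(v\) of the resulting tree with the subhypergraph of \(\Gamma\) whose edges are \(b(\leaves(Y_v))\), whose vertex set is the union of the endpoints of those edges, and whose sources are those dictated by the inductive rule of \Cref{def:rec-branch-decomposition}; a short induction verifies the edge-partition and vertex-union clauses, and forces the root node to carry the full hypergraph \(\Gamma\) with its original sources \(X\).

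For the width bound, I would apply \Cref{lemma:expr-boundaries} to any full subtree \(T'\) of \(\toRecursiveDec(Y,b)\), breaking the resulting intersection into two pieces:
\[\card{\boundary(\labelling(T'))} \leq \card{X \intersection \vertices(\labelling(T'))} + \card{\vertices(\labelling(T')) \intersection \Union_{T''\ngtrless T'} \vertices(\labelling(T''))}.\]
The first summand is at most \(\card{X}\). For the second, either \(T'\) is the root (in which case the union is empty) or, by construction, \(T'\) corresponds to the subtree on one side of a unique edge \(e'\in\edges(Y)\); in that case the union on the right equals \(\edgesetends{B_{e'}}\), so the intersection equals \(\edgesetends{A_{e'}}\intersection\edgesetends{B_{e'}}\), whose cardinality is \(\edgeorder(e')\leq\decwidth(Y,b)\). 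Adding the two pieces gives \(\card{\boundary(\labelling(T'))}\leq\card{X}+\decwidth(Y,b)\); taking the maximum over all full subtrees yields the announced bound.

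The main technical obstacle is the correspondence between full subtrees of \(\toRecursiveDec(Y,b)\) and the subtrees of \(Y\) obtained by cutting a single edge (plus \(Y\) itself): this is what licenses the identification of \(\Union_{T''\ngtrless T'}\vertices(\labelling(T''))\) with \(\edgesetends{B_{e'}}\) and hence the appeal to \(\edgeorder(e')\leq\decwidth(Y,b)\). Establishing it is a routine induction on the rooted structure, but it is the point where the construction choices and the combinatorics of the original subcubic \(Y\) must be reconciled.
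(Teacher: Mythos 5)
Your proposal is correct and follows essentially the same route as the paper: recurse on the tree by splitting along an edge, assign each node the subhypergraph induced by the edges under it with sources as in \Cref{def:rec-branch-decomposition}, and bound each boundary via \Cref{lemma:expr-boundaries} by separating the \(\card{X}\) contribution from the \(\edgeorder(e')\) contribution. The only detail to patch is that taking the vertex set of a label to be exactly the endpoints of its edges can miss isolated vertices of \(G\), violating the \(V = V_1 \union V_2\) clause; the paper fixes this by folding \(V \setminus V_1\) into \(V_2\), which does not affect the width bound.
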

\begin{proof}
  Proceed by induction on \(\card{\edges(Y)}\).
  If \(Y\) has no edges, then either \(G\) has no edges and \((Y,b) = \emptydec\) or \(G\) has only one edge \(e_l\) and \((Y,b) = \leafgenerator{e_l}\).
  In either case, define \(\toRecursiveDec(Y,b) \defn \leafgenerator{\Gamma}\) and \(\decwidth(\toRecursiveDec(Y,b)) \defn \card{X} \leq \decwidth(Y,b) + \card{X}\).

  If \(Y\) has at least one edge \(e\), then \(Y = Y_1 \overset{e}{\text{---}} Y_2\) with \(Y_i\) a subcubic tree.
  Let \(E_{i} = b(\leaves(Y_{i}))\) be the sets of edges of \(G\) indicated by the leaves of \(Y_{i}\).
  Then, \(E_{1} \disjointunion E_{2} = E\).
  By induction hypothesis, there are inductive branch decompositions \(T_{i} \defn \toRecursiveDec(Y_{i},b_{i})\) of \(\Gamma_{i} = (G_{i}, X_{i})\), where \(V_{1} \defn \edgesetends{E_{1}}\), \(V_{2} \defn \edgesetends{E_{2}} \union (V \setminus V_{1})\), \(X_{i} \defn (V_{1} \intersection V_{2}) \union (V_{i} \intersection X)\) and \(G_{i} \defn \mathgraph{E_{i}}{V_{i}}\).
  Then, the tree \(\toRecursiveDec(Y,b) \defn \nodegenerator{T_1}{\Gamma}{T_2}\) is an inductive branch decomposition of \(\Gamma\) and, by applying \Cref{lemma:expr-boundaries},
  \begin{align*}
    & \decwidth(\toRecursiveDec(Y,b)) \\
    & \defn \max \{\decwidth(T_{1}), \card{X}, \decwidth(T_{2})\}\\
    & = \max_{T' \leq T} \card{\boundary(\labelling(T'))}\\
    & \leq \max_{T' \leq T} \card{\vertices(\labelling(T')) \intersection \edgesetends{E \setminus \edges(\labelling(T'))}} + \card{X} \\
    & = \max_{e \in \edges(Y)} \edgeorder(e) + \card{X}\\
    & \codefn \decwidth(Y,b) + \card{X}
    \qedhere
  \end{align*}
\end{proof}

Combining \Cref{lemma:rec-bwd-upper} and \Cref{lemma:rec-bwd-lower} we obtain:

\begin{prop}\label{prop:rec-branch-width-equivalent}
  Inductive branch width is equivalent to branch width.
\end{prop}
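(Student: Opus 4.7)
The plan is to combine the two preceding lemmas by taking minima over all decompositions. Lemmas \ref{lemma:rec-bwd-upper} and \ref{lemma:rec-bwd-lower} already provide, in opposite directions, constructive translations between branch decompositions and inductive branch decompositions, together with explicit bounds on how the width changes under each translation. No further combinatorial argument is needed; the proposition follows by a routine min-of-min manipulation.

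First I would apply Lemma \ref{lemma:rec-bwd-upper}: for every inductive branch decomposition $T$ of $\Gamma = (G, X)$, the branch decomposition $\fromRecursiveDec(T)$ of $G$ satisfies $\decwidth(\fromRecursiveDec(T)) \leq \decwidth(T)$. Taking the minimum over all $T$ on both sides and using the min-max definition of $\branchwidth$ yields $\branchwidth(G) \leq \rbranchwidth(\Gamma)$. Symmetrically, applying Lemma \ref{lemma:rec-bwd-lower} to every branch decomposition $(Y,b)$ of $G$ gives $\decwidth(\toRecursiveDec(Y,b)) \leq \decwidth(Y,b) + \card{X}$, and taking minima produces $\rbranchwidth(\Gamma) \leq \branchwidth(G) + \card{X}$. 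Chaining the two inequalities yields
\[\branchwidth(G) \ \leq\ \rbranchwidth(\Gamma) \ \leq\ \branchwidth(G) + \card{X},\]
which is the desired equivalence in the same sense as Theorem \ref{th:twd-bwd}: the two widths agree up to an additive constant determined by the interface $X$.

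The only step requiring genuine work was localising the globally defined $\edgeorder$ of an edge in a branch decomposition to the inductively computed $\card{\boundary(\labelling(T'))}$ of full subtrees, and this was already handled by Lemma \ref{lemma:expr-boundaries}, which is invoked inside both translation lemmas. Once that bridge is in place, the current proposition is essentially bookkeeping, so I do not anticipate any further obstacle.
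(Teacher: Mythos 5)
Your proposal is correct and matches the paper exactly: the paper offers no separate proof for this proposition, stating only that it follows by combining Lemmas \ref{lemma:rec-bwd-upper} and \ref{lemma:rec-bwd-lower}, which is precisely the min-over-decompositions bookkeeping you carry out. Your explicit chain $\branchwidth(G) \leq \rbranchwidth(\Gamma) \leq \branchwidth(G) + \card{X}$ is the intended content of the claimed equivalence.
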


\subsection{Cospans of hypergraphs}\label{sec:cospans-hypergraphs}
We work with the category \(\UHGraph\) of undirected hypergraphs and their homomorphisms (\Cref{def:hypergraph}).
The monoidal category \(\cospanUHGraph\) of cospans is a standard choice for an algebra of \virgolette{open} hypergraphs. Hypergraphs are composed by glueing vertices~\cite{rosebrugh2005cospangraph,gadducci1997inductive,fong2015cospans}.
We do not need the full expressivity of \(\cospanUHGraph\) and restrict to \(\cospanUHGraphO\), where the objects are sets, seen as discrete hypergraphs.

\begin{defi}
  A \emph{cospan} in a category \(\cat{C}\) is a pair of morphisms in \(\cat{C}\) that share the same codomain, called the \emph{head}, \(f \colon X \to E\) and \(g \colon Y \to E\).
  When \(\cat{C}\) has finite colimits, cospans form a symmetric monoidal category \(\catCospan{\cat{C}}\) whose objects are the objects of \(\cat{C}\) and morphisms are cospans in \(\cat{C}\).
  More precisely, a morphism \(X \to Y\) in \(\catCospan{\cat{C}}\) is an equivalence class of cospans \(\cospan{X}{f}{E}{g}{Y}\), up to isomorphism of the head of the cospan.
  The composition of \(\cospan{X}{f}{E}{g}{Y}\) and \(\cospan{Y}{h}{F}{l}{Z}\) is given by the pushout of \(g\) and \(h\).
  Intuitively, the pushout of \(g\) and \(h\) ``glues'' \(E\) and \(F\) along the images of \(g\) and \(h\) (see \Cref{ex:cospan-composition}).
  The monoidal product is given by component-wise coproducts.
\end{defi}

We can construct the category of cospans of hypergraphs \(\cospanUHGraph\) because the category of hypergraphs \(\UHGraph\) has all finite colimits.

\begin{prop}
  The category \(\UHGraph\) has all finite colimits and they are computed pointwise.
\end{prop}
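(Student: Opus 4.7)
The plan is to construct colimits in $\UHGraph$ directly, by first taking the colimits of the vertex and edge components separately in $\cat{Set}$ and then equipping the resulting pair with an adjacency function obtained from a universal property. Throughout I would treat $\parti$ as a covariant endofunctor on $\cat{Set}$ whose action on a function $\alpha \colon V \to W$ is direct image $S \mapsto \alpha(S)$, so that the homomorphism condition from \Cref{def:hypergraph} reads $\edgeendsfun_H \circ \alpha_E = \parti(\alpha_V) \circ \edgeendsfun_G$.

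Fix a finite diagram $D \colon J \to \UHGraph$, with $D(j) = \mathgraph{E_j}{V_j}$, adjacency $\edgeendsfun_j$, and components $(D(f)_V, D(f)_E)$ for every $f \colon j \to k$ in $J$. First, form the colimits $V^{\ast}$ and $E^{\ast}$ in $\cat{Set}$ of the vertex and edge components, with coprojections $\iota_j^V \colon V_j \to V^{\ast}$ and $\iota_j^E \colon E_j \to E^{\ast}$. Next, I would define an adjacency $\edgeendsfun^{\ast} \colon E^{\ast} \to \parti(V^{\ast})$ by checking that the family $\parti(\iota_j^V) \circ \edgeendsfun_j$ forms a cocone over $E \circ D$: for every $f \colon j \to k$,
\[
\parti(\iota_k^V) \circ \edgeendsfun_k \circ D(f)_E = \parti(\iota_k^V) \circ \parti(D(f)_V) \circ \edgeendsfun_j = \parti(\iota_k^V \circ D(f)_V) \circ \edgeendsfun_j = \parti(\iota_j^V) \circ \edgeendsfun_j,
\]
using in turn the homomorphism condition for $D(f)$, functoriality of $\parti$, and the cocone identity for $\iota^V$. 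The universal property of $E^{\ast}$ then yields a unique $\edgeendsfun^{\ast}$ with $\edgeendsfun^{\ast} \circ \iota_j^E = \parti(\iota_j^V) \circ \edgeendsfun_j$, which says precisely that each $(\iota_j^V, \iota_j^E)$ is a hypergraph homomorphism $D(j) \to G^{\ast}$ for $G^{\ast} \defn (\mathgraph{E^{\ast}}{V^{\ast}}, \edgeendsfun^{\ast})$.

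To verify the universal property of $G^{\ast}$, given any cocone $\alpha^j = (\alpha_V^j, \alpha_E^j) \colon D(j) \to H$ in $\UHGraph$, the vertex and edge components are cocones in $\cat{Set}$, inducing unique maps $\alpha_V \colon V^{\ast} \to \vertices(H)$ and $\alpha_E \colon E^{\ast} \to \edges(H)$. To check that $(\alpha_V, \alpha_E)$ is a hypergraph homomorphism, I would precompose $\edgeendsfun_H \circ \alpha_E$ and $\parti(\alpha_V) \circ \edgeendsfun^{\ast}$ with each $\iota_j^E$: both reduce to $\edgeendsfun_H \circ \alpha_E^j = \parti(\alpha_V^j) \circ \edgeendsfun_j$ by the construction of $\edgeendsfun^{\ast}$ and the homomorphism condition for $\alpha^j$. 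Since the $\iota_j^E$ are jointly epimorphic as coprojections of a colimit in $\cat{Set}$, the two maps agree, and uniqueness of the induced hypergraph homomorphism follows from uniqueness in $\cat{Set}$. By construction, the forgetful functors $V, E \colon \UHGraph \to \cat{Set}$ preserve this colimit, which is exactly pointwiseness.

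The main obstacle is the cocone check used to define $\edgeendsfun^{\ast}$: this is where the hypergraph morphism condition genuinely earns its keep, and it is the only spot where the warning after \Cref{def:hypergraph} that $\UHGraph$ is \emph{not} the functor category $[\{\bullet \to \bullet\}, \kleisli{\parti}]$ really matters, since one must use that direct image is strictly (not only laxly) functorial on $\cat{Set}$. Everything else reduces to routine diagram chasing, and as a bonus the same argument works for any small diagram, so $\UHGraph$ is in fact cocomplete.
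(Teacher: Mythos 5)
Your proof is correct and follows essentially the same route as the paper's: compute the vertex and edge colimits separately in \(\Set\), induce the adjacency function on the edge colimit from the cocone \(\parti(\iota_j^V) \circ \edgeendsfun_j\) via its universal property, and then verify the universal property of the resulting hypergraph using the componentwise universal properties (your appeal to the coprojections being jointly epimorphic is just a rephrasing of the paper's uniqueness-of-the-map-out-of-the-colimit argument). The only cosmetic differences are your explicit remark that strict functoriality of direct image is what is needed, and the observation that the argument gives all small colimits, neither of which changes the substance.
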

\begin{proof}
  Let \(\fun{D} \colon \cat{J} \to \UHGraph\) be a diagram in \(\UHGraph\).
  Then, every object \(i\) in \(\cat{J}\) determines a hypergraph \(G_i \defn \fun{D}(i) = \mathgraph{E_i,\edgeendsfun_i}{V_i}\) and every \(f \colon i \to j\) in \(\cat{J}\), gives a hypergraph homomorphism \(\fun{D}(f)=(f_V,f_E)\).
  Let the functors \(\forgetfulFun_{E} \colon \UHGraph \to \Set\) and \(\forgetfulFun_{V} \colon \UHGraph \to \Set\) associate the edges, resp. vertices, component to hypergraphs and hypergraph homomorphisms: for a hypergraph \(G = \mathgraph{E}{V}\), \(\forgetfulFun_E(G) \defn E\) and \(\forgetfulFun_V(G) \defn V\); and, for a morphism \(f = (f_V,f_E)\), \(\forgetfulFun_E(f) \defn f_E\) and \(\forgetfulFun_V(f) \defn f_V\).
  \[\diagramforgetfulsDiagram{}\]
  The category \(\Set\) has all colimits, thus there are \(E_0 \defn \colim(\fun{D} \dcomp \forgetfulFun_{E})\) and \(V_0 \defn \colim(\fun{D} \dcomp \forgetfulFun_{V})\).
  Let \(c_{i} \colon V_i \to V_0\) and \(d_{i} \colon E_i \to E_0\) be the inclusions given by the colimits.
  Then, for any \(i,j \in \obj{\cat{J}}\) the following diagrams commute:
  \[\vertexcolimitcomponentsDiagram{} \qquad \edgecolimitcomponentsDiagram{}\]
  By definition of hypergraph morphism, \(f_{E} \dcomp \edgeendsfun_{j} = \edgeendsfun_{i} \dcomp \parti(f_{V})\), and, by functoriality of \(\parti\), \(\parti(f_{V}) \dcomp \parti(c_{j}) = \parti(c_{i})\).
  This shows that \(\parti(V_0)\) is a cocone over \(\fun{D} \dcomp \forgetfulFun_{E}\) with morphisms given by \(\edgeendsfun_{i} \dcomp \parti(c_{i})\).
  Then, there is a unique morphism \(\edgeendsfun \colon E_0 \to \parti(V_0)\) that commutes with the cocone morphisms: \(d_i \dcomp \edgeendsfun = \edgeendsfun_i \dcomp \parti(c_i)\).
  \[\edgeendscolimitmorphismDiagram{}\]
  This shows that the pairs \((c_{i},d_{i})\) are hypergraph morphisms and, with the hypergraph defined by \(G_0 \defn \mathgraph{E_0, \edgeendsfun}{V_0}\), form a cocone over \(\fun{D}\) in \(\UHGraph\).
  Let \(H = \mathgraph{E_H, \edgeendsfun_H}{V_H}\) be another cocone over \(\fun{D}\) with morphisms \((a_i,b_i) \colon G_i \to H\).
  \[\graphcolimitcomponentsDiagram{}\]
  We show that \(G_0\) is initial by constructing a morphism \((h_V,h_E) \colon G_0 \to H\) and showing that it is the unique one commuting with the inclusions.

  By applying the functors \(\forgetfulFun_E\) and \(\forgetfulFun_V\) to the diagram above, we obtain the following diagrams in \(\Set\), where \(h_V \colon V_0 \to V_H\) and \(h_E \colon E_0 \to E_H\) are the unique morphism from the colimit cone.
  \[\vertexcoconecomponentsDiagram{} \qquad \edgecoconecomponentsDiagram{}\]

  We show that \((h_V, h_E)\) is a hypergraph morphism.
  The object \(\parti(V_H)\) is a cocone over \(\fun{D} \dcomp \forgetfulFun_E\) in (at least) two ways: with morphisms \(d_i \dcomp \edgeendsfun \dcomp \parti(h_V)\) and morphisms \(b_i \dcomp \edgeendsfun_H\).
  By initiality of \(E_0\), there is a unique morphism \(E_0 \to \parti(V_H)\) and it must coincide with \(h_E \dcomp \edgeendsfun_H\) and \(\edgeendsfun \dcomp \parti(h_V)\).
  \[\universalarrowgraphmorphisDiagram{}\]
  This proves that \((h_V,h_E)\) is a hypergraph morphism.
  It is, moreover, unique because any other morphism with this property would have the same components.
  In fact, let \((h'_V,h'_E) \colon G_0 \to H\) be another hypergraph morphism that commutes with the cocones, i.e. \((c_i,d_i) \dcomp (h'_V,h'_E) = (a_i,b_i)\).
  Then, its components must commute with the respective cocones in \(\Set\), by functoriality of \(\forgetfulFun_E\) and \(\forgetfulFun_V\): \(c_i \dcomp h'_V = a_i\) and \(d_i \dcomp h'_E = b_i\).
  By construction, \(V_0\) and \(E_0\) are the colimits of \(\fun{D} \dcomp \forgetfulFun_V\) and \(\fun{D} \dcomp \forgetfulFun_E\), so there are unique morphisms to any other cocone over the same diagrams.
  This means that \(h'_V = h_V\) and \(h'_E = h_E\), which shows the uniqueness of \((h_V,h_E)\).
\end{proof}

\begin{defi}
  The category \(\cospanUHGraphO\) is the full subcategory of\linebreak[4]\(\cospanUHGraph\) on discrete hypergraphs.
  Objects are sets and a morphism \(g \colon X \to Y\) is given by a hypergraph \(G = \mathgraph{E}{V}\) and two functions, \(\partial_{X} \colon X \to V\) and \(\partial_{Y} \colon Y \to V\).
\end{defi}

Composition in \(\cospanUHGraphO\) is given by identification of the common sources: if two vertices are pointed by a common source, then they are identified.

\begin{exa}\label{ex:cospan-composition}
  The composition of two morphisms with a single edge along a common vertex gives a path of length two, obtained by identifying the vertex \(v\) of the first morphism with the vertex \(u\) of the second.
  \[\cospancompositionexampleFig{}\]
\end{exa}

\subsection{String diagrams for cospans of hypergraphs}\label{sec:frobunnius-graphs}
We introduce a syntax for the monoidal category \(\cospanUHGraphO\), which we will use for proving some of the results in this section.
We will show that the syntax for \(\cospanUHGraphO\) is given by the syntax of \(\cospanSet\) together with an extra \virgolette{hyperedge} generator \(\oneedge_{n} \colon n \to 0\) for every \(n \in \naturals\).
This result is inspired by the similar one for cospans of directed graphs~\cite{rosebrugh2005cospangraph}.

It is well-known that the category \(\cospanSet\) of finite sets and cospans of functions between them has a convenient syntax given by the walking special Frobenius monoid~\cite{lack04composing-props}.

\begin{propC}[\cite{lack04composing-props}]\label{prop:syntax-cospan-set}
  The skeleton of the monoidal category \(\cospanSet\) is isomorphic to the prop \(\sFrob\), whose generators and axioms are in \Cref{fig:frobunnius}.
\end{propC}

\begin{figure}
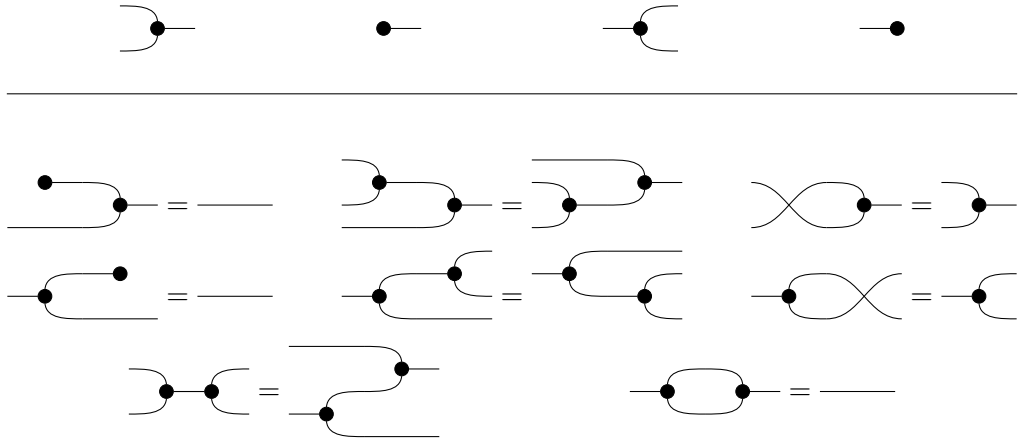

  \specialFrobunniusFig{}
  \caption{Generators and axioms of a special Frobenius monoid.}\label{fig:frobunnius}
\end{figure}

In order to obtain cospans of hypergraphs from cospans of sets, we need to add generators that behave like hyperedges: they have \(n\) inputs and these inputs can be permuted without any effect.

\begin{defi}
  Define \(\UHedge\) to be the prop generated by a ``hyperedge'' generator \(\oneedge_n \colon n \to 0\) for every \(n \in \naturals\) such that permuting its inputs does not have any effect:
  \[\forall n \in \naturals \quad \hyperedgenFig \quad \text{such that} \quad \forall \text{ permutation } \sigma \colon n \to n  \quad\uhedgepermutationFig = \hyperedgenFig\]
\end{defi}

The syntax for cospans of graphs is defined as a coproduct of props.

\begin{defi}
  Define the prop \(\sFrobUHGraph\) as a coproduct: \(\sFrobUHGraph \defn \sFrob + \UHedge\).
\end{defi}

We will show that every morphism \(g \colon n \to m\) in \(\sFrobUHGraph\) corresponds to a morphism in \(\cospanUHGraphO\).

\begin{exa}
  The string diagram below corresponds to a hypergraph with two left sources, one right source and two hyperedges.
  The number of endpoints of each hyperedge is given by the arity of the corresponding generator in the string diagram.
  Two hyperedges are adjacent to the same vertex when they are connected by the Frobenius structure in the string diagram, and a hyperedge is adjacent to a source when it is connected to an input or output in the string diagram.
  \[\cospangraphisoexampleFig{}\]
\end{exa}

\begin{prop}
  There is a symmetric monoidal functor \(\fun{S} \colon \sFrobUHGraph \to \cospanUHGraphO\).
\end{prop}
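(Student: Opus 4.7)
The plan is to exploit the coproduct presentation $\sFrobUHGraph \defn \sFrob + \UHedge$: a symmetric monoidal functor out of a coproduct of props is determined uniquely by its restrictions to each summand, so I would construct two functors into $\cospanUHGraphO$ independently and assemble them via the universal property.

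For the restriction to $\sFrob$, by \Cref{prop:syntax-cospan-set} it suffices to give a symmetric monoidal functor $\cospanSet \to \cospanUHGraphO$. I would obtain it by applying the discrete embedding $D \colon \Set \to \UHGraph$ that sends a set $V$ to the edgeless hypergraph on $V$ and a function to the pair of itself with the empty map on edges. Because $D$ is left adjoint to the vertex-forgetful functor $\forgetfulFun_V$, it preserves colimits, and in particular pushouts, so it extends to a symmetric monoidal functor $\catCospan{D} \colon \cospanSet \to \cospanUHGraph$; its image manifestly lands in the full subcategory $\cospanUHGraphO$ of discrete objects.

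For the restriction to $\UHedge$, I would send each generator $\oneedge_n$ to the cospan
\[ n \xrightarrow{\partial_n} H_n \leftarrow 0, \]
where $H_n$ is the hypergraph with vertices $\{v_1,\dots,v_n\}$, a single edge $e$ with $\edgeendsfun(e) = \{v_1,\dots,v_n\}$, and $\partial_n(i) = v_i$. The only relation in $\UHedge$ is the permutation axiom, so I must verify that for every permutation $\sigma \colon n \to n$ the composite $\sigma \dcomp \fun{S}(\oneedge_n)$ equals $\fun{S}(\oneedge_n)$ in $\cospanUHGraphO$. This reduces to exhibiting a hypergraph automorphism $\phi$ of $H_n$ satisfying $\phi \circ \partial_n = \partial_n \circ \sigma$: I take $\phi$ to fix $e$ and to permute vertices by $v_i \mapsto v_{\sigma(i)}$. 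The adjacency condition is automatic because both $\edgeendsfun(\phi(e))$ and $\parti(\phi)(\edgeendsfun(e))$ equal the same set $\{v_1,\dots,v_n\}$, which is precisely where the set-valued nature of $\edgeendsfun$ is used. I expect this last verification to be the only delicate point of the proof; once it is established, the coproduct's universal property packages the two pieces into the required symmetric monoidal functor $\fun{S}$.
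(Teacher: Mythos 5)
Your proposal is correct and follows essentially the same route as the paper: both use the coproduct presentation \(\sFrobUHGraph = \sFrob + \UHedge\), obtain the \(\sFrob\)-component from \Cref{prop:syntax-cospan-set} via the embedding \(\cospanSet \into \cospanUHGraphO\), and define the \(\UHedge\)-component by sending \(\oneedge_n\) to the cospan with a single \(n\)-ary hyperedge, with the permutation axiom holding because \(\edgeendsfun\) is set-valued. You merely supply more detail than the paper does at two points (justifying the embedding via the colimit-preserving discrete-hypergraph functor, and exhibiting the explicit automorphism witnessing permutation invariance), which is a welcome elaboration rather than a different argument.
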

\begin{proof}
  By definition, \(\sFrobUHGraph \defn \sFrob + \UHedge\) is a coproduct.
  Therefore, it suffices to define two symmetric monoidal functors \(\fun{S_1} \colon \sFrob \to \cospanUHGraphO\) and \(\fun{S_2} \colon \UHedge \to \cospanUHGraphO\) for constructing the functor \(\fun{S} \defn \coproductmap{\fun{S_1}}{\fun{S_2}}\).

  The category of cospans of finite sets embeds into the category of cospans of undirected hypergraphs, and in particular \(\cospanSet \into \cospanUHGraphO\).
  By \Cref{prop:syntax-cospan-set}, there is a functor \(\sFrob \to \cospanSet\), which gives us a functor \(\fun{S_1} \colon \sFrob \to \cospanUHGraphO\).

  For the functor \(\fun{S_2}\), we need to define it on the generators of \(\UHedge\) and show that it preserves the equations.
  We define \(\fun{S_2}(\oneedge_n)\) to be the cospan of graphs \(\cospan{n}{}{\mathgraph{\{e\}}{n}}{}{\emptyset}\) given by \(\id{n} \colon n \to n\) and \(\initmap{n} \colon \emptyset \to n\).
  With this assignment, we can freely extend \(\fun{S_2}\) to a monoidal functor \(\UHedge \to \cospanUHGraphO\).
  In fact, it preserves the equations of \(\UHedge\) because permuting the order of the endpoints of an undirected hyperedge has no effect by definition.
\end{proof}

In order to instantiate monoidal width in \(\cospanUHGraphO\), we need to define an appropriate weight function.
\begin{defi}\label{def:weight-cospan-graph}
  Let \(\decgenerators\) be all morphisms of \(\cospanUHGraphO\).
  Define the \emph{weight function} as follows.
  For an object \(X\), \(\nodeweight(\dcompnode{X}) \defn \card{X}\).
  For a morphism \(g \in \decgenerators\), \(\nodeweight(g) \defn \card{V}\), where \(V\) is the set of vertices of the apex of \(g\), i.e. \(g = \cospan{X}{}{G}{}{Y}\) and \(G = \mathgraph{E}{V}\).
\end{defi}

\subsection{Tree width as monoidal width}\label{sec:mwd-twd-proof}
Here we show that monoidal width in the monoidal category \(\cospanUHGraphO\), with the weight function given in \Cref{def:weight-cospan-graph}, is equivalent to tree width.
We do this by bounding monoidal width by above with branch width \(+ 1\) and by below with half of branch width (\Cref{th:mwd-bwd}).
We prove these bounds by defining maps from inductive branch decompositions to monoidal decompositions that preserve the width (\Cref{prop:mwd-bwd-upper}), and vice versa (\Cref{prop:mwd-bwd-lower}).

The idea behind the mapping from inductive branch decompositions to monoidal decompositions is to take a one-edge hypergraph for each leaf of the inductive branch decomposition and compose them following the structure of the decomposition tree.
The \(3\)-clique has a branch decomposition as shown on the left.
The corresponding monoidal decomposition is shown on the right.
\[\bwdMappingExFig{}\]

\begin{prop}\label{prop:mwd-bwd-upper}
  Let \(\Gamma = (G,X)\) be a hypergraph with sources and \(T\) be an inductive branch decomposition of \(\Gamma\).
  Let \(g \defn \cospan{X}{\inclusion}{G}{}{\emptyset}\) be the corresponding cospan.
  Then, there is a monoidal decomposition \(\bTomdec(T) \in \decset{g}\) such that \(\decwidth(\bTomdec(T)) \leq \max\{\decwidth(T) + 1, \hyperedgesize(G)\}\).
\end{prop}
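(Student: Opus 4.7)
The plan is to proceed by structural induction on $T$, constructing the monoidal decomposition out of atomic morphisms drawn from the syntactic presentation $\sFrobUHGraph$ of $\cospanUHGraphO$: the Frobenius generators (copy, merge, unit, counit, identity, symmetry) each of weight $1$, together with the hyperedge generators $\oneedge_{k}$ of weight $k$.

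\textbf{Base case} $T = \emptydec$. Here $G$ has no edges and so $g$ lies in the image of $\sFrob \hookrightarrow \cospanUHGraphO$. I decompose it as a tensor of single-wire pieces: a $\discard$ for each source in $X$, and a $\create \dcomp \discard$ (composed along a single $1$-wire) for each isolated vertex in $V \setminus X$. Every atom and every composition cut has weight $1$, giving width $1 = \decwidth(T) + 1$.

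\textbf{Base case} $T = \leafgenerator{\Gamma}$. Here $G$ carries a single edge $e$ of arity $k \leq \hyperedgesize(G)$ and $\decwidth(T) = \card{X}$. I take $\oneedge_{k}$ as the central atom (weight $k$), compose it with a Frobenius routing morphism that sends each of its $k$ ports either to an input in $X$ or to a unit $\create$, and tensor this with a $\create \dcomp \discard$ factor for every vertex of $V$ outside $X \cup \edgeends{e}$. By arranging the routing so that the $\oneedge_{k}$ is attached last, every intermediate composition cut has at most $\card{X} + 1$ wires, giving the desired bound $\max\{\card{X} + 1, k\}$.

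\textbf{Inductive step} $T = \nodegenerator{T_{1}}{\Gamma}{T_{2}}$. Let $Z = V_{1} \cap V_{2}$, so by \Cref{lemma:expr-boundaries} and the defining identity $X_{i} = Z \cup (X \cap V_{i})$ we have $\card{X_{i}} \leq \decwidth(T)$ and $\card{Z} \leq \decwidth(T)$. By the inductive hypothesis, the decompositions $d_{i} \defn \bTomdec(T_{i})$ of $g_{i}$ have width at most $\max\{\decwidth(T_{i}) + 1, \hyperedgesize(G_{i})\}$. The cospan $g$ is obtained from $d_{1} \tensor d_{2}$ by a Frobenius ``reshaper'' that copies each source in $X \cap Z$ to both halves and closes each internal shared vertex in $Z \setminus X$ via a unit--merge loop. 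The candidate $\bTomdec(T)$ is the composition of this reshaper with $d_{1} \tensor d_{2}$.

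\textbf{Main obstacle.} The naive composition cuts along the whole boundary $X_{1} \tensor X_{2}$, whose weight $\card{X_{1}} + \card{X_{2}}$ can reach $2\,\decwidth(T)$, losing the $+1$ bound. I resolve this by invoking \Cref{lemma:mwd-copy} with the $\card{Z}$ shared wires playing the role of $\overline{X}$: since each shared wire has weight $1$, the lemma lets me interleave the $\card{Z}$ duplications with the work of $d_{1}$ and $d_{2}$ one vertex at a time, so that at every intermediate stage the active cut contains at most $\card{Z} + 1 \leq \decwidth(T) + 1$ wires. Combining this with the widths supplied by the inductive hypothesis on $d_{1}$ and $d_{2}$, and observing that $\hyperedgesize(G_{i}) \leq \hyperedgesize(G)$, yields $\decwidth(\bTomdec(T)) \leq \max\{\decwidth(T) + 1, \hyperedgesize(G)\}$.
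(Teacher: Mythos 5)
Your proposal is correct and follows essentially the same route as the paper's proof: induction on \(T\), with the inductive step resolved by invoking \Cref{lemma:mwd-copy} to duplicate exactly the shared boundary wires, so that the expensive cut along \(X_1 \tensor X_2\) is replaced by interleaved cuts followed by a single composition along \(X_2\). The only differences are cosmetic: the paper treats each leaf cospan as a single atom rather than expanding it in the \(\sFrobUHGraph\) syntax, and your intermediate cut bound should read \(\card{X_1}+1\) rather than \(\card{Z}+1\) (the lemma carries the non-copied part of the domain of \(g_1\) through each interleaved cut), but since \(\card{X_1} \leq \decwidth(T)\) this is still within the required \(\decwidth(T)+1\) and the conclusion is unaffected.
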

\begin{proof}
  Let \(G = \mathgraph{E}{V}\) and proceed by induction on the decomposition tree \(T\).
  If the tree \(T = \leafgenerator{\Gamma}\) is composed of only a leaf, then the label \(\Gamma\) of this leaf must have only one hyperedge with \(\hyperedgesize(G)\) endpoints and \(\decwidth(T) \defn \card{X}\).
  We define the corresponding monoidal decomposition to also consist of only a leaf, \(\bTomdec(T) \defn (g)\), and obtain the desired bound \(\decwidth(\bTomdec(T)) = \max\{\card{X}, \hyperedgesize(G)\} = \max \{\decwidth(T), \hyperedgesize(G)\}\).

  If \(T = \nodegenerator{T_1}{\Gamma}{T_2}\), then, by definition of branch decomposition, \(T\) is composed of two subtrees \(T_1\) and \(T_{2}\) that give branch decompositions of \(\Gamma_1 = (G_1,X_1)\) and \(\Gamma_2 = (G_2,X_2)\).
  There are three conditions imposed by the definition on these subgraphs \(G_i = \mathgraph{E_i}{V_i}\): \(E = E_1 \disjointunion E_2\) with \(E_i \neq \emptyset\), \(V_1 \union V_2 = V\), and \(X_i = (V_1 \intersection V_2) \union (X \intersection V_i)\).
  Let \(g_i = \cospan{X_i}{}{G_i}{}{\emptyset}\) be the cospan given by \(\inclusion \colon X_{i} \to V_{i}\) and corresponding to \(\Gamma_{i}\).
  Then, we can decompose \(g\) in terms of identities, the structure of \(\cospanUHGraphO\), and its subgraphs \(g_{1}\) and \(g_{2}\):
  \[\bwdProofFig{}\]
  By induction hypothesis, there are monoidal decompositions \(\bTomdec(T_i)\) of \(g_i\) whose width is bounded: \(\decwidth(\bTomdec(T_i)) \leq \max\{\decwidth(T_i) + 1, \hyperedgesize(G_i)\}\).
  By \Cref{lemma:mwd-copy}, there is a monoidal decomposition \(\copyMdec(\bTomdec(T_1))\) of the morphism in the above dashed box of bounded width: \(\decwidth(\copyMdec(\bTomdec(T_1))) \leq \max\{\decwidth(\bTomdec(T_1)), \card{X_1}+1\}\).
  Using this decomposition, we can define the monoidal decomposition given by the cuts in the figure above.
  \[\bTomdec(T) \defn \nodegenerator{\nodegenerator{\copyMdec(\bTomdec(T_1))}{\tensor}{\id{X_{2}\setminus X_{1}}}}{\dcomp_{X_2}}{\bTomdec(T_2)}.\]
  We can bound its width by applying \Cref{lemma:mwd-copy}, the induction hypothesis and the relevant definitions of width (\Cref{def:branch-width} and \Cref{def:weight-cospan-graph}).
  \begin{align*}
    & \decwidth(\bTomdec(T))\\
    & \defn \max \{\decwidth(\copyMdec(\bTomdec(T_1))), \decwidth(\bTomdec(T_2)), \card{X_2}\}\\
    & = \max \{\decwidth(\bTomdec(T_1)), \decwidth(\bTomdec(T_2)), \card{X_1} + 1, \card{X_2}\}\\
    & \leq \max \{\decwidth(T_1) + 1, \hyperedgesize(G_1), \decwidth(T_2) + 1, \hyperedgesize(G_2), \card{X_1} + 1, \card{X_2}\}\\
    & \leq \max \{\max\{\decwidth(T_1), \decwidth(T_2), \card{X_1}, \card{X_2}\} + 1, \hyperedgesize(G_1), \hyperedgesize(G_2)\}\\
    & \leq \max \{\max\{\decwidth(T_1), \decwidth(T_2), \card{X}\} + 1, \hyperedgesize(G)\}\\
    & \codefn \max\{\decwidth(T) +1, \hyperedgesize(G)\}
  \qedhere
  \end{align*}
\end{proof}

The mapping \(\mTobdec\) follows the same idea of the mapping \(\bTomdec\) but requires extra care: we need to keep track of which vertices are going to be identified in the final cospan.
The function \(\phi\) stores this information, thus it cannot identify two vertices that are not already in the boundary of the hypergraph.
The proof of \Cref{prop:mwd-bwd-lower} proceeds by induction on the monoidal decomposition and constructs the corresponding branch decomposition.
The inductive step relies on \(\phi\) to identify which subgraphs of \(\Gamma\) correspond to the two subtrees in the monoidal decomposition, and, consequently, to define the corresponding branch decomposition.

\begin{rem}\label{rem:compute-images}
  Let \(f \colon A \to C\) and \(g \colon B \to C\) be two functions.
  The union of the images of \(f\) and \(g\) is the image of the coproduct map \(\coproductmap{f}{g} \colon A + B \to C\), i.e. \(\image(f) \union \image(g) = \image(\coproductmap{f}{g})\).
  The intersection of the images of \(f\) and \(g\) is the image of the pullback map \(\pullbackmap{f}{g} \colon A \times_{C} B \to C\), i.e. \(\image(f) \intersection \image(g) = \image(\pullbackmap{f}{g})\).
\end{rem}

\begin{rem}\label{rem:glueing-property}
  Let \(f \colon A \to C\), \(g \colon B \to C\) and \(\phi \colon C \to V\) such that \(\forall \ c \neq c' \in C \ \phi(c) = \phi(c') \implies c,c' \in \image(f)\).
  We have that \(\image(\pullbackmap{f \dcomp \phi}{g \dcomp \phi}) \supseteq \image(\pullbackmap{f}{g} \dcomp \phi)\).
  Then, \(\image(\pullbackmap{f \dcomp \phi}{g \dcomp \phi}) = \image(\pullbackmap{f}{g} \dcomp \phi)\) because their difference is empty:
  \begin{align*}
    & \image(\pullbackmap{f}{g} \dcomp \phi) \setminus \image(\pullbackmap{f \dcomp \phi}{g \dcomp \phi}) \\
    & = \{v \in V : \exists a \in A \ \exists b \in B \ \phi(f(a)) = \phi(g(b)) \land f(a) \notin \image(g) \land g(b) \notin \image(f)\} = \emptyset
  \end{align*}
\end{rem}

\begin{prop}\label{prop:mwd-bwd-lower}
  Let \(h = \cospan{A}{\partial_A}{H}{\partial_B}{B}\) with \(H = \mathgraph{F}{W}\).
  Let \(\phi \colon W \to V\) such that \(\forall \ w \neq w' \in W \ \phi(w) = \phi(w') \implies w,w' \in \image(\partial_A) \union \image(\partial_B)\) (glueing property).
  Let \(d\) be a monoidal decomposition of \(h\).
  Let \(\Gamma \defn (\mathgraph{F}{\image(\phi)},\image(\partial_A \dcomp \phi) \union \image(\partial_B \dcomp \phi))\).
  Then, there is an inductive branch decomposition \(\mTobdec(d)\) of \(\Gamma\) such that \(\decwidth(\mTobdec(d)) \leq 2 \cdot \max \{\decwidth(d), \card{A}, \card{B}\}\).
\end{prop}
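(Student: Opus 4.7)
The proof proceeds by induction on the monoidal decomposition $d$, following the three clauses of \Cref{def:monoidal-decomposition}. The main technical obstacle is the composition case, where I must unpack the pushout defining $\dcomp$ in $\cospanUHGraphO$ and verify both the glueing property for the sub-cospans and the agreement of the induced boundaries with those prescribed by \Cref{def:rec-branch-decomposition}.

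For the leaf case, $\decwidth(d) = \nodeweight(h) = \card{W}$. I take any inductive branch decomposition of $\Gamma$, for instance, splitting off one edge at a time. Every boundary along the tree is a subset of $V(\Gamma) = \image(\phi)$, so its width is at most $\card{\image(\phi)} \leq \card{W} \leq 2\max\{\decwidth(d),\card{A},\card{B}\}$.

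For the $\tensor$ case $d = \nodegenerator{d_1}{\tensor}{d_2}$, we have $W = W_1 \disjointunion W_2$, $F = F_1 \disjointunion F_2$, and $A$, $B$, $\partial_A$, $\partial_B$ split into their two summands. Setting $\phi_i \defn \phi|_{W_i}$, the glueing property for $\phi_i$ is immediate because $W_1 \intersection W_2 = \emptyset$. The IH yields $\mTobdec(d_i)$ of the expected $\Gamma_i$, and I take $\mTobdec(d) \defn \nodegenerator{\mTobdec(d_1)}{\Gamma}{\mTobdec(d_2)}$. The edge partition and vertex covering conditions are immediate; checking that the IH sources equal the sources $(\image(\phi_1) \intersection \image(\phi_2)) \union (X_\Gamma \intersection \image(\phi_i))$ required by \Cref{def:rec-branch-decomposition} uses the glueing property for $\phi$ to resolve cross-summand vertex identifications. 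The width bound then follows from $\card{X_\Gamma}\leq \card{A}+\card{B}\leq 2\max\{\card{A},\card{B}\}$ and the IH bounds on $\decwidth(\mTobdec(d_i))$.

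For the composition case $d = \nodegenerator{d_1}{\dcomp_X}{d_2}$, the apex $W$ is the pushout of $W_1 \leftarrow X \rightarrow W_2$ along $\beta_1$ and $\alpha_2$, with coprojections $\iota_i \colon W_i \to W$, and $\partial_A = \alpha_1 \dcomp \iota_1$, $\partial_B = \beta_2 \dcomp \iota_2$. I set $\phi_i \defn \iota_i \dcomp \phi$ and verify the glueing property by case analysis on whether $\iota_i$ already identifies the offending pair: if $\iota_i(w) = \iota_i(w')$ with $w \neq w'$, the explicit description of pushouts in $\Set$ forces $w, w' \in \image(\beta_1)$ (resp. $\image(\alpha_2)$), hence into the boundary of $h_i$; otherwise the glueing property for $\phi$ applies to $\iota_i(w) \neq \iota_i(w')$ and a further pushout analysis places $w, w' \in \image(\alpha_i) \union \image(\beta_i)$. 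The IH then supplies $\mTobdec(d_i)$ of $\Gamma_i$. Using the cospan identity $\beta_1 \dcomp \iota_1 = \alpha_2 \dcomp \iota_2$, the ``middle'' image $M \defn \image(\beta_1 \dcomp \iota_1 \dcomp \phi)$ lies in $\image(\phi_1) \intersection \image(\phi_2)$, and a final glueing-property argument shows the IH sources agree with those required by \Cref{def:rec-branch-decomposition}. Taking $\mTobdec(d) \defn \nodegenerator{\mTobdec(d_1)}{\Gamma}{\mTobdec(d_2)}$, its width is bounded by the IH plus $\card{X_\Gamma} \leq \card{A}+\card{B} \leq 2\max\{\card{A},\card{B}\}$, and since $\nodeweight(X) = \card{X} \leq \decwidth(d)$ the intermediate $\card{X}$ terms collapse into $2\decwidth(d)$, giving the desired overall bound.
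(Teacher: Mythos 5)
Your proposal follows essentially the same route as the paper's proof: induction on the monoidal decomposition, pulling $\phi$ back along the pushout/coproduct inclusions to get $\phi_i$, verifying the glueing property for these restrictions, invoking the inductive hypothesis, checking the source conditions of \Cref{def:rec-branch-decomposition} via the image computations, and closing with the same arithmetic (in particular using $\card{X}\leq\decwidth(d)$ to absorb the middle object). The one small imprecision is in the $\tensor$ case: the glueing property for $\phi_i$ is not immediate from $W_1\cap W_2=\emptyset$ alone, but follows from injectivity of the coproduct inclusions together with the glueing property of $\phi$ and the splitting of $\partial_A,\partial_B$ into summands --- a two-line fix that does not change the structure of the argument.
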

\begin{proof}
  Proceed by induction on the decomposition tree \(d\).
  If it is just a leaf, \(d = \leafgenerator{h}\) and \(H\) has no edges, \(F = \emptyset\), then the corresponding inductive branch decomposition is empty, \(\mTobdec(d) \defn \emptydec\), and we can compute its width: \(\decwidth(\mTobdec(d)) \defn 0 \leq 2 \cdot \max \{\decwidth(d), \card{A}, \card{B}\}\).

  If the decomposition is just a leaf \(d = \leafgenerator{h}\) but \(H\) has exactly one edge, \(F = \{e\}\), then the corresponding branch decomposition is just a leaf as well, \(\mTobdec(d) \defn \leafgenerator{\Gamma}\), and we can compute its width: \(\decwidth(\mTobdec(d)) \defn \card{\image(\partial_A \dcomp \phi) \union \image(\partial_B \dcomp \phi)} \leq \card{A} + \card{B} \leq 2 \cdot \max \{\decwidth(d), \card{A}, \card{B}\}\).

  If the decomposition is just a leaf \(d = \leafgenerator{h}\) and \(H\) has more than one edge, \(\card{F} > 1\), then we can let \(\mTobdec(d)\) be any inductive branch decomposition of \(\Gamma\).
  Its width is not greater than the number of vertices in \(\Gamma\), thus we can bound its width \(\decwidth(\mTobdec(d)) \leq \card{\image(\phi)} \leq 2 \cdot \max\{\decwidth(d), \card{A}, \card{B}\}\).

  If \(d = \nodegenerator{d_1}{\dcomp_C}{d_2}\), then \(d_i\) is a monoidal decomposition of \(h_i\) with \(h = h_1 \dcomp_C h_2\).
  We can give the expressions of these morphisms: \(h_1 = \cospan{A}{\partial^1_A}{H_1}{\partial_1}{C}\) and \(h_2 = \cospan{C}{\partial_2}{H_2}{\partial^2_B}{B}\), with \(H_i = \mathgraph{F_i}{W_i}\), and obtain the following diagram, where \(\inclusion[i] \colon W_i \to W\) are the functions induced by the pushout and we define \(\phi_i \defn \inclusion[i] \dcomp \phi\).
  \[\mondectobranchdecProofDiagram{}\]
  We show that \(\phi_1\) satisfies the glueing property in order to apply the induction hypothesis to \(\phi_{1}\) and \(H_{1}\):
  let \(w \neq w' \in W_1\) such that \(\phi_1(w) = \phi_1(w')\).
  Then, \(\inclusion[1](w) = \inclusion[1](w')\) or \(\phi(\inclusion[1](w)) = \phi(\inclusion[1](w')) \land \inclusion[1](w) \neq \inclusion[1](w')\).
  Then, \(w,w' \in \image(\partial_1)\) or \(\inclusion[1](w), \inclusion[1](w') \in \image(\partial_A \dcomp \phi) \union \image(\partial_B \dcomp \phi)\).
  Then, \(w,w' \in \image(\partial_1)\) or \(w,w' \in \image(\partial^1_A)\).
  Then, \(w,w' \in \image(\partial_1) \union \image(\partial^1_A)\).
  Similarly, we can show that \(\phi_2\) satisfies the same property.
  Then, we can apply the induction hypothesis to get an inductive branch decomposition \(\mTobdec(d_1)\) of \(\Gamma_1 = (\mathgraph{F_1}{\image(\phi_1)}, \image(\partial^1_A \dcomp \phi_1) \union \image(\partial_1 \dcomp \phi_1))\) and an inductive branch decomposition \(\mTobdec(d_2)\) of \(\Gamma_2 = (\mathgraph{F_2}{\image(\phi_2)}, \image(\partial^2_B \dcomp \phi_2) \union \image(\partial_2 \dcomp \phi_2))\) with bounded width: \(\decwidth(\mTobdec(d_1)) \leq 2 \cdot \max\{\decwidth(d_1),\card{A},\card{C}\}\) and  \(\decwidth(\mTobdec(d_2)) \leq 2 \cdot \max\{\decwidth(d_2),\card{B},\card{C}\}\).

  We check that we can define an inductive branch decomposition of \(\Gamma\) from \(\mTobdec(d_1)\) and \(\mTobdec(d_2)\).
  \begin{itemize}
    \item \(F = F_1 \disjointunion F_2\) because the pushout is along discrete hypergraphs.
    \item \(\image(\phi) = \image(\phi_1) \union \image(\phi_2)\) because \(\image(\coproductmap{\inclusion[1]}{\inclusion[2]}) = W\) and \(\image(\phi_1) \union \image(\phi_2) = \image(\inclusion[1] \dcomp \phi) \union \image(\inclusion[2] \dcomp \phi) = \image(\coproductmap{\inclusion[1]}{\inclusion[2]} \dcomp \phi) = \image(\phi)\).
    \item \(\image(\coproductmap{\partial^1_A}{\partial_1} \dcomp \phi_1) = \image(\phi_1) \intersection (\image(\phi_2) \union \image(\partial_A \dcomp \phi) \union \image(\partial_B \dcomp \phi))\) because
      \begin{align*}
        & \image(\phi_1) \intersection (\image(\phi_2) \union \image(\partial_A \dcomp \phi) \union \image(\partial_B \dcomp \phi)) \\
        \explain{by definition of \(\phi_i\)}\\
        & \image(\inclusion[1] \dcomp \phi) \intersection (\image(\inclusion[2] \dcomp \phi) \union \image(\partial_A \dcomp \phi) \union \image(\partial_B \dcomp \phi)) \\
        \explain{because \(\image(\partial_B) = \image(\partial^2_B \dcomp \inclusion[2]) \subseteq \image(\inclusion[2])\)}\\
        & \image(\inclusion[1] \dcomp \phi) \intersection (\image(\inclusion[2] \dcomp \phi) \union \image(\partial_A \dcomp \phi)) \\
        \explain{by \Cref{rem:compute-images}}\\
        & \image(\inclusion[1] \dcomp \phi) \intersection \image(\coproductmap{\inclusion[2]}{\partial_A} \dcomp \phi) \\
        \explain{by \Cref{rem:compute-images}}\\
        & \image(\pullbackmap{\inclusion[1] \dcomp \phi}{\coproductmap{\inclusion[2]}{\partial_A} \dcomp \phi})\\
        \explain{by \Cref{rem:glueing-property}}\\
        & \image(\pullbackmap{\inclusion[1]}{\coproductmap{\inclusion[2]}{\partial_A}} \dcomp \phi)\\
        \explain{because pullbacks commute with coproducts}\\
        & \image(\coproductmap{\pullbackmap{\inclusion[1]}{\inclusion[2]}}{\pullbackmap{\inclusion[1]}{\partial_A}} \dcomp \phi)\\
        \explain{because \(\partial_A = \partial^1_A \dcomp \inclusion[1]\)}\\
        & \image(\coproductmap{\pullbackmap{\inclusion[1]}{\inclusion[2]}}{\partial_A} \dcomp \phi)\\
        \explain{because \(\partial_1 \dcomp \inclusion[1] = \partial_2 \dcomp \inclusion[2]\) is the pushout map of \(\partial_1\) and \(\partial_2\)}\\
        & \image(\coproductmap{\partial_1 \dcomp \inclusion[1]}{\partial^1_A \dcomp \inclusion[1]} \dcomp \phi)\\
        \explain{by property of the coproduct}\\
        & \image(\coproductmap{\partial_1}{\partial^1_A} \dcomp \phi_1)
      \end{align*}
    \item \(\image(\coproductmap{\partial_2}{\partial^2_B} \dcomp \phi_2) = \image(\phi_2) \intersection (\image(\phi_1) \union \image(\partial_A \dcomp \phi) \union \image(\partial_B \dcomp \phi))\) similarly to the former point.
  \end{itemize}
  Then, \(\mTobdec(d) \defn \nodegenerator{\mTobdec(d_1)}{\Gamma}{\mTobdec(d_2)}\) is an inductive branch decomposition of \(\Gamma\) and
  \begin{align*}
    & \decwidth(\mTobdec(d)) \\
    & \defn \max\{\decwidth(\mTobdec(d_1)), \card{\image(\coproductmap{\partial_A}{\partial_B})}, \decwidth(\mTobdec(d_2)) \}\\
    & \leq \max\{2 \cdot \decwidth(d_1), 2 \cdot \card{A}, 2 \cdot \card{C}, \card{A} + \card{B},\\
    & \qquad 2 \cdot \decwidth(d_2), 2 \cdot \card{B} \}\\
    & \leq 2 \cdot \max\{\decwidth(d_1), \card{A}, \card{C}, \decwidth(d_2), \card{B} \}\\
    & \codefn 2 \cdot \max\{\decwidth(d), \card{A}, \card{B} \}
  \end{align*}

  If \(d = \nodegenerator{d_1}{\tensor}{d_2}\), then \(d_i\) is a monoidal decomposition of \(h_i\) with \(h = h_1 \tensor h_2\).
  Let \(h_i = \cospan{X_i}{\partial^i_X}{H_i}{\partial^i_Y}{Y_i}\) with \(H_i = \ctgraph{F_i}{W_i}\).
  Let \(\inclusion[i] \colon W_i \to W\) be the inclusions induced by the monoidal product.
  Define \(\phi_i \defn \inclusion[i] \dcomp \phi\).
  We show that \(\phi_1\) satisfies the glueing property:
  Let \(w \neq w' \in W_1\) such that \(\phi_1(w) = \phi_1(w')\).
  Then, \(\inclusion[1](w) = \inclusion[1](w')\) or \(\phi(\inclusion[1](w)) = \phi(\inclusion[1](w')) \land \inclusion[1](w) \neq \inclusion[1](w')\).
  Then, \(\inclusion[1](w), \inclusion[1](w') \in  \image(\partial_A \dcomp \phi) \union \image(\partial_B \dcomp \phi)\) because \(\inclusion[i]\) are injective.
  Then, \(w,w' \in \image(\partial^1_A) \union \image(\partial^1_B)\).
  Similarly, we can show that \(\phi_2\) satisfies the same property.
  Then, we can apply the induction hypothesis to get \(\mTobdec(d_i)\) inductive branch decomposition of \(\Gamma_i = (\mathgraph{F_i}{\image(\phi_i)}, \image(\coproductmap{\partial^i_A}{\partial^i_B} \dcomp \phi_i))\) such that \(\decwidth(\mTobdec(d_i)) \leq 2 \cdot \max\{\decwidth(d_i),\card{A_i},\card{B_i}\}\).

  We check that we can define an inductive branch decomposition of \(\Gamma\) from \(\mTobdec(d_1)\) and \(\mTobdec(d_2)\).
  \begin{itemize}
    \item \(F = F_1 \disjointunion F_2\) because the monoidal product is given by the coproduct in \(\Set\).
    \item \(\image(\phi) = \image(\phi_1) \union \image(\phi_2)\) because \(\image(\coproductmap{\inclusion[1]}{\inclusion[2]}) = W\) and \(\image(\phi_1) \union \image(\phi_2) = \image(\inclusion[1] \dcomp \phi) \union \image(\inclusion[2] \dcomp \phi) = \image(\coproductmap{\inclusion[1]}{\inclusion[2]} \dcomp \phi) = \image(\phi)\).
    \item \(\image(\coproductmap{\partial^1_A}{\partial^1_B} \dcomp \phi_1) = \image(\phi_1) \intersection (\image(\phi_2) \union \image(\partial_A \dcomp \phi) \union \image(\partial_B \dcomp \phi))\) because
      \begin{align*}
        & \image(\phi_1) \intersection (\image(\phi_2) \union \image(\partial_A \dcomp \phi) \union \image(\partial_B \dcomp \phi)) \\
        \explain{by definition of \(\phi_i\)}\\
        & \image(\inclusion[1] \dcomp \phi) \intersection (\image(\inclusion[2] \dcomp \phi) \union \image(\partial_A \dcomp \phi) \union \image(\partial_B \dcomp \phi)) \\
        \explain{by \Cref{rem:compute-images} and property of the coproduct}\\
        & \image(\inclusion[1] \dcomp \phi) \intersection \image(\coproductmap{\inclusion[2]}{\coproductmap{\partial_A}{\partial_B}} \dcomp \phi)\\
        \explain{by \Cref{rem:compute-images}}\\
        & \image(\pullbackmap{\inclusion[1] \dcomp \phi}{\coproductmap{\inclusion[2]}{\coproductmap{\partial_A}{\partial_B}} \dcomp \phi})\\
        \explain{by \Cref{rem:glueing-property}}\\
        & \image(\pullbackmap{\inclusion[1]}{\coproductmap{\inclusion[2]}{\coproductmap{\partial_A}{\partial_B}}} \dcomp \phi)\\
        \explain{because pullbacks commute with coproducts}\\
        & \image(\coproductmap{\pullbackmap{\inclusion[1]}{\inclusion[2]}}{\pullbackmap{\inclusion[1]}{\coproductmap{\partial_A}{\partial_B}}} \dcomp \phi)\\
        \explain{because \(\pullbackmap{\inclusion[1]}{\inclusion[2]} = \initmap{}\)}\\
        & \image(\pullbackmap{\inclusion[1]}{\coproductmap{\partial_A}{\partial_B}} \dcomp \phi)\\
        \explain{because \(\partial_A = \partial^1_A + \partial^2_A\) and \(\partial_B = \partial^1_B + \partial^2_B\)}\\
        & \image(\coproductmap{\partial^1_A \dcomp \inclusion[1]}{\partial^1_B \dcomp \inclusion[1]} \dcomp \phi)\\
        \explain{by property of the coproduct}\\
        & \image(\coproductmap{\partial^1_A}{\partial^1_B} \dcomp \phi_1)
      \end{align*}
    \item \(\image(\coproductmap{\partial^2_A}{\partial^2_B} \dcomp \phi_2) = \image(\phi_2) \intersection (\image(\phi_1) \union \image(\partial_A \dcomp \phi) \union \image(\partial_B \dcomp \phi))\) similarly to the former point.
  \end{itemize}
  Then, \(\mTobdec(d) \defn \nodegenerator{\mTobdec(d_1)}{\Gamma}{\mTobdec(d_2)}\) is an inductive branch decomposition of \(\Gamma\) and
  \begin{align*}
    & \decwidth(\mTobdec(d)) \\
    & \defn \max\{\decwidth(\mTobdec(d_1)), \card{\image(\coproductmap{\partial_A}{\partial_B})}, \decwidth(\mTobdec(d_2)) \}\\
    & \leq \max\{2 \cdot \decwidth(d_1), 2 \cdot \card{A_1}, 2 \cdot \card{B_1}, \card{A} + \card{B},\\
    & \qquad 2 \cdot \decwidth(d_2), 2 \cdot \card{A_2}, 2 \cdot \card{B_2} \}\\
    & \leq 2 \cdot \max\{\decwidth(d_1), \card{A}, \decwidth(d_2), \card{B} \}\\
    & \codefn 2 \cdot \max\{\decwidth(d), \card{A}, \card{B} \}
  \end{align*}
  where we applied the induction hypothesis and \Cref{def:weight-cospan-graph}.
\end{proof}

Combining \Cref{th:twd-bwd}, \Cref{prop:rec-branch-width-equivalent}, \Cref{prop:mwd-bwd-upper}, and \Cref{prop:mwd-bwd-lower}, we obtain the following.

\begin{thm}\label{th:mwd-bwd}
  Branch width is equivalent to monoidal width in \(\cospanUHGraphO\).
  More precisely, let \(G\) be a hypergraph and \(g = \cospan{\emptyset}{}{G}{}{\emptyset}\) be the corresponding morphism of \(\cospanUHGraphO\).
  Then, \(\frac{1}{2} \cdot \branchwidth(G) \leq \mwd(g) \leq \branchwidth(G) + 1\).
\end{thm}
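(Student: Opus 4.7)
The plan is to chain together the four technical results already established in this section: the mutual translations between branch decompositions and inductive branch decompositions (\Cref{lemma:rec-bwd-upper} and \Cref{lemma:rec-bwd-lower}) and the mutual translations between inductive branch decompositions of $(G,\emptyset)$ and monoidal decompositions of the cospan $g$ (\Cref{prop:mwd-bwd-upper} and \Cref{prop:mwd-bwd-lower}). Since $g$ has empty left and right interfaces, the boundary-dependent error terms appearing in these results (the $\card{X}$ in \Cref{lemma:rec-bwd-lower} and the $\card{A},\card{B}$ in \Cref{prop:mwd-bwd-lower}) all vanish, which is what lets the stated inequalities fall out cleanly.

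For the upper bound, I would fix a branch decomposition $(Y,b)$ of $G$ realising $\branchwidth(G)$ and set $\Gamma \defn (G,\emptyset)$. \Cref{lemma:rec-bwd-lower} then yields an inductive branch decomposition $T \defn \toRecursiveDec(Y,b)$ of $\Gamma$ with $\decwidth(T) \leq \decwidth(Y,b) + \card{\emptyset} = \branchwidth(G)$. Next, \Cref{prop:mwd-bwd-upper} turns $T$ into a monoidal decomposition $\bTomdec(T) \in \decset{g}$ whose width is at most $\max\{\decwidth(T)+1,\hyperedgesize(G)\} \leq \branchwidth(G)+1$, the second inequality relying on the fact that for the graphs of interest the branch width already dominates the hyperedge size, so the $\hyperedgesize(G)$ term is absorbed.

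For the lower bound, I would fix a monoidal decomposition $d \in \decset{g}$ achieving $\mwd(g)$ and invoke \Cref{prop:mwd-bwd-lower} with $h = g$ and $\phi \defn \id{\vertices(G)}$. Injectivity of the identity makes the glueing hypothesis on $\phi$ trivial, and because $A = B = \emptyset$ the proposition delivers an inductive branch decomposition $\mTobdec(d)$ of $(G,\emptyset)$ with $\decwidth(\mTobdec(d)) \leq 2 \cdot \max\{\decwidth(d),0,0\} = 2\,\mwd(g)$. A final application of \Cref{lemma:rec-bwd-upper} converts $\mTobdec(d)$ back into an ordinary branch decomposition of $G$ of width at most $2\,\mwd(g)$; minimising over all branch decompositions then gives $\branchwidth(G) \leq 2\,\mwd(g)$, i.e.\ $\tfrac{1}{2}\branchwidth(G) \leq \mwd(g)$.

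At this level the statement is essentially a corollary, and the genuinely hard work has already been paid for in the auxiliary results: the main obstacles are the intricate vertex-sharing bookkeeping via $\phi$ and the glueing property that drive \Cref{prop:mwd-bwd-lower}, together with the inductive invocation of \Cref{lemma:mwd-copy} used to assemble a low-width monoidal decomposition in \Cref{prop:mwd-bwd-upper}. What remains here is purely the observation that, for cospans of the form $\emptyset \to G \leftarrow \emptyset$, the additive corrections coming from the boundary cardinalities all collapse, so that the error in the upper-bound translation is the single unit $+1$ and the error in the lower-bound translation is the overall factor of $2$, yielding exactly the stated inequalities.
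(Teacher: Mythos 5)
Your argument is the paper's own: \Cref{th:mwd-bwd} is obtained there precisely by chaining \Cref{prop:rec-branch-width-equivalent} (i.e.\ \Cref{lemma:rec-bwd-upper} and \Cref{lemma:rec-bwd-lower}) with \Cref{prop:mwd-bwd-upper} and \Cref{prop:mwd-bwd-lower}, and your instantiations ($X=\emptyset$ so the $\card{X}$ correction vanishes, $A=B=\emptyset$ and $\phi=\id{}$ so the glueing property is vacuous and the boundary terms drop out) are exactly the right ones. The one point where you go beyond what the chain of lemmas actually gives is the claim that the $\hyperedgesize(G)$ term in \Cref{prop:mwd-bwd-upper} is ``absorbed'' because branch width dominates hyperedge size: that is false for general hypergraphs --- a single hyperedge on $n$ vertices has $\branchwidth(G)=0$ (the decomposition tree has no edges) while the corresponding cospan has monoidal width $n$, since the apex of whichever leaf carries that edge must contain all $n$ endpoints. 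What the combination honestly yields is $\mwd(g)\leq\max\{\branchwidth(G)+1,\hyperedgesize(G)\}$; the paper's statement silently drops this term as well, so you are no less rigorous than the source, but the absorption should either be stated as a hypothesis (e.g.\ $\hyperedgesize(G)\leq\branchwidth(G)+1$, which holds for ordinary graphs of branch width at least $1$) or the extra term retained in the bound. The lower-bound half of your argument is complete and correct as written.
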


With \Cref{th:twd-bwd}, we obtain:

\begin{cor}
  Tree width is equivalent to monoidal width in \(\cospanUHGraphO\).
\end{cor}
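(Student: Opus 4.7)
The plan is essentially bookkeeping: chain the two equivalences \Cref{th:mwd-bwd} (monoidal width sandwiched between half and one-plus branch width) with \Cref{th:twd-bwd} (branch width equivalent to tree width up to a constant factor and the hyperedge-size term $\gamma(G)$). No new construction is needed; I just compose the four inequalities.

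First I would record the upper bound on monoidal width. \Cref{th:mwd-bwd} gives $\mwd(g) \le \branchwidth(G) + 1$, and the lower half of \Cref{th:twd-bwd} yields $\branchwidth(G) \le \treewidth(G)$, so composing them immediately gives $\mwd(g) \le \treewidth(G) + 1$.

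For the converse direction I would use the upper half of \Cref{th:twd-bwd}, namely $\treewidth(G) \le \max\{\tfrac{3}{2}\branchwidth(G), \gamma(G), 1\}$, together with $\branchwidth(G) \le 2 \cdot \mwd(g)$ from \Cref{th:mwd-bwd}. Substituting yields
\[
  \treewidth(G) \;\le\; \max\bigl\{\,3 \cdot \mwd(g),\; \gamma(G),\; 1\,\bigr\},
\]
which is the reverse inequality, with the same hyperedge-size correction already present in Robertson and Seymour's comparison.

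I do not anticipate a genuine obstacle here; the only thing worth flagging explicitly is that ``equivalent'' must be read in the same sense as in \Cref{th:twd-bwd}, that is, up to constant factors and the additive $\gamma(G)$ term that unavoidably appears for hypergraphs with large hyperedges (and which is bounded by $2$ for ordinary graphs, recovering a clean linear equivalence in that case).
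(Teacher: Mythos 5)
Your proposal is correct and matches the paper's approach exactly: the corollary is obtained purely by chaining \Cref{th:mwd-bwd} with the Robertson--Seymour comparison \Cref{th:twd-bwd}, which is precisely the bookkeeping you carry out. Your explicit composed inequalities (including the \(\gamma(G)\) term and the remark on how ``equivalent'' is to be read) are a faithful unpacking of the one-line derivation the paper leaves implicit.
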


\section{Monoidal width in matrices}\label{sec:mwd-matrices}
We have just seen that instantiating monoidal width in a monoidal category of graphs yields a measure that is equivalent to tree width.
Now, we turn our attention to rank width, which is more linear algebraic in flavour as it relies on treating the connectivity of graphs by means of adjacency matrices.
Thus, the monoidal category of matrices is a natural example to study first.
We relate monoidal width in the category of matrices over the natural numbers, which we introduce in \Cref{sec:prop-matrices}, to their rank (\Cref{sec:mwd-matrices-proof}).

The rank of a matrix is the maximum number of its linearly independent rows (or, equivalently, columns).
Conveniently, it can be characterised in terms of minimal factorisations.

\begin{lemC}[\cite{piziak1999fullRank}]\label{lemma:min-cut-is-rank}
  Let \(A \in \MatN(m,n)\) be an \(m\) by \(n\) matrix with entries in the natural numbers.
  Then \(\rank(A) = \min\{k \in \naturals \st \exists B \in \MatN(k,n) \ \exists C \in \MatN(m,k) \ A = C \cdot B\}\).
\end{lemC}

\subsection{The prop of matrices}\label{sec:prop-matrices}
The monoidal category \(\Mat_{\naturals}\) of matrices with entries in the natural numbers is a prop whose morphisms from \(n\) to \(m\) are \(m\) by \(n\) matrices.

\begin{defi}
  \(\MatN\) is the prop whose morphisms \(n \to m\) are \(m\) by \(n\) matrices with entries in the natural numbers.
  Composition is the usual product of matrices and the monoidal product is the biproduct \(A \tensor B \defn \begin{psmallmatrix} A & \zeromat \\ \zeromat & B \end{psmallmatrix}\).
\end{defi}

Let us examine matrix decompositions enabled by this algebra.
A matrix \(A\) can be written as a monoidal product \(A = A_{1} \tensor A_{2}\) iff the matrix has blocks \(A_{1}\) and \(A_{2}\), i.e. \(A = \begin{psmallmatrix} A_{1} & \zeromat \\ \zeromat & A_{2}\end{psmallmatrix}\).
On the other hand, a composition is related to the rank: the statement of \Cref{lemma:min-cut-is-rank} can be read in the category \(\MatN\) as \(\rank(A) = \min\{k \in \naturals : A = B \dcomp_k C\}\).

\begin{thmC}[\cite{zanasi2015thesis}]
  Let \(\Bialg\) be the prop whose generators and axioms are given in \Cref{fig:bialgebra}.
  There is an isomorphism of categories \(\BialgToMat \colon \Bialg \to \MatN\).
\end{thmC}

\begin{figure}[h!]
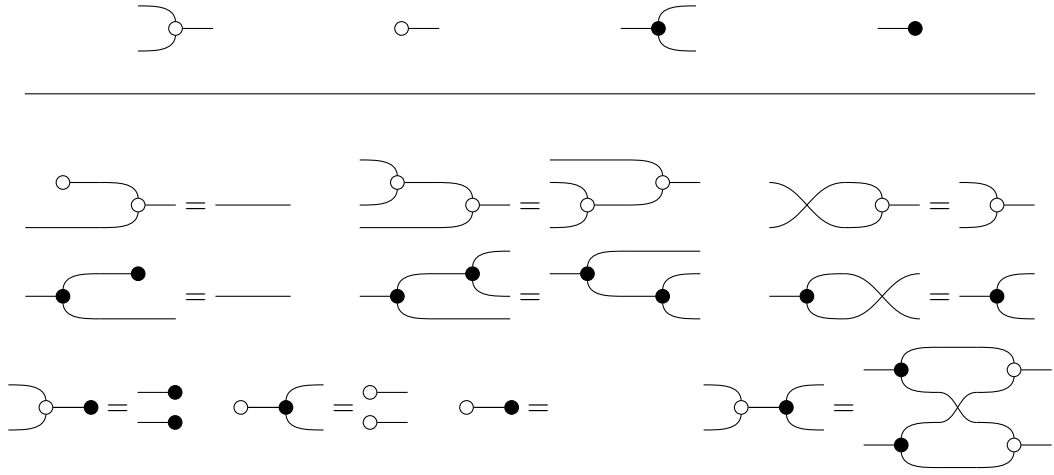

  \bialgebraFig{}
  \caption{Generators and axioms of a bialgebra.}\label{fig:bialgebra}
\end{figure}

Every morphism \(f \colon n \to m\) in \(\Bialg\) corresponds to a matrix \(A = \BialgToMat(f) \in \Mat_{\naturals}(m,n)\): we can read the \((i,j)\)-entry of \(A\) off the diagram of \(f\) by counting the number of  paths from the \(j\)th input to the \(i\)th output.

\begin{exa}
  The matrix \(\begin{psmallmatrix} 1 & 0 \\ 1 & 2 \\ 0 & 0 \end{psmallmatrix} \in \MatN(3,2)\) corresponds to
  \begin{center}
    \matrixExFig{}
  \end{center}
  For matrices \(A \in \MatN(m,n)\), \(B \in \MatN(m,p)\) and \(C \in \MatN(l,n)\), we indicate with \((A \mid B) \in \MatN(m, n+p)\) and with \(\begin{psmallmatrix}A \\ C\end{psmallmatrix} \in \MatN(m+l,n)\) the matrices obtained by concatenating \(A\) with \(B\) horizontally or with \(C\) vertically.
\end{exa}

In order to instantiate monoidal width in \(\Bialg\), we need to define an appropriate weight function: the natural choice for a prop is to assign weight \(n\) to compositions along the object \(n\).

\begin{defi}\label{def:weight-bialgebra}
  The atoms for \(\Bialg\) are its generators (\Cref{fig:bialgebra}) with the symmetry and identity on \(1\): \(\decgenerators = \{\cp_1, \delete_1, \add_1, \zero_1, \swap{1,1},\id{1}\}\).
  The weight function \(\nodeweight \colon \decgenerators \union \monoidaloperations{\Bialg} \to \naturals\) has \(\nodeweight(n) \defn n\), for any \(n \in \naturals\), and \(\nodeweight(g) \defn \max\{m,n\}\), for \(g \colon n \to m \in \decgenerators\).
\end{defi}

\subsection{Monoidal width of matrices}\label{sec:mwd-matrices-proof}
We show that the monoidal width of a morphism in the category of matrices \(\Bialg\), with the weight function in \Cref{def:weight-bialgebra}, is, up to \(1\), the maximum rank of its blocks.
The overall strategy to prove this result is to first relate monoidal width directly with the rank (\Cref{prop:matrix-mwd-non-tensor-decomposables}) and then to improve this bound by prioritising \(\tensor\)-nodes in a decomposition (\Cref{prop:matrices-better-tensors-first}).
Combining these two results leads to \Cref{th:mwd-matrices}.
The shape of an optimal decomposition is given in \Cref{fig:matrix-decomposition}: a matrix \(A = \begin{psmallmatrix}A_{1} & \zeromat & \cdots & \zeromat\\ \zeromat & A_{2} & \cdots & \zeromat \\ \vdots & \vdots & \ddots \vdots \\ \zeromat & \zeromat & \cdots & A_{k}\end{psmallmatrix}\) can be decomposed as \(A = (M_{1} \dcomp N_{1}) \tensor (M_{2} \dcomp N_{2}) \tensor \cdots \tensor (M_{k} \dcomp N_{k})\), where \(A_{j} = M_{j} \dcomp N_{j}\) is a rank factorisation as in \Cref{lemma:min-cut-is-rank}.

\begin{figure}[h!]
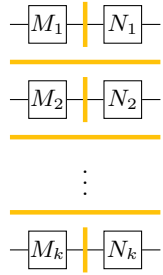

  \matrixDecompositionShapeFig{}
  \caption{Generic shape of an optimal decomposition in \(\Bialg\).}\label{fig:matrix-decomposition}
\end{figure}

The characterisation of the rank of a matrix in \Cref{lemma:min-cut-is-rank} hints at some relationship between the monoidal width of a matrix and its rank.
In fact, we have \Cref{prop:matrix-mwd-non-tensor-decomposables}, which bounds the monoidal width of a matrix with its rank.
In order to prove this result, we first need to bound the monoidal width of a matrix with its domain and codomain, which is done in \Cref{prop:mwd-lessthan-domain-codomain}.

\begin{prop}\label{prop:mwd-lessthan-domain-codomain}
  Let \(\cat{P}\) be a cartesian and cocartesian prop.
  Suppose that $\id{1}, \cp_1, \add_1, \delete_1,$ $\zero_1 \in \decgenerators$
  and \(\nodeweight(\id{1}) \leq 1\), \(\nodeweight(\cp_1) \leq 2\), \(\nodeweight(\add_1) \leq 2\), \(\nodeweight(\delete_1) \leq 1\) and \(\nodeweight(\zero_1) \leq 1\).
  Suppose that, for every \(g \colon 1 \to 1\), \(\mwd(g) \leq 2\).
  Let \(f \colon n \to m\) be a morphism in \(\cat{P}\).
  Then \(\mwd(f) \leq \min\{m,n\}+1\).
\end{prop}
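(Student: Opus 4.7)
The plan is to induct on $n+m$ and, reducing by duality, handle the case $n \leq m$ by peeling off one output at a time using \Cref{lemma:mwd-copy}.

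For the base cases $n = 0$ and $m = 0$, cocartesianity forces $f = \zero_1 \tensor \cdots \tensor \zero_1$ (with $m$ copies) and cartesianity forces $f = \delete_1 \tensor \cdots \tensor \delete_1$ (with $n$ copies), respectively; each admits a tensor-only decomposition of width $\leq 1$. When $n = m = 1$, the claimed bound is exactly the stated hypothesis on endomorphisms of $1$.

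For the inductive step with $n, m \geq 1$, note that the opposite prop $\cat{P}^{\mathrm{op}}$ is again cartesian and cocartesian, carries the same atoms and weights (with $\cp \leftrightarrow \add$ and $\delete \leftrightarrow \zero$), and monoidal width is self-dual since monoidal decompositions reverse; hence we may assume $n \leq m$. Using the cartesian structure, factor $f = \cp_n \dcomp (f_1 \tensor f_2)$ with $f_1 \colon n \to 1$ and $f_2 \colon n \to m-1$, and rewrite this as
\[ f \;=\; \gamma_{\overline{X}}(f_1) \dcomp_{n+1} (\id_1 \tensor f_2), \]
where $\overline{X}$ is the $n$-fold tensor of $1$, so that $\gamma_{\overline{X}}(f_1) = \cp_n \dcomp (f_1 \tensor \id_n) \colon n \to n+1$ is precisely the morphism of \Cref{lemma:mwd-copy} (with $Y = Z = 0$). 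By induction, $\mwd(f_1) \leq \min\{n,1\}+1 = 2$ and $\mwd(f_2) \leq \min\{n,m-1\}+1 \leq n+1$; fixing a decomposition $d_1$ of $f_1$ and applying \Cref{lemma:mwd-copy} with every $X_i = 1$ yields $\decwidth(\copyMdec_{\overline{X}}(d_1)) \leq \max\{\mwd(f_1), n+1\} = n+1$. The $\dcomp_{n+1}$ cut contributes $\nodeweight(n+1) = n+1$, and $\mwd(\id_1 \tensor f_2) \leq \max\{\nodeweight(\id_1), \mwd(f_2)\} \leq n+1$. Hence $\mwd(f) \leq n+1 = \min\{n,m\}+1$.

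The main subtlety is the asymmetry between $n$ and $m$ in the copy-based construction: peeling off an output via $\cp$ alone yields the bound $n+1$, so one genuinely needs the duality reduction (or equivalently a symmetric $\add$-version of \Cref{lemma:mwd-copy}) to obtain $\min\{n,m\}+1$ rather than $n+1$ in general.
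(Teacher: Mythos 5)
Your proof is correct and follows essentially the same route as the paper's: peel off one output of $f$ via the cartesian copy structure, control the resulting copy morphism with \Cref{lemma:mwd-copy}, and dispose of the case $m<n$ by passing to $\cat{P}^{\mathrm{op}}$. The only difference is cosmetic: inducting on $n+m$ rather than on $\max\{n,m\}$ lets you avoid the paper's separate treatment of the case $n=m$ (where, since $\max$ does not decrease for the piece $n \to m-1$, the paper must invoke its dual case on the pieces instead of the induction hypothesis), which is a mild streamlining.
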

\begin{proof}
  We proceed by induction on \(k = \max\{m,n\}\).
  There are three base cases.
  \begin{itemize}
    \item If \(n = 0\), then \(f = \zero_m\) because \(0\) is initial by hypothesis, and we can compute its width, \(\mwd(f) = \mwd(\Tensor_m \zero_1) \leq \nodeweight(\zero_1) \leq 1 \leq 0 + 1 \).
    \item If \(m = 0\), then \(f = \delete_n\) because \(0\) is terminal by hypothesis, and we can compute its width, \(\mwd(f) = \mwd(\Tensor_m \delete_1) \leq \nodeweight(\delete_1) \leq 1 \leq 0 + 1 \).
    \item If \(m = n = 1\), then \(\mwd(f) \leq 2 \leq 1+1\) by hypothesis.
  \end{itemize}
  For the induction steps, suppose that the statement is true for any \(f' \colon n' \to m'\) with \(\max\{m',n'\} < k = \max\{m,n\}\) and \(\min \{m',n'\} \geq 1\).
  There are three possibilities.
  \begin{enumerate}
    \item\label{item:1-proof} If \(0 < n < m = k\), then \(f\) can be decomposed as shown below because \(\cp_{n+1}\) is uniform and morphisms are copiable because \(\cat{P}\) is cartesian by hypothesis.
      \[\mwdMatlessthanBoundariesOne{}\]
      This corresponds to \(f = \cp_n \dcomp (\id{n} \tensor h_1) \dcomp_{n+1} (h_2 \tensor \id{1})\), where \(h_1 \defn f \dcomp (\delete_{m-1} \tensor \id{1}) \colon n \to 1\) and \(h_2 \defn f \dcomp (\id{m-1} \tensor \delete_1) \colon n \to m-1\).

      Then, \(\mwd(f) \leq \max\{\mwd(\cp_n \dcomp (\id{n} \tensor h_1)),n+1,\mwd(h_2 \tensor \id{1})\}\).
      So, we want to bound the monoidal width of the two morphisms appearing in the formula above.
      For the first morphism, we apply the induction hypothesis because \(h_1 \colon n \to 1\) and \(1 , n < k\).
      For the second morphism, we apply the induction hypothesis because \(h_2 \colon n \to m-1\) and \(n, m-1 < k\).
      \begin{align*}
        & \mwd(\cp_n \dcomp (\id{n} \tensor h_1)) && \mwd(h_2 \tensor \id{1}) \\
        \explain[\leq]{by \Cref{lemma:mwd-copy}} & \explain{by \Cref{def:mwd}}\\
        & \max\{\mwd(h_1),n+1\} && \mwd(h_2) \\
        \explain[\leq]{by induction hypothesis} & \explain[\leq]{by induction hypothesis}\\
        & \max\{\min\{n,1\}+1,n+1\} && \min\{n,m-1\}+1\\
        \explain{because \(0 < n\)} & \explain{because \(n \leq m-1\)}\\
        & n+1 && n+1
      \end{align*}
      Then, \(\mwd(f) \leq n+1 = \min\{m,n\}+1\) because \(n<m\).
    \item\label{item:2-proof} If \(0<m < n=k\), we can apply \Cref{item:1-proof} to \(\cat{P}\op\) with the same assumptions on the set of atoms because \(\cat{P}\op\) is also cartesian and cocartesian.
      We obtain that \(\mwd(f) \leq m + 1 = \min\{m,n\}+1\) because \(m < n\).
    \item If \(0 < m = n = k\), \(f\) can be decomposed as in \Cref{item:1-proof} and, instead of applying the induction hypothesis to bound \(\mwd(h_1)\) and \(\mwd(h_2)\), one applies \Cref{item:2-proof}.
      Then, \(\mwd(f) \leq m+1 = \min\{m,n\}+1\) because \(m = n\).
      \qedhere
  \end{enumerate}
\end{proof}

We can apply the former result to \(\Bialg\) and obtain \Cref{prop:matrix-mwd-non-tensor-decomposables} because the width of \(1 \times 1\) matrices, which are numbers, is at most \(2\).
This follows from the reasoning in~\Cref{ex:mwd-number-like-morphisms} as we can write every natural number \(k \colon 1 \to 1\) as the following composition:
\begin{center}
  \mondecnumbersFig{}
\end{center}
\begin{lem}\label{lemma:mwd-numbers}
  Let \(k \colon 1 \to 1\) in \(\Bialg\).
  Then, \(\mwd(k) \leq 2\).
\end{lem}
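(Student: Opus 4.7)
The plan is to proceed by strong induction on $k \in \naturals$, using the bialgebra structure to split $k$ into $k_1 + k_2$ with $k_1, k_2 < k$ and build a width-$2$ decomposition recursively; this is precisely the pattern of \Cref{ex:mwd-number-like-morphisms} with $f = \cp_1$ and $g = \add_1$.

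For the base cases, when $k = 0$ the morphism factors as $\delete_1 \dcomp_0 \zero_1$, whose monoidal decomposition $\nodegenerator{\leafgenerator{\delete_1}}{\dcomp_0}{\leafgenerator{\zero_1}}$ has width $\max\{\nodeweight(\delete_1), \nodeweight(0), \nodeweight(\zero_1)\} = \max\{1, 0, 1\} = 1$ by \Cref{def:weight-bialgebra}. When $k = 1$ the morphism is $\id{1}$, which is an atom of weight $1$.

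For the inductive step with $k \geq 2$, write $k = k_1 + k_2$ with $k_1, k_2 \in \{1, \ldots, k-1\}$ (for instance $k_1 = \lfloor k/2 \rfloor$, $k_2 = \lceil k/2 \rceil$). Under $\BialgToMat$, the identity $\cp_1 \dcomp (k_1 \tensor k_2) \dcomp \add_1 = k$ holds in $\MatN$ since $(1 \ 1) \begin{psmallmatrix} k_1 & 0 \\ 0 & k_2 \end{psmallmatrix} \begin{psmallmatrix} 1 \\ 1 \end{psmallmatrix} = k_1 + k_2$. By the induction hypothesis there are monoidal decompositions $d_i \in \decset{k_i}$ with $\decwidth(d_i) \leq 2$, so the decomposition
\[
d \defn \nodegenerator{\leafgenerator{\cp_1}}{\dcomp_2}{\nodegenerator{\nodegenerator{d_1}{\tensor}{d_2}}{\dcomp_2}{\leafgenerator{\add_1}}}
\]
of $k$ has width
\[
\decwidth(d) = \max\{\nodeweight(\cp_1),\, \nodeweight(2),\, \decwidth(d_1),\, \decwidth(d_2),\, \nodeweight(2),\, \nodeweight(\add_1)\} \leq 2,
\]
using $\nodeweight(\cp_1) = \nodeweight(\add_1) = \nodeweight(2) = 2$ from \Cref{def:weight-bialgebra}.

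There is no real obstacle here: once one observes that the bialgebra generators $\cp_1$ and $\add_1$ play the roles of $f$ and $g$ in \Cref{ex:mwd-number-like-morphisms} and that every natural number admits an additive split, the construction is immediate. The only small point of care is handling the base cases $k = 0, 1$ separately, since the recursive split requires $k_1, k_2 \geq 1$, which forces $k \geq 2$.
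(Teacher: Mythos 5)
Your proof is correct and follows essentially the same route as the paper, which likewise justifies the lemma by writing \(k\) as \(\cp_1 \dcomp (k_1 \tensor k_2) \dcomp \add_1\) in the pattern of \Cref{ex:mwd-number-like-morphisms} (the paper only sketches this via a figure, whereas you spell out the induction and the base cases \(k=0,1\) explicitly). No gaps.
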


\begin{prop}\label{prop:matrix-mwd-non-tensor-decomposables}
  Let \(f \colon n \to m\) in \(\Bialg\).
  Then, \(\mwd f \leq \rank(\BialgToMat f) +1\).
  Moreover, if \(f\) is not \(\tensor\)-decomposable, i.e.~there are no \(f_1,f_2\) both distinct from \(f\) s.t. \(f = f_1 \tensor f_2\), then \(\rank(\BialgToMat f) \leq \mwd f\).
\end{prop}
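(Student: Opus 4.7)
The plan is to establish the two inequalities separately. The upper bound will follow from a rank factorisation of $\BialgToMat(f)$ combined with the general width bound in terms of domain/codomain size. The lower bound, restricted to non-$\tensor$-decomposable $f$, proceeds by induction on the size of an optimal monoidal decomposition.

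For $\mwd(f) \leq \rank(\BialgToMat f) + 1$, let $r = \rank(\BialgToMat f)$. By Lemma~\ref{lemma:min-cut-is-rank} there exist $B \in \MatN(r,n)$ and $C \in \MatN(m,r)$ with $\BialgToMat(f) = C \cdot B$; let $f_{B} \colon n \to r$ and $f_{C} \colon r \to m$ be the corresponding morphisms in $\Bialg$, so $f = f_{B} \dcomp f_{C}$. Combining decompositions of $f_{B}$ and $f_{C}$ into $\nodegenerator{d_{B}}{\dcomp_{r}}{d_{C}}$ decomposes $f$ with width $\max\{\mwd(f_{B}), r, \mwd(f_{C})\}$. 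I then apply Proposition~\ref{prop:mwd-lessthan-domain-codomain} to $\Bialg$, which is both cartesian and cocartesian (biproducts of free commutative monoids are at once categorical products and coproducts): its atom hypotheses hold because every element of $\decgenerators$ has weight at most $2$, and the hypothesis $\mwd(g) \leq 2$ for $g \colon 1 \to 1$ is precisely Lemma~\ref{lemma:mwd-numbers}. This yields $\mwd(f_{B}) \leq \min\{n,r\} + 1 \leq r + 1$ and symmetrically $\mwd(f_{C}) \leq r + 1$, so the overall width of $\nodegenerator{d_{B}}{\dcomp_{r}}{d_{C}}$ is at most $r + 1$.

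For $\rank(\BialgToMat f) \leq \mwd(f)$ when $f$ is not $\tensor$-decomposable, I induct on the size of a monoidal decomposition $d$ of $f$ realising $\mwd(f)$. If $d = \leafgenerator{g}$, then $f = g$ is one of the six generators $\{\cp_{1}, \delete_{1}, \add_{1}, \zero_{1}, \swap{1,1}, \id{1}\}$, and case-by-case inspection confirms $\rank(\BialgToMat g) \leq \nodeweight(g)$ in every case. If $d = \nodegenerator{d_{1}}{\dcomp_{k}}{d_{2}}$ with $f = f_{1} \dcomp f_{2}$ through an object of weight $k$, then $\BialgToMat(f)$ factors through a $k$-dimensional space, so Lemma~\ref{lemma:min-cut-is-rank} gives $\rank(\BialgToMat f) \leq k \leq \decwidth(d)$. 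If $d = \nodegenerator{d_{1}}{\tensor}{d_{2}}$ with $f = f_{1} \tensor f_{2}$, the non-$\tensor$-decomposability assumption forces one of the $f_{i}$, say $f_{1}$, to equal $f$; then $d_{1}$ is a strictly smaller decomposition of the same non-$\tensor$-decomposable morphism $f$ with $\decwidth(d_{1}) \leq \decwidth(d)$, and the induction hypothesis yields $\rank(\BialgToMat f) \leq \decwidth(d_{1}) \leq \decwidth(d)$.

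The main subtlety lies in the role of the non-$\tensor$-decomposability hypothesis in the lower bound: without it, the $\tensor$-case would split both the morphism and its rank into independent blocks of ranks summing to $\rank(\BialgToMat f)$, while the width only takes the maximum of the two sub-widths, so $\rank$ could genuinely exceed width (consider $f = g \tensor g$ with $\rank(g)$ large). The non-$\tensor$-decomposability precisely rules out this scenario, allowing the induction to proceed down to an atomic or sequentially composed decomposition where the rank is manifestly controlled by the width.
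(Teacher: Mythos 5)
Your proof is correct and follows essentially the same route as the paper: the upper bound via a rank factorisation $f = f_B \dcomp f_C$ combined with \Cref{prop:mwd-lessthan-domain-codomain} (whose hypotheses are discharged by \Cref{lemma:mwd-numbers}), and the lower bound by a case analysis on the root node of the decomposition, using \Cref{lemma:min-cut-is-rank} at composition nodes and direct inspection of the atoms at leaves. The one difference is that you explicitly treat a $\tensor$-node at the root, where non-$\tensor$-decomposability forces one factor to equal $f$ and the other to be $\id{0}$, so that induction on the size of the decomposition applies; the paper's proof silently asserts that only the leaf and composition cases can occur, so your version is in fact slightly more complete.
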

\begin{proof}
  We prove the second inequality.
  Let \(d\) be a monoidal decomposition of \(f\).
  By hypothesis, \(f\) is non \(\tensor\)-decomposable.
  Then, there are two options.
  \begin{enumerate}
    \item If the decomposition is just a leaf, \(d = \leafgenerator{f}\), then \(f\) must be an atom.
          We can check the inequality for all the atoms: \(\nodeweight(\swap{}) = 2 \geq \rank(\BialgToMat f) = 2\), \(\nodeweight(\cp_1) = \nodeweight(\add_1) = 2  \geq \rank(\BialgToMat f) = 1\) or \(\nodeweight(\delete_1) = \nodeweight(\zero_1) = 1 \geq \rank(\BialgToMat f) = 0\).
          Then, \(\decwidth(d) = \nodeweight(f) \geq \rank(\BialgToMat f)\).
    \item If \(d = \nodegenerator{d_1}{\dcomp_k}{d_2}\), then there are \(g \colon n \to k\) and \(h \colon k \to m\) such that \(f = g \dcomp h\).
      By \Cref{lemma:min-cut-is-rank}, \(k \geq \rank(\BialgToMat f)\).
      Then, \(\decwidth(d) \geq k \geq \rank(\BialgToMat f)\).
  \end{enumerate}
  We prove the first inequality.
  By \Cref{lemma:min-cut-is-rank}, there are \(g \colon n \to r\) and \(h \colon r \to m\) such that \(f = g \dcomp h\) with \(r = \rank(\BialgToMat f)\).
  Then, \(r \leq m,n\) by definition of rank.
  By \Cref{lemma:mwd-numbers}, we can apply \Cref{prop:mwd-lessthan-domain-codomain} to obtain that \(\mwd(g) \leq \min\{n,r\}+1 = r+1\) and \(\mwd(h) \leq \min\{m,r\}+1 = r+1\).
  Then, \(\mwd(f) \leq \max\{\mwd(g),r,\mwd(h)\} \leq r+1\).
\end{proof}

The bounds given by \Cref{prop:matrix-mwd-non-tensor-decomposables} can be improved when we have a \(\tensor\)-decomposition of a matrix, i.e. we can write \(f = f_{1} \tensor \dots \tensor f_{k}\), to obtain \Cref{prop:matrices-better-tensors-first}.
The latter relies on \Cref{lemma:mwd-after-discard}, which shows that discarding inputs or outputs cannot increase the monoidal width of a morphism in \(\Bialg\).

\begin{lem}\label{lemma:mwd-after-discard}
  Let \(f \colon n \to m\) in \(\Bialg\) and \(d \in \decset{f}\).
  Let \(f_{D} \defn f \dcomp (\id{m-k} \tensor \delete_k)\) and \(f_{Z} \defn (\id{n-k'} \tensor \zero_{k'}) \dcomp f\), with \(k \leq m\) and \(k' \leq n\).
  \[\mwdafterdiscardStateFig{}\]
  Then there are \(\deleteMdec(d) \in \decset{f_{D}}\) and \(\codeleteMdec(d) \in \decset{f_{Z}}\) such that \(\decwidth(\deleteMdec(d)) \leq \decwidth(d)\) and \(\decwidth(\codeleteMdec(d)) \leq \decwidth(d)\).
\end{lem}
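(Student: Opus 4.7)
The plan is to proceed by structural induction on the monoidal decomposition $d$, constructing $\deleteMdec(d)$ explicitly; the construction of $\codeleteMdec(d)$ is entirely dual, exchanging $\delete_{1}, \cp_{1}$ for $\zero_{1}, \add_{1}$ and reversing the composition order, and we treat only the former in detail.

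For the inductive step, observe that discard commutes with both composition and tensor. If $d = \nodegenerator{d_{1}}{\dcomp_{X}}{d_{2}}$ with $f = f_{1} \dcomp_{X} f_{2}$, the discard interacts only with $f_{2}$, so the induction hypothesis applied to $d_{2}$ yields a decomposition $\deleteMdec(d_{2})$ of $f_{2} \dcomp (\id{m-k} \tensor \delete_{k})$ with $\decwidth(\deleteMdec(d_{2})) \leq \decwidth(d_{2})$, and we set $\deleteMdec(d) \defn \nodegenerator{d_{1}}{\dcomp_{X}}{\deleteMdec(d_{2})}$. If $d = \nodegenerator{d_{1}}{\tensor}{d_{2}}$ with $f_{i} \colon n_{i} \to m_{i}$, there are two subcases depending on the partition of the $k$ discarded outputs. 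When $k \leq m_{2}$ the discard stays inside $f_{2}$ and we set $\deleteMdec(d) \defn \nodegenerator{d_{1}}{\tensor}{\deleteMdec(d_{2})}$. When $k > m_{2}$ we use naturality of $\tensor$ to rewrite
\[f_{D} = \bigl(f_{1} \dcomp (\id{m_{1} - k + m_{2}} \tensor \delete_{k - m_{2}})\bigr) \tensor \bigl(f_{2} \dcomp \delete_{m_{2}}\bigr),\]
invoking the induction hypothesis on both branches with parameters $k - m_{2}$ and $m_{2}$ respectively, to obtain $\deleteMdec(d) \defn \nodegenerator{\deleteMdec(d_{1})}{\tensor}{\deleteMdec(d_{2})}$. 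In every case the width of the resulting tree is bounded above by $\max\{\decwidth(d_{1}), \decwidth(d_{2})\} = \decwidth(d)$, using the weight conventions of \Cref{def:weight-bialgebra}.

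The base case $d = \leafgenerator{f}$ requires a case analysis on the atoms $f \in \{\id{1}, \cp_{1}, \add_{1}, \delete_{1}, \zero_{1}, \swap{1,1}\}$. Each $f_{D}$ simplifies, by the bialgebra laws, to an expression whose cheapest decomposition has width at most $\nodeweight(f)$. For instance, $\cp_{1} \dcomp (\id{1} \tensor \delete_{1}) = \id{1}$ by the counit law (width $1 \leq 2 = \nodeweight(\cp_{1})$); $\add_{1} \dcomp \delete_{1} = \delete_{1} \tensor \delete_{1}$ (width $1 \leq 2$); $\swap{1,1} \dcomp (\id{1} \tensor \delete_{1}) = \delete_{1} \tensor \id{1}$ (width $1 \leq 2$); and the residual cases ($k = 0$ or $f \in \{\delete_{1}, \zero_{1}\}$) are immediate.

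The only non-routine point is the $\tensor$-node case where $k$ straddles the partition of the outputs, which demands splitting the discard parameter across the two branches and applying the induction hypothesis twice. Once one notes that discard commutes transparently with both monoidal operations, the construction is essentially forced and the width bound follows directly from the inductive hypothesis together with \Cref{def:weight-bialgebra}.
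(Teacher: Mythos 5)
Your proposal is correct and follows essentially the same route as the paper's proof: induction on the decomposition tree with the same splitting of the discard parameter in the tensor case ($k \leq m_2$ versus $k > m_2$), the same atom-by-atom base case (which the paper presents as a table), and the same appeal to the self-duality of $\Bialg$ to handle $f_{Z}$.
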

\begin{proof}
  We show the inequality for \(f_{D}\) by induction on the decomposition \(d\).
  The inequality for \(f_{Z}\) follows from the fact that \(\Bialg\) coincides with its opposite category.
  If the decomposition has only one node, \(d = \leafgenerator{f}\), then \(f\) is an atom and we can check these cases by hand in the table below.
  The first column shows the possibilities for \(f\), while the second and third columns show the decompositions of \(f_{D}\) for \(k = 1\) and \(k=2\).
  \[\mwdafterdiscardProofFigOne{}\]

  If the decomposition starts with a composition node, \(d = \nodegenerator{d_{1}}{\dcomp}{d_{2}}\), then \(f = f_1 \dcomp f_2\), with \(d_{i}\) monoidal decomposition of \(f_i\).
  \[\mwdafterdiscardProofFigTwo{}\]
  By induction hypothesis, there is a monoidal decomposition \(\deleteMdec(d_2)\) of \(f_2 \dcomp (\id{m-k} \tensor \delete_k)\) such that \(\decwidth(\deleteMdec(d_2)) \leq \decwidth(d_2)\).
  We use this decomposition to define a decomposition \(\deleteMdec(d) \defn \nodegenerator{d_1}{\dcomp}{\deleteMdec(d_2)}\) of \(f_{D}\).
  Then, \(\deleteMdec(d)\) is a monoidal decomposition of \(f \dcomp (\id{m-k} \tensor \delete_k)\) because \(f \dcomp (\id{m-k} \tensor \delete_k) = f_1 \dcomp f_2 \dcomp (\id{m-k} \tensor \delete_k)\).

  If the decomposition starts with a tensor node, \(d = \nodegenerator{d_1}{\tensor}{d_2}\), then \(f = f_1 \tensor f_2\), with \(d_i\) monoidal decomposition of \(f_i \colon n_i \to m_i\).
  There are two possibilities: either \(k \leq m_{2}\) or \(k > m_{2}\).
  If \(k \leq m_2\), then \(f \dcomp (\id{m-k} \tensor \delete_k) = f_1 \tensor (f_2 \dcomp (\id{m_2-k} \tensor \delete_k))\).
  \[\mwdafterdiscardProofFigThree{}\]
  By induction hypothesis, there is a monoidal decomposition \(\deleteMdec(d_2)\) of \(f_2 \dcomp (\id{m-k} \tensor \delete_k)\) such that \(\decwidth(\deleteMdec(d_2)) \leq \decwidth(d_2)\).
  Then, we can use this decomposition to define a decomposition \(\deleteMdec(d) \defn \nodegenerator{d_1}{\tensor}{\deleteMdec(d_2)}\) of \(f_{D}\).
  If \(k > m_2\), then \(f \dcomp (\id{m-k} \tensor \delete_k) = (f_1 \dcomp (\id{m_1 - k + m_2} \tensor \delete_{k-m_2})) \tensor (f_2 \dcomp \delete_{m_2})\).
  \[\mwdafterdiscardProofFigFour{}\]
  By induction hypothesis, there are monoidal decompositions \(\deleteMdec(d_i)\) of \(f_1 \dcomp (\id{m_1-k+m_2} \tensor \delete_{k-m_2})\) and \(f_2 \dcomp \delete_{m_2}\) such that \(\decwidth(\deleteMdec(d_i)) \leq \decwidth(d_i)\).
  Then, we can use these decompositions to define a monoidal decomposition \(\deleteMdec(d) \defn \nodegenerator{\deleteMdec(d_1)}{\tensor}{\deleteMdec(d_2)}\) of \(f_{D}\).
\end{proof}

\begin{prop}\label{prop:matrices-better-tensors-first}
  Let \(f \colon n \to m\) in \(\Bialg\) and \(d' = \nodegenerator{d'_1}{\dcomp_k}{d'_2} \in \decset{f}\).
  Suppose there are \(f_1\) and \(f_2\) such that \(f = f_1 \tensor f_2\).
  Then, there is \(d = \nodegenerator{d_1}{\tensor}{d_2} \in \decset{f}\) such that \(\decwidth(d) \leq \decwidth(d')\).
\end{prop}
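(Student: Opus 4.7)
The plan is to exploit the biproduct structure of $\Bialg$ to project the given composition-decomposition $d'$ onto each tensor factor. Write $d' = \nodegenerator{d'_1}{\dcomp_k}{d'_2}$ with $d'_1 \in \decset{g}$, $d'_2 \in \decset{h}$ and $f = g \dcomp h$, where $g \colon n \to k$ and $h \colon k \to m$. Let $f_i \colon n_i \to m_i$ so that $n = n_1 + n_2$ and $m = m_1 + m_2$. Since $f = f_1 \tensor f_2$ is block-diagonal, each block can be recovered from $f$ as $f_i = \iota_i^{\mathrm{in}} \dcomp f \dcomp \pi_i^{\mathrm{out}}$, where the injections $\iota_i^{\mathrm{in}}$ are built from $\id{}$ and $\zero$, and the projections $\pi_i^{\mathrm{out}}$ from $\id{}$ and $\delete$.

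First I would define the restricted morphisms $g_i \defn \iota_i^{\mathrm{in}} \dcomp g \colon n_i \to k$ and $h_i \defn h \dcomp \pi_i^{\mathrm{out}} \colon k \to m_i$, observing that $g_i \dcomp h_i = \iota_i^{\mathrm{in}} \dcomp f \dcomp \pi_i^{\mathrm{out}} = f_i$ by the block-diagonal shape of $f$. Then applying \Cref{lemma:mwd-after-discard} to $d'_1$ and $d'_2$ produces monoidal decompositions of $g_i$ and $h_i$ whose widths are bounded by $\decwidth(d'_1)$ and $\decwidth(d'_2)$ respectively. Assembling them gives $d_i \defn \nodegenerator{\codeleteMdec(d'_1)}{\dcomp_k}{\deleteMdec(d'_2)} \in \decset{f_i}$, with $\decwidth(d_i) \leq \max\{\decwidth(d'_1), k, \decwidth(d'_2)\} = \decwidth(d')$. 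Finally, set $d \defn \nodegenerator{d_1}{\tensor}{d_2}$; then $d \in \decset{f}$ and $\decwidth(d) = \max\{\decwidth(d_1), \decwidth(d_2)\} \leq \decwidth(d')$, as required.

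The only technical wrinkle I anticipate is that \Cref{lemma:mwd-after-discard} as stated only handles zeroing the \emph{last} block of inputs and deleting the \emph{last} block of outputs, so it applies directly to extract $g_2$ and $h_2$ but not $g_1$ and $h_1$. An analogous statement for the first block is needed; its proof is a straightforward variant of the inductive argument already carried out. Naively pre-composing with a swap to reuse the unmodified lemma does not work, since the inserted composition node along $n_1 + n_2$ (or $m_1 + m_2$) can be strictly larger than $\decwidth(d')$, breaking the desired bound.
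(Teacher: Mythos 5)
Your argument is correct, but it takes a genuinely different route from the paper's. The paper reasons via ranks: by \Cref{lemma:min-cut-is-rank} the cut in \(d'\) satisfies \(k \geq \rank(\BialgToMat f) = r_1 + r_2\), and \Cref{prop:matrix-mwd-non-tensor-decomposables} supplies \emph{fresh} decompositions \(d_i\) of \(f_i\) of width at most \(r_i + 1\); when both ranks are positive, \(r_1 + r_2 \geq \max\{r_1,r_2\}+1\) closes the bound, and the degenerate case \(r_1 = 0\) is handled separately by essentially the restriction argument you propose, via \Cref{lemma:mwd-after-discard}. You instead bypass ranks entirely and project the \emph{given} decomposition onto each block, keeping the cut along \(k\) inside each \(d_i\): since \(g_i \dcomp h_i = f_i\) (using that \(0\) is a zero object, so \(\zero \dcomp f_2 \dcomp \delete = \id{0}\)), each \(d_i\) has width at most \(\max\{\decwidth(d'_1),k,\decwidth(d'_2)\} = \decwidth(d')\), and the outer \(\tensor\)-node is free. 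This is uniform (no case split on zero ranks), more elementary, and less tied to the linear algebra of \(\Bialg\); the paper's construction, on the other hand, directly produces \(d_i\) of width \(\leq r_i+1\), though \Cref{cor:mwd-matrices-blocks} re-derives that bound from \Cref{prop:matrix-mwd-non-tensor-decomposables} anyway, so either construction serves the corollary. The one real obligation is the issue you flag yourself: \Cref{lemma:mwd-after-discard} as stated only zeroes or deletes the \emph{last} block of wires, so one of your two projections needs the mirror-image statement (in fact it is \(g_2 = (\zero_{n_1} \tensor \id{n_2}) \dcomp g\) and \(h_2 = h \dcomp (\delete_{m_1} \tensor \id{m_2})\) that need it, not \(g_1, h_1\) as you write --- an immaterial indexing slip). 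Your assessment of that variant is accurate: its proof is the same induction with the atom table and the tensor-node case split mirrored, and you are right that inserting a symmetry to reuse the lemma verbatim would introduce a cut along \(n\) or \(m\) and break the bound. With that variant recorded, the proof is complete.
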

\begin{proof}
  By hypothesis, \(d'\) is a monoidal decomposition of \(f\).
  Then, there are \(g\) and \(h\) such that \(f_1 \tensor f_2 = f = g \dcomp h\).
  By \Cref{prop:matrix-mwd-non-tensor-decomposables}, there are monoidal decompositions \(d_i\) of \(f_i\) with \(\decwidth(d_i) \leq r_i + 1\), where \(r_i \defn \rank(\BialgToMat f_i)\).
  By properties of the rank, \(r_1 + r_2 = \rank(\BialgToMat f)\) and, by \Cref{lemma:min-cut-is-rank}, \(\rank(\BialgToMat f) \leq k\).
  
  There are two cases: either both ranks are non-zero, or at least one is zero.
  If \(r_i > 0\), then \(r_1 + r_2 \geq \max\{r_1,r_2\}+1\).
  If there is \(r_i = 0\), then \(f_i = \delete \dcomp_0 \zero\) and we may assume that \(f_1 = \delete \dcomp_0 \zero\).
  Then, we can express \(f_2\) in terms of \(g\) and \(h\).
  \[\propmatricesbettertensorsfirstproofFig{}\]
  By \Cref{lemma:mwd-after-discard}, \(\mwd((\id{} \tensor \zero) \dcomp g) \leq \mwd(g)\) and \(\mwd(h \dcomp (\id{} \tensor \delete)) \leq \mwd(h)\).
  We compute the widths of the decompositions in these two cases.
  \begin{align*}
    &\text{Case }r_{i} > 0 && \text{Case }r_{1}=0\\
    & \decwidth(d') && \decwidth(d') \\
    & = \max\{\decwidth(d'_1), k, \decwidth(d'_2)\} && = \max\{\decwidth(d'_1), k, \decwidth(d'_2)\}\\
    & \geq k && \geq \max\{\mwd(g), k, \mwd(h)\} \\
    & \geq \rank(\BialgToMat f) && \geq \max\{\mwd((\id{} \tensor \zero) \dcomp g), k, \mwd(h \dcomp (\id{} \tensor \delete))\} \\
    & = r_1 + r_2 && \geq \mwd(f_2)\\
    & \geq \max\{r_1,r_2\}+1 && = \decwidth(d_2)\\
    & \geq \max\{\decwidth(d_1), \decwidth(d_2)\} && = \decwidth(d)\\
    & = \decwidth(d) &&
  \qedhere
  \end{align*}
\end{proof}

We summarise \Cref{prop:matrices-better-tensors-first} and \Cref{prop:matrix-mwd-non-tensor-decomposables} in \Cref{cor:mwd-matrices-blocks}.

\begin{cor}\label{cor:mwd-matrices-blocks}
  Let \(f = f_{1} \tensor \dots \tensor f_{k}\) in \(\Bialg\).
  Then, \(\mwd(f) \leq \max_{i = 1,\dots,k} \rank(\BialgToMat(f_{i})) + 1\).
  Moreover, if \(f_i\) are not \(\tensor\)-decomposable, then \(\max_{i = 1,\dots,k} \rank(\BialgToMat(f_{i})) \leq \mwd f\).
\end{cor}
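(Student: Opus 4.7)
My plan is to handle the two inequalities separately; each will follow from the two preceding propositions combined with a short structural induction.

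For the upper bound, I would invoke \Cref{prop:matrix-mwd-non-tensor-decomposables} on each factor \(f_{i}\) to obtain a monoidal decomposition \(d_{i}\) of \(f_{i}\) of width at most \(\rank(\BialgToMat(f_{i})) + 1\), and assemble them into a single \(\tensor\)-tree \(d \defn d_{1} \tensor \cdots \tensor d_{k}\) (bracketed in any way). This is a valid decomposition of \(f\), and by the definition of width of a \(\tensor\)-node, \(\decwidth(d) = \max_{i} \decwidth(d_{i}) \leq \max_{i} \rank(\BialgToMat(f_{i})) + 1\).

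For the lower bound, I would assume each \(f_{i}\) is non-\(\tensor\)-decomposable, set \(R \defn \max_{i} \rank(\BialgToMat(f_{i}))\), and prove by structural induction on an arbitrary decomposition \(d\) of \(f\) that \(R \leq \decwidth(d)\); taking the minimum over \(d\) then yields \(R \leq \mwd(f)\). The leaf case reduces to the leaf case of \Cref{prop:matrix-mwd-non-tensor-decomposables}: since atoms are themselves non-\(\tensor\)-decomposable, at most one \(f_{i}\) is non-unit, and the bound on \(\nodeweight(f)\) from that proof directly gives \(R \leq \decwidth(d)\). The \(\dcomp_{\ell}\)-rooted case uses \Cref{lemma:min-cut-is-rank} together with the fact that, under \(\BialgToMat\), the monoidal product is block-diagonal concatenation and hence \(\rank(\BialgToMat(f)) = \sum_{i} \rank(\BialgToMat(f_{i}))\); consequently \(\decwidth(d) \geq \ell \geq \rank(\BialgToMat(f)) \geq R\). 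In the \(\tensor\)-rooted case \(d = \nodegenerator{d_{1}}{\tensor}{d_{2}}\) with \(f = g_{1} \tensor g_{2}\), the non-\(\tensor\)-decomposability of the \(f_{i}\) allows one to argue that the split is a contiguous prefix--suffix partition \(g_{1} = f_{1} \tensor \cdots \tensor f_{j}\) and \(g_{2} = f_{j+1} \tensor \cdots \tensor f_{k}\); the inductive hypothesis applied to \(d_{1}\) and \(d_{2}\) then gives \(\max_{i \leq j} \rank(\BialgToMat(f_{i})) \leq \decwidth(d_{1})\) and symmetrically for \(d_{2}\), whence \(R \leq \decwidth(d)\).

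The main obstacle I anticipate is justifying the contiguous prefix--suffix split in the \(\tensor\)-rooted case. The essential fact is that, under the isomorphism \(\BialgToMat \colon \Bialg \to \MatN\), a morphism is non-\(\tensor\)-decomposable exactly when the bipartite non-zero pattern of its matrix is connected; the connected components of the pattern of \(\BialgToMat(f)\) therefore form a uniquely determined partition of its rows and columns, and because \(\tensor\) in the prop respects the order of the factors (not just their multiset), any global split \(f = g_{1} \tensor g_{2}\) must consist of an initial segment of these components and its complement in the induced ordered list. Making this combinatorial claim precise within the categorical language of \(\Bialg\) is the step requiring most care; the remaining bookkeeping --- trivial \(\id_{0}\) factors and edge cases where \(j \in \{0,k\}\) --- is routine, since decomposing the monoidal unit contributes \(0\) to the width.
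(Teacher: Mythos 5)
Your proposal is correct. The upper bound is exactly the paper's argument: tensor together width-\((\rank(\BialgToMat(f_i))+1)\) decompositions of the factors supplied by \Cref{prop:matrix-mwd-non-tensor-decomposables}. Where you genuinely diverge is the lower bound. The paper disposes of it in one line --- \Cref{prop:matrix-mwd-non-tensor-decomposables} gives \(\rank(\BialgToMat(f_i)) \leq \mwd(f_i)\), ``and we obtain'' the bound on \(\mwd(f)\) --- implicitly relying on \Cref{prop:matrices-better-tensors-first} to massage a minimal decomposition of \(f\) into the form \(d_1 \tensor \cdots \tensor d_k\) with each \(d_i\) decomposing \(f_i\). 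Your structural induction over an arbitrary decomposition makes that missing step explicit: the leaf and \(\dcomp_{\ell}\) cases re-run the corresponding cases of \Cref{prop:matrix-mwd-non-tensor-decomposables} together with additivity of rank over diagonal blocks, and the \(\tensor\)-rooted case reduces to the alignment of the split \(f = g_1 \tensor g_2\) with the given factorisation. The one point you flag as delicate --- that any such split must be a contiguous prefix--suffix partition of the \(f_i\) --- is precisely \Cref{prop:mon-cat-unique-tensor-decomposition}, which the paper proves independently (from \(0\) being initial and terminal and the objects forming a unique factorisation monoid), so you may cite it rather than re-deriving it via connected components of the non-zero pattern; note also that the rank-zero degenerate factors you worry about (e.g.\ \(\delete_1 \tensor \zero_1 = \zero_1 \tensor \delete_1\)) are harmless for the lower bound, since they contribute \(0\) to the maximum. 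In short, your route costs a case analysis but buys a self-contained justification of an inference the paper leaves implicit.
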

\begin{proof}
  By \Cref{prop:matrices-better-tensors-first} there is a decomposition of \(f\) of the form \(d = \nodegenerator{d_1}{\tensor}{\cdots \nodegenerator{d_{k-1}}{\tensor}{d_k}}\), where we can choose \(d_i\) to be a minimal decomposition of \(f_i\).
  Then, \(\mwd(f) \leq \decwidth(d) = \max_{i = 1,\ldots,k} \decwidth(d_i)\).
  By \Cref{prop:matrix-mwd-non-tensor-decomposables}, \(\decwidth(d_i) \leq r_i +1\).
  Then, \(\mwd(f) \leq \max\{r_1,\ldots,r_k\}+1\).
  Moreover, if \(f_i\) are not \(\tensor\)-decomposable, \Cref{prop:matrix-mwd-non-tensor-decomposables} gives also a lower bound on their monoidal width: \(\rank(\BialgToMat(f_{i})) \leq \mwd f_i\); and we obtain that \(\max_{i = 1,\dots,k} \rank(\BialgToMat(f_{i})) \leq \mwd f\).
\end{proof}

The results so far show a way to construct efficient decompositions given a \(\tensor\)-decomposi\-tion of the matrix.
However, we do not know whether \(\tensor\)-decompositions are unique.
\Cref{prop:mon-cat-unique-tensor-decomposition} shows that every morphism in \(\Bialg\) has a unique \(\tensor\)-decomposition.

\begin{prop}\label{prop:mon-cat-unique-tensor-decomposition}
  Let \(\cat{C}\) be a monoidal category whose monoidal unit \(0\) is both initial and terminal, and whose objects are a unique factorisation monoid.
  Let \(f\) be a morphism in \(\cat{C}\).
  Then \(f\) has a unique \(\tensor\)-decomposition.
\end{prop}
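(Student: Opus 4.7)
The plan is to show that any two $\tensor$-decompositions of $f$ into $\tensor$-indecomposable components induce the same partition of the indecomposable factors of the domain and codomain, and that the components themselves are recoverable from $f$ by a canonical projection formula. Using unique factorisation of objects, I would write $A \cong A_1 \tensor \cdots \tensor A_n$ and $B \cong B_1 \tensor \cdots \tensor B_m$ with each $A_s, B_t$ indecomposable, so that any $\tensor$-decomposition $f = f_1 \tensor \cdots \tensor f_k$ determines a pair of partitions $(\mathcal{S}, \mathcal{T})$ of $\{1,\ldots,n\}$ and $\{1,\ldots,m\}$ paired block-by-block, with $f_i \colon A_{S_i} \to B_{T_i}$. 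Because $0$ is both initial and terminal, there are canonical maps $\bot_X \colon 0 \to X$ and $\top_X \colon X \to 0$; these combine with identities on the remaining factors to give canonical injections $\iota^A_S \colon A_S \to A$ and projections $\pi^B_T \colon B \to B_T$ for each block.

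The first step is a \emph{projection formula}: for any such $\tensor$-decomposition, $f_i = \iota^A_{S_i} \dcomp f \dcomp \pi^B_{T_i}$. Using functoriality of $\tensor$ and the middle-four interchange, this composite expands into $f_i$ tensored with off-diagonal terms of the form $\bot_{A_{S_j}} \dcomp f_j \dcomp \top_{B_{T_j}} \colon 0 \to 0$ for $j \neq i$; since $0$ is both initial and terminal, the endomorphism set of $0$ is the singleton $\{\id_0\}$, so these terms equal $\id_0$ and vanish under the tensor since $0$ is the unit. The second step, which I expect to be the main obstacle, is showing that the admissible pairs of partitions (those arising from some $\tensor$-decomposition of $f$) are closed under common refinement. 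Given two admissible pairs $(\mathcal{S}, \mathcal{T})$ and $(\mathcal{S}', \mathcal{T}')$, I would define new components on intersected blocks by $h_{i,j} \defn \iota^A_{S_i \cap S'_j} \dcomp f \dcomp \pi^B_{T_i \cap T'_j}$, and verify that $f = \Tensor_{i,j} h_{i,j}$ by composing both sides with the canonical projections and invoking the projection formula for each of the two original decompositions in turn. The care required here lies in handling empty intersection blocks (where the corresponding component becomes a morphism $0 \to 0$, forced to be $\id_0$) and in tracking the symmetries used to reorder tensor factors.

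Once closure under refinement is established, there is a unique finest admissible pair $(\mathcal{S}^*, \mathcal{T}^*)$, and uniqueness of the $\tensor$-decomposition into indecomposables follows quickly: any indecomposable decomposition with partition $(\mathcal{S}, \mathcal{T})$ must satisfy $(\mathcal{S}, \mathcal{T}) = (\mathcal{S}^*, \mathcal{T}^*)$, since otherwise refining with $(\mathcal{S}^*, \mathcal{T}^*)$ would exhibit a nontrivial $\tensor$-decomposition of some $f_i$, contradicting indecomposability. The components are then pinned down by the projection formula applied to this canonical partition, giving uniqueness up to the reordering of factors induced by the symmetry.
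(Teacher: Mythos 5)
Your approach is sound and genuinely different from the paper's. The paper argues by induction on the number of indecomposable factors: it peels off the last factor, expresses \(f_1 \tensor \cdots \tensor f_{m-1}\) in terms of the \(g_j\)'s by pre- and post-composing with the canonical maps into and out of \(0\) (exactly your projection formula), and invokes the induction hypothesis to match the remaining factors. You instead organise the same computation lattice-theoretically: admissible interval partitions of the unique object factorisations are closed under common refinement, hence have a minimum, and any indecomposable decomposition must sit at that minimum. Both arguments run on the same two engines --- unique factorisation of objects to align cut points, and the fact that \(\cat{C}(X,0)\) and \(\cat{C}(0,Y)\) are singletons so that off-diagonal terms collapse to \(\id{0}\) --- so neither is more elementary; yours isolates the reusable content (the projection formula and the refinement lemma) more cleanly, at the cost of the bookkeeping you anticipate around empty blocks and the consistent ordering of refined blocks. (That ordering does work out: a refined block with both domain and codomain parts nonempty has a forced position compatible with both original decompositions, and blocks with one part empty can be slotted in freely.)

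One step as you state it would not go through: you propose to verify \(f = \Tensor_{i,j} h_{i,j}\) ``by composing both sides with the canonical projections.'' Equality after pre- and post-composition with the maps \(\iota^A_S\) and \(\pi^B_T\) does not imply equality of morphisms, because these families are not jointly monic or epic in a general monoidal category whose unit is initial and terminal (they happen to be in \(\MatN\), but nothing in the hypotheses gives you that). The repair uses only tools you already have: fix one decomposition, say \(f = f_1 \tensor f_2 = g_1 \tensor g_2\) with the \(f\)-cut to the left of the \(g\)-cut on both domain and codomain, and show that each \(f_i\) itself splits along the cut contributed by the other decomposition. Concretely, the interchange computation \(f_2 = (\bot \tensor \id{}) \dcomp f \dcomp (\top \tensor \id{}) = \bigl((\bot \tensor \id{}) \dcomp g_1 \dcomp (\top \tensor \id{})\bigr) \tensor g_2\) exhibits \(f_2\) as a tensor product along the \(g\)-cut, and substituting back into \(f = f_1 \tensor f_2\) gives the threefold refinement directly, with no appeal to joint monicity; the crossing case and the general \(k \times l\) case follow by iterating the same manipulation. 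With that patch the refinement lemma holds, and the remainder of your argument --- existence of a finest admissible pair, the contradiction with indecomposability, and the components being pinned down by the projection formula --- is correct.
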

\begin{proof}
  Suppose \(f = f_1 \tensor \cdots \tensor f_m = g_1 \tensor \cdots \tensor g_n\) with \(f_i \colon X_i \to Y_i\) and \(g_j \colon Z_j \to W_j\) non \(\tensor\)-decomposables.
  Suppose \(m \leq n\) and proceed by induction on \(m\).
  If \(m=0\), then \(f = \id{0}\) and \(g_{i} = \id{0}\) for every \(i = 1,\dots,n\) because \(0\) is initial and terminal.

  Suppose that \(\bar{f} \defn f_{1} \tensor \dots \tensor f_{m-1}\) has a unique \(\tensor\)-decomposition.
  Let \(A_{1} \tensor \dots \tensor A_{\alpha}\) and \(B_{1} \tensor \dots \tensor B_{\beta}\) be the unique \(\tensor\)-decompositions of \(X_{1} \tensor \dots \tensor X_{m} = Z_{1} \tensor \dots \tensor Z_{n}\) and \(Y_{1} \tensor \dots \tensor Y_{m} = W_{1} \tensor \dots \tensor W_{n}\), respectively.
  Then, there are \(x \leq \alpha\) and \(y \leq \beta\) such that \(A_{1} \tensor \dots \tensor A_{x} = X_{1} \tensor \dots \tensor X_{m-1}\) and \(B_{1} \tensor \dots \tensor B_{y} = Y_{1} \tensor \dots \tensor Y_{m-1}\).
  Then, we can rewrite \(\bar{f}\) in terms of \(g_{i}\)s:
  \[\uniquetensordecProofOne{}\]
  By induction hypothesis, \(\bar{f}\) has a unique \(\tensor\)-decomposition, thus it must be that \(k = m-1\), for every \(i < m-1\) \(f_{i} = g_{i}\) and \(f_{m-1} = (\id{} \tensor \zero) \dcomp g_{k} \dcomp (\id{} \tensor \delete)\).
  Then, we can express \(f_{m}\) in terms of \(g_{m}, \dots, g_{n}\):
  \[\uniquetensordecProofTwo{}\]
  By hypothesis, \(f_{m}\) is not \(\tensor\)-decomposable and \(m \leq n\).
  Thus, \(n = m\), \(f_{m-1} = g_{m-1}\) and \(f_{m} = g_{m}\).
\end{proof}

Our main result in this section follows from~\Cref{cor:mwd-matrices-blocks} and~\Cref{prop:mon-cat-unique-tensor-decomposition}, which can be applied to \(\Bialg\) because \(0\) is both terminal and initial, and the objects, being a free monoid, are a unique factorisation monoid.

\begin{thm}\label{th:mwd-matrices}
  Let \(f = f_1 \tensor \ldots \tensor f_k\) be a morphism in \(\Bialg\) and its unique \(\tensor\)-decomposition given by \Cref{prop:mon-cat-unique-tensor-decomposition}, with \(r_i = \rank(\BialgToMat(f_i))\).
  Then \(\max\{r_1,\ldots,r_k\} \leq \mwd(f) \leq \max\{r_1,\ldots,r_k\}+1\).
\end{thm}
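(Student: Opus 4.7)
The plan is to show that Theorem~\ref{th:mwd-matrices} is essentially a repackaging of Corollary~\ref{cor:mwd-matrices-blocks} once we can guarantee that the factors \(f_i\) in the hypothesis are not further \(\tensor\)-decomposable. The route is therefore: first invoke Proposition~\ref{prop:mon-cat-unique-tensor-decomposition} to normalise the \(\tensor\)-decomposition, then invoke Corollary~\ref{cor:mwd-matrices-blocks} on this normal form to read off both bounds simultaneously.

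The first step is to verify the hypotheses of Proposition~\ref{prop:mon-cat-unique-tensor-decomposition} in the case \(\cat{C} = \Bialg\). The monoidal unit is the natural number \(0\); it is initial because the generator \(\zero_1 \colon 0 \to 1\) extends by monoidal product to a (necessarily unique, by the bialgebra axioms) arrow \(\zero_n = \zero_1^{\tensor n} \colon 0 \to n\), and dually terminal via \(\delete_1\). Moreover, the objects of \(\Bialg\) form the free monoid \((\naturals, +, 0)\) on one generator, which is a unique factorisation monoid (the only indecomposable element is \(1\), and every \(n\) factors uniquely as \(1 + \cdots + 1\)). Proposition~\ref{prop:mon-cat-unique-tensor-decomposition} then guarantees that the given \(\tensor\)-decomposition \(f = f_1 \tensor \cdots \tensor f_k\) with each \(f_i\) not \(\tensor\)-decomposable is the unique such decomposition of \(f\).

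The second step is immediate: Corollary~\ref{cor:mwd-matrices-blocks} applied to this decomposition yields the upper bound \(\mwd(f) \leq \max\{r_1,\ldots,r_k\} + 1\) unconditionally, and the lower bound \(\max\{r_1,\ldots,r_k\} \leq \mwd(f)\) precisely under the assumption that the \(f_i\) are not \(\tensor\)-decomposable, which is now satisfied. Concatenating these two inequalities gives the stated sandwich.

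There is no genuine obstacle at this stage; the technical content has already been absorbed into Corollary~\ref{cor:mwd-matrices-blocks} (via Propositions~\ref{prop:matrix-mwd-non-tensor-decomposables} and~\ref{prop:matrices-better-tensors-first}) and Proposition~\ref{prop:mon-cat-unique-tensor-decomposition}. The only point that deserves a brief sentence in the write-up is the observation that existence of a \(\tensor\)-decomposition into non \(\tensor\)-decomposables is automatic in \(\Bialg\) (induct on the cardinality of the domain plus codomain), so that Proposition~\ref{prop:mon-cat-unique-tensor-decomposition} indeed applies to every morphism and not vacuously.
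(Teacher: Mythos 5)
Your proposal matches the paper's own argument: the theorem is stated as an immediate consequence of \Cref{cor:mwd-matrices-blocks} and \Cref{prop:mon-cat-unique-tensor-decomposition}, the latter applying to \(\Bialg\) precisely because \(0\) is both initial and terminal and the objects form a free monoid, hence a unique factorisation monoid. Your additional remark on the existence of a decomposition into non-\(\tensor\)-decomposables is a reasonable (and harmless) elaboration of a point the paper leaves implicit.
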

Note that the identity matrix has monoidal width \(1\)
and twice the identity matrix has monoidal width \(2\), attaining both the upper and lower bounds for the monoidal width of a matrix.

\section{A monoidal algebra for rank width}\label{sec:mwd-rwd}
After having studied monoidal width in the monoidal category of matrices, we are ready to introduce the second monoidal category of \virgolette{open graphs}, which relies on matrices to encode the connectivity of graphs.
In this setting, we capture rank width: we show that instantiating monoidal width in this monoidal category of graphs is equivalent to rank width.

After recalling rank width in \Cref{sec:rwd}, we define the intermediate notion of inductive rank decomposition in \Cref{sec:rec-rank-dec}, and show its equivalence to that of rank decomposition.
As for branch decompositions, adding this intermediate step allows a clearer presentation of the correspondence between rank decompositions and monoidal decompositions.
\Cref{sec:prop-graph} recalls the categorical algebra of graphs with boundaries~\cite{chantawibul2015compositionalgraphtheory,networkGamesCSL}.
Finally, \Cref{sec:mwd-rwd-proof} contains the main result of the present section, which relates inductive rank decompositions, and thus rank decompositions, with monoidal decompositions.

Rank decompositions were originally defined for undirected graphs~\cite{oum2006rank-width}.
This motivates us to consider graphs rather than hypergraphs as in \Cref{sec:mwd-twd}.
As mentioned in \Cref{def:hyperedge-size}, a finite undirected graph is a finite undirected hypergraph with hyperedge size \(2\).
More explicitly,

\begin{defi}\label{def:adjacency-matrix}
  A \emph{graph} \(G = \mathgraph{E}{V}\) is given by a finite set of vertices \(V\), a finite set of edges \(E\) and an adjacency function \(\edgeendsfun \colon E \to \parti_{\leq 2}(V)\), where \(\parti_{\leq 2}(V)\) indicates the set of subsets of \(V\) with at most two elements.
  The same information recorded in the function \(\edgeendsfun\) can be encoded in an equivalence class of matrices, an \emph{adjacency matrix} \(\adjeqclass{G}\):
  the sum of the entries \((i,j)\) and \((j,i)\) of this matrix records the number of edges between vertex \(i\) and vertex \(j\); two adjacency matrices are equivalent when they encode the same graph, i.e. \(\adjeqclass{G} = \adjeqclass{H}\) iff \(G + \transpose{G} = H + \transpose{H}\).
\end{defi}

\subsection{Background: rank width}\label{sec:rwd}
Intuitively, rank width measures the amount of information needed to construct a graph by adding edges to a discrete graph.
Constructing a clique requires little information: we add an edge between any two vertices.
This is reflected in the fact that cliques have rank width \(1\).

Rank width relies on rank decompositions.
In analogy with branch decompositions, a rank decomposition records in a tree a way of iteratively partitioning the vertices of a graph.

\begin{defiC}[\cite{oum2006rank-width}]
  A \emph{rank decomposition} \((Y,r)\) of a graph \(G\) is given by a subcubic tree \(Y\) together with a bijection \(r \colon \leaves(Y) \to \vertices(G)\).
\end{defiC}

Each edge \(b\) in the tree \(Y\) determines a splitting of the graph: it determines a two partition of the leaves of \(Y\), which, through \(r\), determines a 2-partition \(\{A_b,B_b\}\) of the vertices of \(G\).
This corresponds to a splitting of the graph \(G\) into two subgraphs \(G_{1}\) and \(G_{2}\).
Intuitively, the order of an edge \(b\) is the amount of information required to recover \(G\) by joining \(G_{1}\) and \(G_{2}\).
Given the partition \(\{A_b,B_b\}\) of the vertices of \(G\), we can record the edges in \(G\) beween \(A_{b}\) and \(B_{b}\) in a matrix \(X_{b}\).
This means that, if \(v_{i} \in A_{b}\) and \(v_{j} \in B_{b}\), the entry \((i,j)\) of the matrix \(X_{b}\) is the number of edges between \(v_{i}\) and \(v_{j}\).

\begin{defi}[Order of an edge]\label{def:edge-order}
  Let \((Y,r)\) be a rank decomposition of a graph \(G\).
  Let \(b\) be an edge of \(Y\).
  The order of \(b\) is the rank of the matrix associated to it: \(\edgeorder(b) \defn \rank(X_{b})\).
\end{defi}

Note that the order of the two sets in the partition does not matter as the rank is invariant to transposition.
The width of a rank decomposition is the maximum order of the edges of the tree and the rank width of a graph is the width of its cheapest decomposition.

\begin{defi}[Rank width]
  Given a rank decomposition \((Y,r)\) of a graph \(G\), define its width as \(\decwidth(Y,r) \defn \max_{b \in \edges(Y)} \edgeorder(b)\).
  The \emph{rank width} of \(G\) is given by the min-max formula:
  \[\rankwidth(G) \defn \min_{(Y,r)} \decwidth(Y,r).\]
\end{defi}

\subsection{Graphs with dangling edges and inductive definition}\label{sec:rec-rank-dec}
We introduce graphs with dangling edges and inductive rank decomposition of them.
These decompositions are an intermediate notion between rank decompositions and monoidal decompositions.
Similarly to the definition of inductive branch decomposition (\Cref{sec:rec-branch-dec}), they add to rank decompositions the algebraic flavour of monoidal decompositions by using the inductive data type of binary trees to encode a decomposition.

Intuitively, a graph with dangling edges is a graph equipped with some extra edges that connect some vertices in the graph to some boundary ports.
This allows us to combine graphs with dangling edges by connecting some of their dangling edges.
Thus, the equivalence between rank decompositions and inductive rank decompositions formalises the intuition that a rank decomposition encodes a way of dividing a graph into smaller subgraphs by \virgolette{cutting} along some edges.

\begin{defi}\label{def:dangling-graph}
A \emph{graph with dangling edges} \(\Gamma = \danglinggraph{G}{B}\) is given by an adjacency matrix \(G \in \MatN(k,k)\) that records the connectivity of the graph and a matrix \(B \in \MatN(k,n)\) that records the ``dangling edges'' connected to \(n\) boundary ports.
We will sometimes write \(G \in \adjacency(\Gamma)\) and \(B = \boundary(\Gamma)\).
\end{defi}

\begin{exa}\label{ex:graph-dangling-edges-glueing}
  Two graphs with the same ports, as illustrated below, can be ``glued'' together:
  \[\danglinggraphIdeaExFigOne{} \quad\text{ glued with }\quad \danglinggraphIdeaExFigTwo{} \quad\text{ gives }\quad \danglinggraphIdeaExFigThree{}\]
\end{exa}

A rank decomposition is, intuitively, a recipe for decomposing a graph into its single-vertex subgraphs by cutting along its edges.
The cost of each cut is given by the rank of the adjacency matrix that represents it.
\[\cutrankExFig{}\]

Decompositions are elements of a tree data type, with nodes carrying subgraphs \(\Gamma'\) of the ambient graph \(\Gamma\).
In the following \(\Gamma'\) ranges over the non-empty subgraphs of \(\Gamma\):
\(T_{\Gamma} \ \Coloneqq \ \leafgenerator{\Gamma'} \ \mid \ \nodegenerator{T_{\Gamma}}{\Gamma'}{T_{\Gamma}}\).
Given $T\in T_\Gamma$, the label function $\labelling$ takes a decomposition and returns the graph with dangling edges at the root: $\labelling\nodegenerator{T_1}{\Gamma}{T_2} \defn \Gamma$ and \(\labelling(\leafgenerator{\Gamma}) \defn \Gamma\).

The conditions in the definition of inductive rank decomposition ensure that, by glueing \(\Gamma_1\) and \(\Gamma_2\) together, we get \(\Gamma\) back.

\begin{defi}\label{def:rec-rank-dec}
  Let \(\Gamma = \danglinggraph{G}{B}\) be a graph with dangling edges, where \(G \in  \MatN(k,k)\) and \(B \in \MatN(k,n)\).
  An \emph{inductive rank decomposition} of \(\Gamma\) is \(T \in T_{\Gamma}\) where either:
  \(\Gamma\) is empty and \(T = \emptydec\);
  or \(\Gamma\) has one vertex and \(T = \leafgenerator{\Gamma}\);
  or \(T = \nodegenerator{T_1}{\Gamma}{T_2}\) and \(T_i \in T_{\Gamma_i}\) are inductive rank decompositions of subgraphs \(\Gamma_i = \danglinggraph{G_i}{B_i}\) of \(\Gamma\) such that:
  \begin{itemize}
    \item The vertices are partitioned in two, \(\adjeqclass{G} = \adjeqclass{\begin{psmallmatrix} G_1 & C \\ \zeromat & G_2 \end{psmallmatrix}}\);
    \item The dangling edges are those to the original boundary and to the other subgraph, \(B_1 = (A_1 \mid C)\) and \(B_2 = (A_2 \mid \transpose{C})\), where \(B = \begin{psmallmatrix} A_1 \\ A_2 \end{psmallmatrix}\).
  \end{itemize}
\end{defi}

We will sometimes write \(\Gamma_i = \labelling(T_i)\), \(G_i = \adjacency(\Gamma_i)\) and \(B_i = \boundary(\Gamma_i)\).
We can always assume that the rows of \(G\) and \(B\) are ordered like the leaves of \(T\) so that we can actually split \(B\) horizontally to get \(A_1\) and \(A_2\).

\begin{rem}\label{rem:rwd-vs-twd}
  The perspective on rank width and branch width given by their inductive definitions emphasises an operational difference between them:
  a branch decompositon gives a recipe to construct a graph from its one-edge subgraphs by identifying some of their vertices;
  on the other hand, a rank decomposition gives a recipe to construct a graph from its one-vertex components by connecting some of their \virgolette{dangling} edges.
\end{rem}

\begin{defi}
  Let \(T = \nodegenerator{T_1}{\Gamma}{T_2}\) be an inductive rank decomposition of \(\Gamma = \danglinggraph{G}{B}\), with \(T_i\) possibly both empty.
  Define the \emph{width} of \(T\) inductively: if \(T\) is empty,
  \(\decwidth(\emptydec) \defn 0\);
  otherwise, \(\decwidth(T) \defn \max\{\decwidth(T_1), \decwidth(T_2), \rank(B)\}\).
  Expanding this expression, we obtain
  \[\decwidth(T) = \max_{T' \text{ full subtree of } T} \rank(\boundary(\labelling(T'))).\]
  The \emph{inductive rank width} of \(\Gamma\) is defined by the min-max formula  \(\rrankwidth(\Gamma) \defn \min_{T} \decwidth(T)\).
\end{defi}

We show that the inductive rank width of \(\Gamma = \danglinggraph{G}{B}\) is the same as the rank width of \(G\), up to the rank of the boundary matrix \(B\).

Before proving the upper bound for inductive rank width, we need a technical lemma that relates the width of a graph with that of its subgraphs and allows us to compute it \virgolette{globally}.

\begin{lem}\label{lemma:closed-expr-boundary}
  Let \(T\) be an inductive rank decomposition of \(\Gamma = \danglinggraph{G}{B}\).
  Let \(T'\) be a full subtree of \(T\) and \(\Gamma' \defn \labelling(T')\) with \(\Gamma'= \danglinggraph{G'}{B'}\).
  The adjacency matrix of \(\Gamma\) can be written as \(\adjeqclass{G} = \adjeqclass{\begin{psmallmatrix} G_L & C_L & C \\ \zeromat & G' & C_R \\ \zeromat & \zeromat & G_R \end{psmallmatrix}}\) and its boundary as \(B = \begin{psmallmatrix} A_L \\ A' \\ A_R \end{psmallmatrix}\).
  Then, \(\rank(B') = \rank(A' \mid \transpose{C_L} \mid C_R)\).
\end{lem}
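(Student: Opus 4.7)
The plan is to prove this by structural induction on the decomposition tree $T$, with the key observation being that the columns of $B'$ (built up step by step as we descend from the root to $T'$ in $T$) match the columns of $(A' \mid \transpose{C_L} \mid C_R)$ up to permutation and duplication-by-zero, so the ranks coincide.

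The base cases are immediate: if $T = \leafgenerator{\Gamma}$ or, more generally, if $T' = T$ (so that $T'$ itself is the root), then $L$ and $R$ are empty, $C_L$ and $C_R$ are empty blocks, and $B' = B = A'$ directly from the statement. So I only need to address the case $T = \nodegenerator{T_1}{\Gamma}{T_2}$ with $T'$ a full subtree of $T_1$ (the case $T'$ full subtree of $T_2$ is symmetric).

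The inductive step is pure bookkeeping. By \Cref{def:rec-rank-dec}, the decomposition of $\Gamma$ at the root gives $\adjeqclass{G} = \adjeqclass{\begin{psmallmatrix}G_1 & D \\ \zeromat & G_2\end{psmallmatrix}}$, $B = \begin{psmallmatrix}A_1 \\ A_2\end{psmallmatrix}$, and $B_1 = (A_1 \mid D)$. I apply the induction hypothesis to $T'$ regarded as a full subtree of $T_1$, obtaining a block decomposition $\adjeqclass{G_1} = \adjeqclass{\begin{psmallmatrix}G_L^{(1)} & C_L^{(1)} & C^{(1)} \\ \zeromat & G' & C_R^{(1)} \\ \zeromat & \zeromat & G_R^{(1)}\end{psmallmatrix}}$, a corresponding horizontal split $B_1 = \begin{psmallmatrix}A_L^{(1)} \\ A'^{(1)} \\ A_R^{(1)}\end{psmallmatrix}$, and the equality $\rank(B') = \rank(A'^{(1)} \mid \transpose{C_L^{(1)}} \mid C_R^{(1)})$. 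Splitting the rows of $D$ according to the three parts $L^{(1)}$, $\Gamma'$, $R^{(1)}$ of $\Gamma_1$ as $D = \begin{psmallmatrix}D_L \\ D' \\ D_R\end{psmallmatrix}$ and matching against $B_1 = (A_1 \mid D)$, I read off $A'^{(1)} = (A' \mid D')$, where $A'$ is the row-block of $A_1$ corresponding to $\Gamma'$ (which is exactly the $A'$ in the statement since $\Gamma_2$ contributes no rows to $\Gamma'$'s boundary from $A_2$).

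Now I reconcile this with the outer block decomposition of $\Gamma$ relative to $\Gamma'$: the new $L$ is $L^{(1)}$, the new $R$ is $R^{(1)} \sqcup \Gamma_2$, so $C_L = C_L^{(1)}$ and $C_R = (C_R^{(1)} \mid D')$. Plugging in,
\[
(A' \mid \transpose{C_L} \mid C_R) \;=\; (A' \mid \transpose{C_L^{(1)}} \mid C_R^{(1)} \mid D')
\]
which is a column permutation of $(A'^{(1)} \mid \transpose{C_L^{(1)}} \mid C_R^{(1)}) = (A' \mid D' \mid \transpose{C_L^{(1)}} \mid C_R^{(1)})$, hence has the same rank as $B'$ by the induction hypothesis. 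The symmetric case $T' \subseteq T_2$ proceeds the same way, using $B_2 = (A_2 \mid \transpose{D})$ so that the appearance of $\transpose{D'}$ in $B'$ fills in the $\transpose{C_L}$ slot rather than the $C_R$ slot.

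The main obstacle is purely notational: one has to be disciplined about how the three-way partition $L$, $\Gamma'$, $R$ in the outer decomposition refines the two-way partition $\Gamma_1, \Gamma_2$ at the root, and track consistently which subblock of $D$ ends up inside $C_L$, $C_R$, or $C$. Once this bookkeeping is in place, the argument is essentially that the matrix whose rank we want is obtained from the inductive one by reshuffling the same columns, so rank is preserved trivially.
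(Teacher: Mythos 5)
Your proposal is correct and follows essentially the same route as the paper's proof: structural induction on $T$, applying the induction hypothesis inside $\Gamma_1$ (whose boundary $B_1 = (A_1 \mid D)$ already carries the connection matrix to $\Gamma_2$), splitting the rows of $D$ according to the three-way partition so that the middle block $D'$ appears both in $A'^{(1)}$ and in the outer $C_R$, and concluding by invariance of rank under column permutation. The only cosmetic difference is that you treat $T' = T$ as an explicit base case (which the paper leaves implicit), and your bookkeeping of which block of $D$ lands in $C_L$ versus $C_R$ in the two symmetric cases matches the paper's.
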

\begin{proof}
  Proceed by induction on the decomposition tree \(T\).
  If it is just a leaf, \(T = \leafgenerator{\Gamma}\), then \(\Gamma\) has at most one vertex, and \(\Gamma' = \emptyset\) or \(\Gamma' = \Gamma\).
  In both cases, the desired equality is true.

  If \(T = \nodegenerator{T_1}{\Gamma}{T_2}\), then, by the definition of inductive rank decomposition, \(\labelling(T_i) = \Gamma_i = \danglinggraph{G_i}{B_i}\) with \(\adjeqclass{G} = \adjeqclass{\begin{psmallmatrix} G_1 & C \\ \zeromat & G_2 \end{psmallmatrix}}\), \(B = \begin{psmallmatrix} A_1 \\ A_2 \end{psmallmatrix}\), \(B_1 = (A_1 \mid C)\) and \(B_2 = (A_2 \mid \transpose{C})\).
  Suppose that \(T' \subtreeq T_1\).
  Then, we can write \(\adjeqclass{G_1} = \adjeqclass{\begin{psmallmatrix} G_L & C_L & D' \\ \zeromat & G' & D_R \\ \zeromat & \zeromat & H_R \end{psmallmatrix}}\), \(A_1 = \begin{psmallmatrix}A_L \\ A' \\ F_R\end{psmallmatrix}\) and \(C = \begin{psmallmatrix}E_L \\ E' \\ E_R\end{psmallmatrix}\).
  It follows that \(B_1 = \begin{psmallmatrix}A_L & E_L \\ A' & E' \\ F_R & E_R\end{psmallmatrix}\) and \(C_R = (D_R \mid E')\).
  By induction hypothesis, \(\rank(B') = \rank(A' \mid E' \mid \transpose{C_L} \mid D_R)\).
  The rank is invariant to permuting the order of columns, thus \(\rank(B') = \rank(A' \mid \transpose{C_L} \mid D_R \mid E') = \rank(A' \mid \transpose{C_L} \mid C_R)\).
  We proceed analogously if \(T' \subtreeq T_2\).
\end{proof}

The above result allows us to relate the width of rank decompositions, which is computed \virgolette{globally}, to the width of inductive rank decompositions, which is computed \virgolette{locally}, with the following bound.

\begin{prop}\label{prop:recursive-rwd-upper-bound}
  Let \(\Gamma = \danglinggraph{G}{B}\) be a graph with dangling edges and \((Y,r)\) be a rank decomposition of \(G\).
  Then, there is an inductive rank decomposition \(\toRecursiveDec(Y,r)\) of \(\Gamma\) such that \(\decwidth(\toRecursiveDec(Y,r)) \leq \decwidth(Y,r) + \rank(B)\).
\end{prop}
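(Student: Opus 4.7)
\medskip
\noindent\textbf{Proof plan.}
The plan is to mirror the structure of \Cref{lemma:rec-bwd-lower}: proceed by induction on the number of leaves of \(Y\) and build \(\toRecursiveDec(Y,r)\) recursively, then verify the width bound ``globally'' using \Cref{lemma:closed-expr-boundary} rather than trying to track it through the induction locally.

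First I would handle the base cases. If \(Y\) is empty then \(G\) has no vertices, \(\Gamma = \danglinggraph{\emptyset}{B}\) with \(B\) trivially of rank \(0\), and we set \(\toRecursiveDec(Y,r) \defn \emptydec\). If \(Y\) is a single leaf then \(G\) has one vertex and \(\toRecursiveDec(Y,r) \defn \leafgenerator{\Gamma}\); in this case \(\decwidth(\toRecursiveDec(Y,r)) = \rank(B) \leq \decwidth(Y,r) + \rank(B)\).

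For the inductive step, suppose \(Y\) has at least one internal edge. Pick such an edge \(e\), which splits \(Y\) into subcubic trees \(Y_1\) and \(Y_2\). The bijection \(r\) restricts to bijections \(r_i \colon \leaves(Y_i) \to V_i\) where \(\{V_1,V_2\}\) is a partition of \(\vertices(G)\). Using the block decomposition \(\adjeqclass{G} = \adjeqclass{\begin{psmallmatrix} G_1 & C \\ \zeromat & G_2 \end{psmallmatrix}}\) induced by this partition, and writing \(B = \begin{psmallmatrix}A_1 \\ A_2\end{psmallmatrix}\), I would define \(\Gamma_i \defn \danglinggraph{G_i}{B_i}\) with dangling edges \(B_1 \defn (A_1 \mid C)\) and \(B_2 \defn (A_2 \mid \transpose{C})\). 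Then \((Y_i, r_i)\) is a rank decomposition of \(G_i\) with \(\decwidth(Y_i, r_i) \leq \decwidth(Y,r)\) (since every edge of \(Y_i\) is an edge of \(Y\) and the cut it induces coincides), so the induction hypothesis yields inductive rank decompositions \(T_i \defn \toRecursiveDec(Y_i, r_i)\) of \(\Gamma_i\). I set \(\toRecursiveDec(Y,r) \defn \nodegenerator{T_1}{\Gamma}{T_2}\); by construction this satisfies \Cref{def:rec-rank-dec}.

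For the width bound I would not try to unfold \(\decwidth(T_i)\) recursively. Instead, using the ``global'' formula, it suffices to show that for every full subtree \(T'\) of \(T \defn \toRecursiveDec(Y,r)\), \(\rank(\boundary(\labelling(T'))) \leq \decwidth(Y,r) + \rank(B)\). By \Cref{lemma:closed-expr-boundary}, \(\rank(\boundary(\labelling(T'))) = \rank(A' \mid \transpose{C_L} \mid C_R)\) for the block decomposition of \(G\) determined by \(T'\). By subadditivity of rank under horizontal concatenation,
\[
\rank(A' \mid \transpose{C_L} \mid C_R) \;\leq\; \rank(A') + \rank(\transpose{C_L} \mid C_R) \;\leq\; \rank(B) + \rank(\transpose{C_L} \mid C_R).
\]
The matrix \((\transpose{C_L} \mid C_R)\) records precisely the edges of \(G\) between the vertex set \(r(\leaves(T'))\) and its complement, i.e.\ it is (up to row/column permutation and transposition, to which rank is invariant) the matrix \(X_b\) associated to the edge \(b\) of \(Y\) separating the leaves labelling \(T'\) from the rest; hence \(\rank(\transpose{C_L}\mid C_R) = \edgeorder(b) \leq \decwidth(Y,r)\). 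For the degenerate case \(T' = T\) one has \(C_L\) and \(C_R\) empty and the inequality becomes the trivial \(\rank(B) \leq \decwidth(Y,r) + \rank(B)\).

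The main obstacle I expect is the book-keeping needed to identify \((\transpose{C_L} \mid C_R)\) with the cut matrix \(X_b\) of the corresponding edge \(b \in \edges(Y)\) under a chosen ordering of the leaves; this requires recognising that the full subtrees of \(T\) are in bijection with the edges of \(Y\) (together with the root), and that the block decompositions of \(G\) supplied by \Cref{lemma:closed-expr-boundary} agree with the cuts induced by these edges. Once this correspondence is fixed, the rank bound drops out of subadditivity and invariance of rank under transposition.
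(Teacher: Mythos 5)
Your proposal is correct and follows essentially the same route as the paper's proof: the same induction on the tree, the same construction of the subgraphs \(\Gamma_i\) with boundaries \((A_1 \mid C)\) and \((A_2 \mid \transpose{C})\), and the same global width verification via \Cref{lemma:closed-expr-boundary}, bounding \(\rank(A' \mid \transpose{C_L} \mid C_R)\) by \(\rank(B) + \edgeorder(b')\) after identifying \((\transpose{C_L} \mid C_R)\) with the cut matrix of the corresponding edge of \(Y\). The book-keeping you flag (associating an edge of \(Y\) to each full subtree of \(T\)) is exactly what the paper carries through its induction.
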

\begin{proof}
  Proceed by induction on the number of edges of the decomposition tree \(Y\) to construct an inductive decomposition tree \(T\) in which every non-trivial full subtree \(T'\) has a corresponding edge \(b'\) in the tree \(Y\).
  Suppose \(Y\) has no edges, then either \(G = \emptyset\) or \(G\) has one vertex.
  In either case, we define an inductive rank decomposition with just a leaf labelled with \(\Gamma\), \(\toRecursiveDec(Y,r) \defn \leafgenerator{\Gamma}\).
  We compute its width by definition: \(\decwidth(\toRecursiveDec(Y,r)) \defn \rank(B) \leq \decwidth(Y,r) + \rank(B)\).

  If the decomposition tree has at least an edge, then it is composed of two subcubic subtrees, \(Y = Y_1 \overset{b}{\text{---}} Y_2\).
  Let \(V_i \defn r(\leaves(Y_i))\) be the set of vertices associated to \(Y_{i}\) and \(G_i \defn \subgraphon{G}{V_i}\) be the subgraph of \(G\) induced by the set of vertices \(V_i\).
  By induction hypothesis, there are inductive rank decompositions \(T_i\) of \(\Gamma_i = \danglinggraph{G_i}{B_i}\) in which every full subtree \(T'\) has an associated edge \(b'\).
  Associate the edge \(b\) to both \(T_{1}\) and \(T_{2}\) so that every subtree of \(T\) has an associated edge in \(Y\).
  We can use these decompositions to define an inductive rank decomposition \(T = \nodegenerator{T_1}{\Gamma}{T_2}\) of \(\Gamma\).
  Let \(T'\) be a full subtree of \(T\) corresponding to \(\Gamma' = \danglinggraph{G'}{B'}\).
  By \Cref{lemma:closed-expr-boundary}, we can compute the rank of its boundary matrix \(\rank(B') = \rank(A' \mid \transpose{C_{L}} \mid C_{R})\), where \(A'\), \(C_{L}\) and \(C_{R}\) are defined as in the statement of \Cref{lemma:closed-expr-boundary}.
  The matrix \(A'\) contains some of the rows of \(B\), then its rank is bounded by the rank of \(B\) and we obtain \(\rank(B') \leq \rank(B) + \rank(\transpose{C_{L}} \mid C_{R})\).
  The matrix \((\transpose{C_{L}} \mid C_{R})\) records the edges between the vertices in \(G'\) and the vertices in the rest of \(G\), which, by definition, are the edges that determine \(\edgeorder(b')\).
  This means that the rank of this matrix is the order of the edge \(b'\): \(\rank(\transpose{C_{L}} \mid C_{R}) = \edgeorder(b')\).
  With these observations, we can compute the width of \(T\).
  \begin{align*}
    & \decwidth(T) \\
    & = \max_{T' \subtreeq T} \rank(B')\\
    & = \max_{T' \subtreeq T} \rank(A'\mid \transpose{C_{L}} \mid C_{R})\\
    & \leq \max_{T' \subtreeq T} \rank(\transpose{C_{L}} \mid C_{R}) + \rank(B)\\
    & = \max_{b \in \edges(Y)} \edgeorder(b) + \rank(B)\\
    & \codefn \decwidth(Y,r) + \rank(B)
  \qedhere
  \end{align*}
\end{proof}

\begin{prop}\label{prop:recursive-rwd-lower-bound}
  Let \(T\) be an inductive rank decomposition of \(\Gamma = \danglinggraph{G}{B}\) with \(G \in \MatN(k,k)\) and \(B \in \MatN(k,n)\).
  Then, there is a rank decomposition \(\fromRecursiveDec(T)\) of \(G\) such that \(\decwidth(\fromRecursiveDec(T)) \leq \decwidth(T)\).
\end{prop}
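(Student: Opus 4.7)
The plan is to induct on the structure of the inductive rank decomposition $T$, turning it into a subcubic tree $Y$ by keeping its shape and identifying each leaf of $T$ (which carries a one-vertex subgraph) with a leaf of $Y$ labelled by that vertex via $r$. The width bound is then obtained not by directly transporting edge orders through the induction (the matrix attached to an edge of $Y_i$ is a submatrix of the one attached to the same edge inside $Y$, so the induction hypothesis on $\decwidth(Y_i,r_i)$ is too weak), but by a single global argument using \Cref{lemma:closed-expr-boundary}.

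For the base cases, if $T = \emptydec$ I set $\fromRecursiveDec(T) = \emptydec$, and if $T = \leafgenerator{\Gamma}$ with $\Gamma$ a single vertex $v$, I take $Y$ to be the one-node tree with $r$ sending its unique leaf to $v$. Both have no edges and thus width $0 \leq \decwidth(T)$. In the inductive step $T = \nodegenerator{T_1}{\Gamma}{T_2}$, I apply the hypothesis to obtain $(Y_i, r_i) = \fromRecursiveDec(T_i)$ for $G_i = \adjacency(\labelling(T_i))$. If some $T_i = \emptydec$, I take $(Y, r) = (Y_{3-i}, r_{3-i})$; otherwise I combine $Y_1$ and $Y_2$ by introducing a fresh internal node $v_0$ together with edges to the roots of $Y_1$ and $Y_2$, and define $r$ as the induced bijection from $\leaves(Y) = \leaves(Y_1) \sqcup \leaves(Y_2)$ onto $V = V_1 \sqcup V_2$. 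Subcubicness is preserved because each root had degree at most $2$ in its $Y_i$ and $v_0$ has degree $2$.

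For the width bound, each edge $b$ of $Y$ determines a full subtree $T_b$ of $T$ (the side below $b$ once $T$ is viewed as rooted at the topmost node). Its leaves are bijected by $r$ onto some $V_b \subseteq V$, and $\edgeorder(b)$ is by definition the rank of the connectivity matrix between $V_b$ and $V \setminus V_b$. Writing the block decomposition of $G$ with $\Gamma_b = \labelling(T_b) = \danglinggraph{G_b}{B_b}$ as the middle block, exactly as in the statement of \Cref{lemma:closed-expr-boundary}, this connectivity matrix coincides with $(\transpose{C_L} \mid C_R)$, while $\rank(B_b) = \rank(A_b \mid \transpose{C_L} \mid C_R)$. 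Since $(\transpose{C_L} \mid C_R)$ is a submatrix, $\edgeorder(b) \leq \rank(B_b) \leq \decwidth(T)$. Taking the maximum over edges of $Y$ yields $\decwidth(\fromRecursiveDec(T)) \leq \decwidth(T)$.

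The main obstacle is the bookkeeping required to identify each edge of $Y$ with the correct full subtree $T_b$ of $T$, and to check that the matrix $(\transpose{C_L} \mid C_R)$ produced by \Cref{lemma:closed-expr-boundary} really is the connectivity matrix $X_b$ of \Cref{def:edge-order} under the identification $r$ (this requires permuting rows and columns of $G$ to match the leaf ordering of $T$, which is harmless since $\adjeqclass{\cdot}$ is invariant under such reorderings). Once this identification is in place, the inequality $\edgeorder(b) \leq \rank(B_b)$ is a trivial submatrix observation and the result follows.
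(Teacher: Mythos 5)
Your proposal is correct and follows essentially the same route as the paper: take the underlying binary tree of \(T\) with the leaf bijection induced by the one-vertex labels, then bound each edge order globally via \Cref{lemma:closed-expr-boundary}, observing that \((\transpose{C_L} \mid C_R)\) is the cut matrix and a submatrix of the block whose rank equals \(\rank(B_b)\). The only cosmetic difference is that you build \(Y\) by structural induction with a fresh internal node at each step, whereas the paper simply takes the unlabelled tree underlying \(T\) outright; the width argument is identical.
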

\begin{proof}
  A binary tree is, in particular, a subcubic tree.
  Then, the rank decomposition corresponding to an inductive rank decomposition \(T\) can be defined by its underlying unlabelled tree \(Y\).
  The corresponding bijection \(r \colon \leaves(Y) \to \vertices(G)\) between the leaves of \(Y\) and the vertices of \(G\) can be defined by the labels of the leaves in \(T\):
  the label of a leaf \(l\) of \(T\) is a subgraph of \(\Gamma\) with one vertex \(v_l\) and these subgraphs need to give \(\Gamma\) when composed together.
  Then, the leaves of \(T\), which are the leaves of \(Y\), are in bijection with the vertices of \(G\): there is a bijection \(r \colon \leaves(Y) \to \vertices(G)\) such that \(r(l) \defn v_l\).
  Then, \((Y,r)\) is a branch decomposition of \(G\) and we can define \(\fromRecursiveDec(T) \defn (Y,r)\).

  By construction, the edges of \(Y\) are the same as the edges of \(T\) so we can compute the order of the edges in \(Y\) from the labellings of the nodes in \(T\).
  Consider an edge \(b\) in \(Y\) and consider its endpoints in \(T\): let \(\{v,v_b\} = \edgeends{b}\) with \(v\) parent of \(v_b\) in \(T\).
  The order of \(b\) is related to the rank of the boundary of the subtree \(T_b\) of \(T\) with root in \(v_b\).
  Let \(\labelling(T_b) = \Gamma_b = \danglinggraph{G_b}{B_b}\) be the subgraph of \(\Gamma\) identified by \(T_{b}\).
  We can express the adjacency and boundary matrices of \(\Gamma\) in terms of those of \(\Gamma_{b}\):
  \[\adjeqclass{G} = \adjeqclass{\begin{psmallmatrix} G_L & C_L & C \\ \zeromat & G_b & C_R \\ \zeromat & \zeromat & G_R \end{psmallmatrix}} \qquad \text{and} \qquad B = \begin{psmallmatrix}A_L \\ A' \\ A_R\end{psmallmatrix}.\]
  By \Cref{lemma:closed-expr-boundary}, the boundary rank of \(\Gamma_{b}\) can be computed by \(\rank(B_b) = \rank(A' \mid \transpose{C_L} \mid C_R)\).
  By definition, the order of the edge \(b\) is \(\edgeorder(b) \defn \rank(\transpose{C_L} \mid C_R)\), and we can bound it with the boundary rank of \(\Gamma_{b}\): \(\rank(B_b) \geq \edgeorder(b)\).
  These observations allow us to bound the width of the rank decomposition \(Y\) that corresponds to \(T\).
  \begin{align*}
    & \decwidth(Y,r) \\
    & \defn \max_{b \in \edges(Y)} \edgeorder(b)\\
    & \leq \max_{b \in \edges(Y)} \rank(B_b) \\
    & \leq \max_{T' \leq T} \rank(\boundary(\labelling(T')))\\
    & \codefn \decwidth(T)
    \qedhere
  \end{align*}
\end{proof}

Combining \Cref{prop:recursive-rwd-upper-bound} and \Cref{prop:recursive-rwd-lower-bound} we obtain:

\begin{prop}\label{prop:rec-rank-width-equivalent}
  Inductive rank width is equivalent to rank width.
\end{prop}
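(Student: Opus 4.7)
The plan is to deduce the statement immediately by taking minima on both sides of the pointwise inequalities established in Proposition~\ref{prop:recursive-rwd-upper-bound} and Proposition~\ref{prop:recursive-rwd-lower-bound}. Concretely, fix a graph with dangling edges \(\Gamma = \danglinggraph{G}{B}\). Proposition~\ref{prop:recursive-rwd-lower-bound} exhibits, for every inductive rank decomposition \(T\) of \(\Gamma\), a rank decomposition \(\fromRecursiveDec(T)\) of \(G\) satisfying \(\decwidth(\fromRecursiveDec(T)) \leq \decwidth(T)\); taking the minimum over \(T\) on the right and noting that the minimum over \(\fromRecursiveDec(T)\) bounds the minimum over all rank decompositions of \(G\) from above yields
\[\rankwidth(G) \;\leq\; \rrankwidth(\Gamma).\]
Dually, Proposition~\ref{prop:recursive-rwd-upper-bound} provides, for every rank decomposition \((Y,r)\) of \(G\), an inductive rank decomposition \(\toRecursiveDec(Y,r)\) of \(\Gamma\) with \(\decwidth(\toRecursiveDec(Y,r)) \leq \decwidth(Y,r) + \rank(B)\); the same min-max argument gives
\[\rrankwidth(\Gamma) \;\leq\; \rankwidth(G) + \rank(B).\]

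These two inequalities together establish the equivalence in the same additive sense as in the branch-width case (cf.\ Proposition~\ref{prop:rec-branch-width-equivalent}): when the boundary is trivial, i.e.\ \(B\) has rank \(0\), the two quantities coincide exactly, and in general they differ by at most a quantity, \(\rank(B)\), that depends only on the ambient graph with dangling edges and not on the choice of decomposition. I would state the proof in this form, emphasising that the additive slack comes solely from the boundary contribution already present in Proposition~\ref{prop:recursive-rwd-upper-bound}.

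I do not anticipate any technical obstacle here: the real work has already been discharged in the two preceding propositions, whose proofs rely on Lemma~\ref{lemma:closed-expr-boundary} to reconcile the ``global'' definition of the order of an edge in a rank decomposition with the ``local'' notion of boundary rank at a subtree of an inductive rank decomposition. The only bookkeeping step is to confirm that the assignments \(T \mapsto \fromRecursiveDec(T)\) and \((Y,r) \mapsto \toRecursiveDec(Y,r)\) are defined on all decompositions, so that taking minima is legitimate; this is clear from the constructions given.
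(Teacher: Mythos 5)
Your proposal is correct and takes essentially the same route as the paper, which likewise obtains the equivalence by simply combining \Cref{prop:recursive-rwd-upper-bound} and \Cref{prop:recursive-rwd-lower-bound}; the two inequalities you derive by taking minima, with the additive slack \(\rank(B)\) coming only from the boundary, are exactly what the paper intends.
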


\subsection{A prop of graphs}\label{sec:prop-graph}
Here we recall the algebra of graphs with boundaries and its diagrammatic syntax~\cite{networkGamesCSL}.
Graphs with boundaries are graphs together with some extra \virgolette{dangling} edges that connect the graph to the left and right boundaries.
They compose by connecting edges that share a common boundary. All the information about connectivity is handled with matrices.

\begin{rem}
  The categorical algebra of graphs with boundaries is a natural choice for capturing rank width because it emphasises the operation of splitting a graph into parts that share some \emph{edges}.
  This contrasts with the algebra of cospans of graphs (\Cref{sec:cospans-hypergraphs}), in which graphs are split into subgraphs that share some \emph{vertices}.
  The difference in the operation that is emphasised by these two algebras reflects the difference between rank width and tree or branch width pointed out in \Cref{rem:rwd-vs-twd}.
\end{rem}

\begin{defiC}[\cite{networkGamesCSL}]
  A \emph{graph with boundaries} \(g \colon n \to m\) is a tuple $g =$\linebreak[4]$\fullboundariesgraph{G}{L}{R}{P}{F}$ of an adjacency matrix \(\adjeqclass{G}\) of a graph on \(k\) vertices, with \(G \in \MatN(k,k)\); matrices \(L \in \MatN(k,n)\) and \(R \in \MatN(k,m)\) that record the connectivity of the vertices with the left and right boundary; a matrix \(P \in \MatN(m,n)\) that records the passing wires from the left boundary to the right one; and a matrix \(F \in \MatN(m,m)\) that records the wires from the right boundary to itself.
  Graphs with boundaries are taken up to an equivalence making the order of the vertices immaterial.
  Let \(g, g' \colon n \to m\) on \(k\) vertices, with \(g = \fullboundariesgraph{G}{L}{R}{P}{F}\) and \(g' = \fullboundariesgraph{G'}{L'}{R'}{P}{F}\).
  The graphs \(g\) and \(g'\) are considered equal iff
  there is a permutation matrix \(\sigma \in \MatN(k,k)\) such that
  \(g' = \fullboundariesgraph{\sigma G \transpose{\sigma}}{\sigma L}{\sigma R}{P}{F}\).
\end{defiC}
Graphs with boundaries can be composed sequentially and in parallel~\cite{networkGamesCSL}, forming a symmetric monoidal category \(\matGraph\).

The prop \(\propGraph\) provides a convenient syntax for graphs with boundaries.
It is obtained by adding a cup and a vertex generators to the prop of matrices \(\Bialg\) (\Cref{fig:bialgebra}).
\begin{defiC}[\cite{chantawibul2015compositionalgraphtheory}]
  The prop of graphs \(\propGraph\) is obtained by adding to \(\Bialg\) the generators \(\cup \colon 0 \to 2\) and \(\vertex \colon 1 \to 0\) with the equations below.
  \[\propGraphEqFig{}\]
\end{defiC}
These equations mean, in particular, that the cup transposes matrices
(\Cref{fig:adding-cup}, left) and that we can express the equivalence relation of adjacency matrices as in \Cref{def:adjacency-matrix}: \(\adjeqclass{G} = \adjeqclass{H}\) iff \(G + \transpose{G} = H + \transpose{H}\) (\Cref{fig:adding-cup}, right).
\begin{figure}[h!]
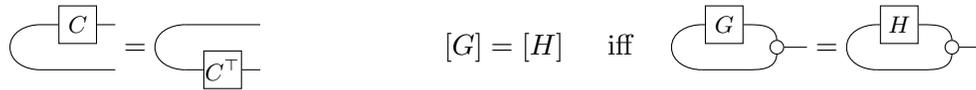

  \centering
  \(\cupTransposeFig{}\) \qquad\qquad\qquad\(\adjeqrelFig{}\)
  \caption{Adding the cup.}\label{fig:adding-cup}
\end{figure}

\begin{propC}[{\cite[Theorem 23]{networkGamesCSL}}]\label{prop:normal-form-prop-graphs}
  The prop of graphs \(\propGraph\) is isomorphic to the prop \(\matGraph\).
\end{propC}
\Cref{prop:normal-form-prop-graphs} means that the morphisms in \(\propGraph\) can be written in the following normal form
\[\morphismpropgraphExFig{}.\]

The prop \(\propGraph\) is more expressive than graphs with dangling edges (\Cref{def:dangling-graph}): its morphisms can have edges between the boundaries as well.
In fact, graphs with dangling edges can be seen as morphisms \(n \to 0\) in \(\propGraph\).

\begin{exa}
  A graph with dangling edges \(\Gamma = \danglinggraph{G}{B}\) can be represented as a morphism in \(\propGraph\)
  \[g = \fullboundariesgraph{G}{B}{\initmap{}}{\finmap{}}{\emptymat} = \danglinggraphExFig{}\quad,\]
  where \(\finmap{} \colon n \to 0\) and \(\initmap{} \colon 0 \to k\) are the unique maps to and from the terminal and initial object \(0\).
  We can now formalise the intuition of glueing graphs with dangling edges as explained in \Cref{ex:graph-dangling-edges-glueing}.
  The two graphs there correspond to \(g_{1}\) and \(g_{2}\) below left and middle.
  Their glueing is obtained by precomposing their monoidal product with a cup, i.e. \ \(\cup_{2} \dcomp (g_{1} \tensor g_{2})\), as shown below right.
  \[\danglinggraphGlueingExFig\]
\end{exa}

\begin{defi}\label{def:weight-prop-graph}
  Let the set of \emph{atomic morphisms} \(\decgenerators\) be the set of all the morphisms of \(\propGraph\).
  The \emph{weight function} \(\nodeweight \colon \decgenerators \union \monoidaloperations{\propGraph} \to \naturals\) is defined, on objects \(n\), as \(\nodeweight(\dcompnode{n}) \defn n\); and, on morphisms \(g \in \decgenerators\), as \(\nodeweight(g) \defn k\), where \(k\) is the number of vertices of \(g\).
\end{defi}

Note that, the monoidal width of \(g\) is bounded by the number \(k\) of its vertices, thus we could take as atoms all the morphisms with at most one vertex and the results would not change.

\subsection{Rank width as monoidal width}\label{sec:mwd-rwd-proof}
We show that monoidal width in the prop \(\propGraph\), with the weight function given in \Cref{def:weight-prop-graph}, is equivalent to rank width.
We do this by bounding monoidal width by above with twice rank width and by below with half of rank width (\Cref{th:mwd-rwd}).
We prove these bounds by defining maps from inductive rank decompositions to monoidal decompositions that preserve the width (\Cref{prop:mwd-rwd-upper-bound}), and vice versa (\Cref{prop:mwd-rwd-lower-bound}).

The upper bound (\Cref{prop:mwd-rwd-upper-bound}) is established by associating to each inductive rank decomposition a suitable monoidal decomposition.
This mapping is defined inductively, given the inductive nature of both these structures.
Given an inductive rank decomposition of a graph \(\Gamma\), we can construct a decomposition of its corresponding morphism \(g\) as shown by the first equality in \Cref{fig:mwd-rwd-upper-bound}.
\begin{figure}[h!]
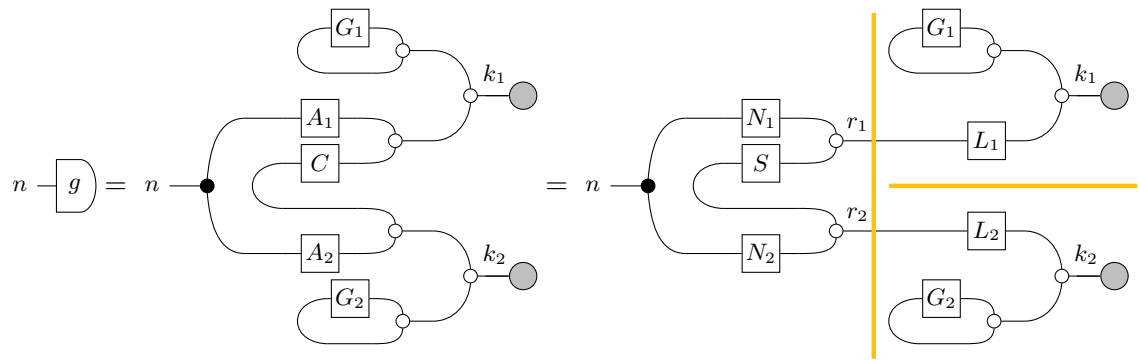

  \[\mwdrwdupperproofFigOneCuts{}\]
  \caption{First step of a monoidal decomposition given by an inductive rank decomposition}\label{fig:mwd-rwd-upper-bound}
\end{figure}
However, this decomposition is not optimal as it cuts along the number of vertices \(k_{1} + k_{2}\).
But we can do better thanks to \Cref{lemma:cut-along-ranks}, which shows that we can cut along the ranks, \(r_1 = \rank(A_1 \mid C)\) and \(r_2 = \rank(A_2 \mid \transpose{C})\), of the boundaries of the induced subgraphs to obtain the second equality in \Cref{fig:mwd-rwd-upper-bound}.
First, recall some facts about ranks.

\begin{rem}\label{rem:rank}
  By \Cref{lemma:min-cut-is-rank}, the rank of a composition of two matrices is bounded by their ranks: \(\rank(A \cdot B) \leq \min\{\rank(A), \rank(B)\}\).
  If, moreover, \(B\) has full rank, then \(\rank(A \cdot B) = \rank(A)\).
\end{rem}

\begin{lem}\label{lemma:cut-along-ranks}
  Let \(A_i \in \MatN(k_i,n)\), for \(i = 1,2\), and \(C \in \MatN(k_1,k_2)\).
  Then, there are rank decompositions of \((A_1 \mid C)\) and \((A_2 \mid \transpose{C})\) of the form \((A_1 \mid C) = L_1 \cdot (N_1 \mid S \cdot \transpose{L_2})\), and \((A_2 \mid \transpose{C}) = L_2 \cdot (N_2 \mid \transpose{S} \cdot \transpose{L_1})\).
  This ensures that we can decompose the diagram below on the left-hand-side as the one on the right-hand-side, where \(r_{1} = \rank(A_1 \mid C)\) and \(r_2 = \rank(A_2 \mid \transpose{C})\).
  \[\lemmacutalongranksIdeaFig{}\]
\end{lem}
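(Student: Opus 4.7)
The plan is to build the two factorizations in a coordinated way: fix a rank factorization of $(A_2 \mid \transpose{C})$ first, read off the transposed right block as an intermediate matrix, and then obtain the factorization of $(A_1 \mid C)$ from a single second application of \Cref{lemma:min-cut-is-rank}. The key observation is that the bridging matrix $S$ arises naturally as the second block of the left factor in this second factorization, so both sides inherit a common middle piece.

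Concretely, I would first invoke \Cref{lemma:min-cut-is-rank} to choose a rank factorization $(A_2 \mid \transpose{C}) = L_2 \cdot (N_2 \mid M_2)$ with $L_2 \in \MatN(k_2, r_2)$, $N_2 \in \MatN(r_2, n)$, and $M_2 \in \MatN(r_2, k_1)$. Transposing the right block gives $C = \transpose{M_2} \cdot \transpose{L_2}$, so $\transpose{L_2}$ already serves as a candidate right factor of $C$. The next step is the rank identity $\rank(A_1 \mid \transpose{M_2}) = \rank(A_1 \mid C) = r_1$. The inequality $\leq$ is immediate from $(A_1 \mid C) = (A_1 \mid \transpose{M_2}) \cdot \begin{psmallmatrix} I_n & 0 \\ 0 & \transpose{L_2} \end{psmallmatrix}$. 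For the reverse, note that $L_2$ has full column rank because it is the left factor of a rank factorization, so $\transpose{L_2}$ has a right inverse $\transpose{L_2}^{+}$ over $\mathbb{Q}$, yielding $\transpose{M_2} = C \cdot \transpose{L_2}^{+}$ and hence $(A_1 \mid \transpose{M_2}) = (A_1 \mid C) \cdot \begin{psmallmatrix} I_n & 0 \\ 0 & \transpose{L_2}^{+} \end{psmallmatrix}$. Once this numerical equality is in hand, \Cref{lemma:min-cut-is-rank} applied to the $\naturals$-matrix $(A_1 \mid \transpose{M_2})$ produces a rank factorization $(A_1 \mid \transpose{M_2}) = L_1 \cdot (N_1 \mid S)$ with $L_1 \in \MatN(k_1, r_1)$, $N_1 \in \MatN(r_1, n)$, and $S \in \MatN(r_1, r_2)$.

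Assembling, $A_1 = L_1 N_1$ and $C = \transpose{M_2} \cdot \transpose{L_2} = L_1 \cdot S \cdot \transpose{L_2}$ combine into the first claimed factorization $(A_1 \mid C) = L_1 \cdot (N_1 \mid S \cdot \transpose{L_2})$, with inner dimension $r_1$ as required. Transposing $\transpose{M_2} = L_1 \cdot S$ gives $M_2 = \transpose{S} \cdot \transpose{L_1}$, so the original factorization rewrites as $(A_2 \mid \transpose{C}) = L_2 \cdot (N_2 \mid \transpose{S} \cdot \transpose{L_1})$, with inner dimension $r_2$ by the choice in the first step. The main obstacle I expect is the rank identity: the $\geq$ direction uses a pseudoinverse that lives over $\mathbb{Q}$, and one must rely on \Cref{lemma:min-cut-is-rank} to transfer the numerical rank equality to the existence of an $\naturals$-valued rank factorization of $(A_1 \mid \transpose{M_2})$ at the correct inner dimension, so that no integrality is lost in the final assembly.
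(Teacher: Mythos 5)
Your proposal is correct and follows essentially the same route as the paper's proof, just with the roles of the two blocks interchanged: the paper first rank-factors \((A_1 \mid C) = L_1 \cdot (N_1 \mid K_1)\) and then factors \((A_2 \mid \transpose{K_1})\), using \Cref{rem:rank} (multiplication by a full-rank matrix preserves rank) where you instead argue via a rational right inverse of \(\transpose{L_2}\). Both arguments rely on \Cref{lemma:min-cut-is-rank} in the same way to get the second \(\naturals\)-valued factorisation at the correct inner dimension, so the integrality concern you raise is handled identically in the paper.
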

\begin{proof}
  Let \(r_1 = \rank(A_1 \mid C)\) and \(r_2 = \rank(A_2 \mid \transpose{C})\).
  We start by factoring \((A_1 \mid C)\) into \(L_1 \cdot (N_1 \mid K_1)\),
  \[\lemmacutalongranksProofFigOne{}\]
  where \(L_1 \in \MatN(k_1,r_1)\), \(N_1 \in \MatN(r_1,n)\) and \(K_1 \in \MatN(r_1,k_2)\).
  Then, we proceed with factoring \((A_2 \mid \transpose{K_1})\) and we show that \(\rank(A_2 \mid \transpose{K_1}) = \rank(A_2 \mid \transpose{C})\).
  Let \(L_2 \cdot (N_2 \mid K_2)\) be a rank factorisation of \((A_2 \mid \transpose{K_1})\),
  \[\lemmacutalongranksProofFigTwo{}\]
  with \(L_2 \in \MatN(k_2,r')\), \(N_2 \in \MatN(r',n)\) and \(K_2 \in \MatN(r',k_1)\).
  We show that \(r' = r_2\).
  By the first factorisation, we obtain that \(C = L_1 \cdot K_1\), and
  \[(A_2 \mid \transpose{C}) = (A_2 \mid \transpose{K_1} \cdot \transpose{L_1}) = (A_2 \mid \transpose{K_1}) \cdot \begin{psmallmatrix} \idmat{} & \zeromat \\ \zeromat & \transpose{L_1}\end{psmallmatrix}.\]
  Then, \(r' = r_2\) because \(L_1\) and, consequently, \(\begin{psmallmatrix} \idmat{} & \zeromat \\ \zeromat & \transpose{L_1}\end{psmallmatrix}\) have full rank.
  By letting \(S = \transpose{K_2}\), we obtain the desired factorisation.
\end{proof}

Once we have performed the cuts in \Cref{fig:mwd-rwd-upper-bound} on the right, we have changed the boundaries of the induced subgraphs.
This means that we cannot apply the inductive hypothesis right away, but we need to transform first the inductive rank decompositions of the old subgraphs into decompositions of the new ones, as shown in \Cref{lemma:rank-on-boundary}.
More explicitly, when \(M\) has full rank, if we have an inductive rank decomposition of \(\Gamma = \danglinggraph{G}{B \cdot M}\), which corresponds to \(g\) below left, we can obtain one of \(\Gamma' = \danglinggraph{G}{B \cdot M'}\), which corresponds to \(g'\) below right, of at most the same width.
\[\lemmarankonboundaryIdeaFig{}\]

\begin{lem}\label{lemma:rank-on-boundary}
  Let \(T\) be an inductive rank decomposition of \(\Gamma = \danglinggraph{G}{B \cdot M}\), with \(M\) that has full rank.
  Then, there is an inductive rank decomposition \(T'\) of \(\Gamma' = \danglinggraph{G}{B \cdot M'}\) such that \(\decwidth(T) \leq \decwidth(T')\) and such that \(T\) and \(T'\) have the same underlying tree structure.
  If, moreover, \(M'\) has full rank, then \(\decwidth(T) = \decwidth(T')\).
\end{lem}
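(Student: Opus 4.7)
The plan is to construct \(T'\) by preserving the tree shape of \(T\) and relabelling each node to reflect the new matrix \(M'\) in place of \(M\). At each node of \(T\) whose label carries adjacency \(G_0\), ambient-boundary portion \(B_0\) (the row-submatrix of \(B\) indexed by the subtree's vertices), and cut matrix \(C_0\), so that the full boundary is \((B_0 \cdot M \mid C_0)\), I would label the corresponding node of \(T'\) by the subgraph with the same \(G_0\) and \(C_0\) but boundary \((B_0 \cdot M' \mid C_0)\). The structural conditions of \Cref{def:rec-rank-dec} --- the vertex partition \(\adjeqclass{G} = \adjeqclass{\begin{psmallmatrix} G_1 & C \\ 0 & G_2\end{psmallmatrix}}\) and the row-splitting of the ambient boundary --- depend only on \(G\) and the row structure of the boundary, both unchanged when passing from \(M\) to \(M'\). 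Hence \(T'\) is a valid inductive rank decomposition of \(\Gamma'\) with the same underlying tree as \(T\).

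For the width comparison, I would invoke \Cref{lemma:closed-expr-boundary} to put each boundary rank in closed form. For every full subtree \(T_0 \leq T\) and its counterpart \(T_0' \leq T'\), we obtain
\[
\rank(\boundary(\labelling(T_0))) = \rank(A_0 \cdot M \mid \transpose{C_L} \mid C_R)
\quad\text{and}\quad
\rank(\boundary(\labelling(T_0'))) = \rank(A_0 \cdot M' \mid \transpose{C_L} \mid C_R),
\]
where \(A_0\) is the row-submatrix of \(B\) determined by the vertices in the subtree, and the cut matrices \(C_L, C_R\) are identical in \(T\) and \(T'\) because their tree structures coincide. The stated inequality \(\decwidth(T) \leq \decwidth(T')\) then reduces to the subtree-wise rank comparison \(\rank(A_0 \cdot M \mid \transpose{C_L} \mid C_R) \leq \rank(A_0 \cdot M' \mid \transpose{C_L} \mid C_R)\), together with taking the maximum over full subtrees.

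The main obstacle is establishing this subtree-wise rank comparison, which must pivot on the full-rank hypothesis for \(M\). I would case-split on the aspect of \(M\): if \(M\) has full row rank, the columns of \(A_0 \cdot M\) span the column space of \(A_0\); if \(M\) has full column rank, \(M\) acts injectively on its domain and the column space of \(A_0 \cdot M\) is controlled via a pivot argument. In either case, I would bound the column space of \((A_0 \cdot M \mid \transpose{C_L} \mid C_R)\) by that of \((A_0 \cdot M' \mid \transpose{C_L} \mid C_R)\) using the identical cut columns \(\transpose{C_L}\) and \(C_R\) as an invariant backbone. For the equality case under the extra hypothesis that \(M'\) also has full rank, the same argument applies with the roles of \(M\) and \(M'\) exchanged (starting from \(T'\) and producing the induced decomposition of \(\Gamma\)), yielding the reverse inequality and hence \(\decwidth(T) = \decwidth(T')\).
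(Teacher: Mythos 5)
Your construction of \(T'\) --- keep the underlying tree of \(T\) and relabel each node by replacing the ambient-boundary block \(A_0 \cdot M\) with \(A_0 \cdot M'\) while leaving the adjacency data and all cut matrices untouched --- is exactly the decomposition the paper builds (the paper verifies the conditions of \Cref{def:rec-rank-dec} by induction on \(T\); your route through \Cref{lemma:closed-expr-boundary} is a legitimate shortcut for the bookkeeping). The gap is in the width comparison. The subtree-wise inequality you reduce to, \(\rank(A_0 \cdot M \mid \transpose{C_L} \mid C_R) \leq \rank(A_0 \cdot M' \mid \transpose{C_L} \mid C_R)\), is false for general \(M'\): take \(M' = \zeromat\), so that \(A_0 \cdot M'\) contributes nothing, while \(A_0 \cdot M\) may contribute columns outside the span of \((\transpose{C_L} \mid C_R)\) (e.g.\ an edgeless graph with nonzero \(B\), where \(\decwidth(T) = \rank(B M) > 0 = \decwidth(T')\)). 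No case-split on row versus column rank of \(M\) can rescue this, because the full-rank hypothesis on \(M\) yields precisely the opposite containment: the column space of \(A_0 \cdot M'\) lies in that of \(A_0\), which equals that of \(A_0 \cdot M\) by \Cref{rem:rank}, hence \(\rank(A_0 \cdot M' \mid \transpose{C_L} \mid C_R) \leq \rank(A_0 \cdot M \mid \transpose{C_L} \mid C_R)\) and therefore \(\decwidth(T') \leq \decwidth(T)\).

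That reversed inequality is the one the paper's proof actually establishes (its leaf case reads \(\decwidth(T') = \rank(B \cdot M') \leq \rank(B) = \rank(B \cdot M) = \decwidth(T)\)), and it is the one needed downstream: in \Cref{prop:mwd-rwd-lower-bound} the lemma is invoked with \(M = \idmat{}\) and an arbitrary \(M'\) to conclude \(\decwidth(T_i) \leq \decwidth(\mTordec(d_i))\). The ``\(\decwidth(T) \leq \decwidth(T')\)'' in the statement is evidently a typo. Once you flip the direction, your argument closes in one line from the column-space containment, and your symmetry argument for the equality case when \(M'\) also has full rank goes through unchanged, since then both column spaces coincide with that of \(A_0\).
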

\begin{proof}
  Proceed by induction on the decomposition tree \(T\).
  If the tree \(T\) is just a leaf with label \(\Gamma\), then we define the corresponding tree to be just a leaf with label \(\Gamma'\): \(T' \defn \leafgenerator{\Gamma'}\).
  Clearly, \(T\) and \(T'\) have the same underlying tree structure.
  By \Cref{rem:rank} and the fact that \(M\) has full rank, we can relate their widths: \(\decwidth(T') \defn \rank(B \cdot M') \leq \rank(B) = \rank(B \cdot M) \codefn \decwidth(T)\).
  If, moreover, \(M'\) has full rank, the inequality becomes an equality and \(\decwidth(T') = \decwidth(T)\).

  If \(T = \nodegenerator{T_1}{\Gamma}{T_2}\), then the adjacency and boundary matrices of \(\Gamma\) can be expressed in terms of those of its subgraphs \(\Gamma_i \defn \labelling_i(T_i) = \danglinggraph{G_{i}}{D_{i}}\), by definition of inductive rank decomposition: \(G = \begin{psmallmatrix} G_1 & C \\ \zeromat & G_2 \end{psmallmatrix}\), \(B \cdot M = \begin{psmallmatrix} A_1 \\ A_2 \end{psmallmatrix} \cdot M =  \begin{psmallmatrix} A_1 \cdot M \\ A_2 \cdot M \end{psmallmatrix}\), with \(D_1 = (A_1 \cdot M \mid C)\) and \(D_2 = (A_2 \cdot M \mid \transpose{C})\).
  The boundary matrices \(D_{i}\) of the subgraphs \(\Gamma_{i}\) can also be expressed as a composition with a full-rank matrix: \(D_1 = (A_1 \cdot M \mid C) = (A_1 \mid C) \cdot \begin{psmallmatrix} M & \zeromat \\ \zeromat & \idmat{k_{2}} \end{psmallmatrix}\) and \(D_2 = (A_2 \cdot M \mid \transpose{C}) = (A_2 \mid \transpose{C}) \cdot \begin{psmallmatrix} M & \zeromat \\ \zeromat & \idmat{k_{1}} \end{psmallmatrix}\).
  The matrices \(\begin{psmallmatrix} M & \zeromat \\ \zeromat & \idmat{k_{i}} \end{psmallmatrix}\) have full rank because all their blocks do.
  Let \(B_{1} = (A_1 \mid C)\) and \(B_{2} = (A_2 \mid \transpose{C})\).
  By induction hypothesis, there are inductive rank decompositions \(T'_1\) and \(T'_{2}\) of \(\Gamma'_1 = \danglinggraph{G_1}{B_1 \cdot \begin{psmallmatrix} M' & \zeromat \\ \zeromat & \idmat{k_{2}} \end{psmallmatrix}}\) and \(\Gamma'_2 = \danglinggraph{G_2}{B_2 \cdot \begin{psmallmatrix} M' & \zeromat \\ \zeromat & \idmat{k_{1}} \end{psmallmatrix}}\) with the same underlying tree structure as \(T_1\) and \(T_{2}\), respectively.
  Moreover, their width is bounded, \(\decwidth(T'_i) \leq \decwidth(T_i)\), and if, additionally, \(M'\) has full rank, \(\decwidth(T'_i) = \decwidth(T_i)\).
  Then, we can use these decompositions to define an inductive rank decomposition \(T' \defn \nodegenerator{T'_1}{\Gamma'}{T'_2}\) of \(\Gamma'\) because its adjacency and boundary matrices can be expressed in terms of those of \(\Gamma'_{i}\) as in the definition of inductive rank decomposition: \(G = \begin{psmallmatrix} G_1 & C \\ \zeromat & G_2 \end{psmallmatrix}\), \(B_1 \cdot \begin{psmallmatrix} M' & \zeromat \\ \zeromat & \idmat{k_{2}} \end{psmallmatrix} = (A_1 \cdot M' \mid C)\) and \(B_2 \cdot \begin{psmallmatrix} M' & \zeromat \\ \zeromat & \idmat{k_{1}} \end{psmallmatrix} = (A_2 \cdot M' \mid \transpose{C})\).
  Applying the induction hypothesis and \Cref{rem:rank}, we compute the width of this decomposition.
  \begin{align*}
    & \decwidth(T')\\
    & \defn \max \{\rank(B \cdot M'), \decwidth(T'_1), \decwidth(T'_2)\}\\
    & \leq \max \{\rank(B), \decwidth(T_1), \decwidth(T_2)\}\\
    & = \max \{\rank(B \cdot M), \decwidth(T_1), \decwidth(T_2)\}\\
    & \codefn \decwidth(T)
  \end{align*}
  If, moreover, \(M'\) has full rank, the inequality becomes an equality and \(\decwidth(T') = \decwidth(T)\).
\end{proof}

With the above ingredients, we can show that rank width bounds monoidal width from above.

\begin{prop}\label{prop:mwd-rwd-upper-bound}
  Let \(\Gamma = \danglinggraph{G}{B}\) be a graph with dangling edges and \(g \colon n \to 0\) be the morphism in \(\propGraph\) corresponding to \(\Gamma\).
  Let \(T\) be an inductive rank decomposition of \(\Gamma\).
  Then, there is a monoidal decomposition \(\rTomdec(T)\) of \(g\) such that \(\decwidth(\rTomdec(T)) \leq 2 \cdot \decwidth(T)\).
\end{prop}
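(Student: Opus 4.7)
The plan is to proceed by induction on the inductive rank decomposition tree $T$. The base cases are direct: when $T = \emptydec$ the graph $\Gamma$ is empty and $g$ admits the trivial decomposition of width $0$; when $T = \leafgenerator{\Gamma}$ the graph has a single vertex and we take $\rTomdec(T) \defn \leafgenerator{g}$, whose width is the vertex count $1$, which is bounded by $2 \cdot \decwidth(T)$ whenever the boundary contributes some rank (the completely isolated-vertex case can be handled by a small extra argument in the base case of the construction).

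For the inductive step $T = \nodegenerator{T_1}{\Gamma}{T_2}$ with $\Gamma_i = \danglinggraph{G_i}{B_i}$, $B_1 = (A_1 \mid C)$ and $B_2 = (A_2 \mid \transpose{C})$, the first move is to rewrite $g$ in the form sketched by the first equality of \Cref{fig:mwd-rwd-upper-bound}, as a composition of the morphisms $g_1, g_2$ corresponding to $\Gamma_1, \Gamma_2$ joined via cups realising the connecting matrix $C$. That naive decomposition cuts along $k_1 + k_2$ wires, which is too expensive, so the plan is to invoke \Cref{lemma:cut-along-ranks} to obtain rank factorisations $(A_1 \mid C) = L_1 \cdot (N_1 \mid S \cdot \transpose{L_2})$ and $(A_2 \mid \transpose{C}) = L_2 \cdot (N_2 \mid \transpose{S} \cdot \transpose{L_1})$, producing the rewriting pictured as the second equality of that figure. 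The cuts introduced now have size $r_i = \rank(B_i) \leq \decwidth(T_i) \leq \decwidth(T)$ for the outer factors and $r_1 + r_2 \leq 2 \cdot \decwidth(T)$ for the central strip threading the matrix $S$.

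After this rewriting, the two inner sub-morphisms correspond to modified graphs $\Gamma'_i$ with the same adjacency $G_i$ but boundary replaced by $N_i$ — equivalently, $B_i$ precomposed with a full-rank matrix of smaller dimension — so the next step is to apply \Cref{lemma:rank-on-boundary} to transform $T_i$ into an inductive rank decomposition $T'_i$ of $\Gamma'_i$ of the same underlying tree shape, with $\decwidth(T'_i) \leq \decwidth(T_i) \leq \decwidth(T)$, and then to invoke the induction hypothesis on $T'_i$ to obtain a monoidal decomposition of the inner sub-morphism of width at most $2 \cdot \decwidth(T)$. Gluing these sub-decompositions together with the cuts produced by \Cref{lemma:cut-along-ranks} yields the desired decomposition $\rTomdec(T)$, whose width is the maximum of the outer cut costs (at most $\decwidth(T)$), the central cut cost (at most $2 \cdot \decwidth(T)$), and the inductive costs (at most $2 \cdot \decwidth(T)$). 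The main obstacle I anticipate is the combinatorial bookkeeping of this step: one must verify that the rewriting produced by \Cref{lemma:cut-along-ranks} really reshapes the sub-morphisms into the morphisms associated to the modified graphs $\Gamma'_i$ in the precise form required by \Cref{lemma:rank-on-boundary}, and that every intermediate cut in the resulting string diagram, including any identity wires threaded through the cups and the boundary interfaces, can be priced uniformly by $2 \cdot \decwidth(T)$ without hidden contributions from the ambient boundary size $n$.
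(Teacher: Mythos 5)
Your proposal is correct and follows essentially the same route as the paper's proof: induction on \(T\), using \Cref{lemma:cut-along-ranks} to replace the naive cut along \(k_1+k_2\) vertices by cuts along the ranks \(r_1,r_2\), then \Cref{lemma:rank-on-boundary} (with \(M'=\idmat{}\)) to convert \(T_1,T_2\) into decompositions of the reshaped subgraphs before applying the induction hypothesis, and the same final bookkeeping \(\max\{\nodeweight(b),\,r_1+r_2,\,2\cdot\decwidth(T_1),\,2\cdot\decwidth(T_2)\}\leq 2\cdot\decwidth(T)\). The one slip is that the reshaped subgraph \(\Gamma'_i\) has boundary matrix \(L_i\), the \emph{left} factor of the rank factorisation \(B_i = L_i\cdot M_i\) (so that \(g_i\colon r_i\to 0\)), whereas \(N_i\) ends up inside the vertex-free wiring morphism \(b\colon n\to r_1+r_2\) of weight \(0\); this does not affect the structure of the argument, and your worry about the isolated single-vertex leaf is a genuine edge case that the paper's own base case also treats only cursorily.
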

\begin{proof}
  Proceed by induction on the decomposition tree \(T\).
  If it is empty, then \(G\) must also be empty, \(\rTomdec(T) = \emptydec\) and we are done.
  If the decomposition tree consists of just one leaf with label \(\Gamma\), then \(\Gamma\) must have one vertex, we can define \(\rTomdec(T) \defn \leafgenerator{g}\) to also be just a leaf, and bound its width \(\decwidth(T) \defn \rank(G) = \decwidth(\rTomdec(T))\).

  If \(T = \nodegenerator{T_1}{\Gamma}{T_2}\), then we can relate the adjacency and boundary matrices of \(\Gamma\) to those of \(\Gamma_i \defn \labelling(T_i) = \danglinggraph{G_i}{B_i}\), by definition of inductive rank decomposition: \(G = \begin{psmallmatrix} G_1 & C \\ \zeromat & G_2 \end{psmallmatrix}\), \(B = \begin{psmallmatrix} A_1 \\ A_2 \end{psmallmatrix}\), \(B_1 = (A_1 \mid C)\) and \(B_2 = (A_2 \mid \transpose{C})\).
  By \Cref{lemma:cut-along-ranks}, there are rank decompositions of \((A_1 \mid C)\) and \((A_2 \mid \transpose{C})\) of the form: \((A_1 \mid C) = L_1 \cdot (N_1 \mid S \cdot \transpose{L_2})\); and \((A_2 \mid \transpose{C}) = L_2 \cdot (N_2 \mid \transpose{S} \cdot \transpose{L_1})\).
  This means that we can write \(g\) as
  \[\mwdrwdupperproofFigOneCuts{},\]
  with \(r_i = \rank(B_i)\).
  Then, \(B_i = L_i \cdot M_i\) with \(M_i\) that has full rank \(r_i\).
  By taking \(M' = \idmat{}\) in \Cref{lemma:rank-on-boundary}, there is an inductive rank decomposition \(T'_i\) of \(\Gamma'_i = \danglinggraph{G_i}{L_i}\), with the same underlying binary tree as \(T_i\), such that \(\decwidth(T_i) = \decwidth(T'_i)\).
  Let \(g_i \colon r_i \to 0\) be the morphisms in \(\propGraph\) corresponding to \(\Gamma'_i\) and let \(b \colon n \to r_1 + r_2\) be defined as
  \[\mwdrwdupperproofFigTwo{}.\]
  By induction hypothesis, there are monoidal decompositions \(\rTomdec(T'_1)\) and \(\rTomdec(T'_2)\) of \(g_1\) and \(g_{2}\) of bounded width: \(\decwidth(\rTomdec(T'_i)) \leq 2 \cdot \decwidth(T'_i) = 2 \cdot \decwidth(T_i)\).
  Then, \(g = b \dcomp_{r_1 + r_2} (g_1 \tensor g_2)\) and \(\rTomdec(T) \defn \nodegenerator{b}{\dcomp_{r_1 + r_2}}{\nodegenerator{\rTomdec(T'_1)}{\tensor}{\rTomdec(T'_2)}}\) is a monoidal decomposition of \(g\).
  Its width can be computed.
  \begin{align*}
    & \decwidth(\rTomdec(T))\\
    & \defn \max \{\nodeweight(b), \nodeweight(\dcompnode{r_1 + r_2}), \decwidth(\rTomdec(T'_1)), \decwidth(\rTomdec(T'_2))\}\\
    & \leq \max \{\nodeweight(b), \nodeweight(\dcompnode{r_1 + r_2}), 2 \cdot \decwidth(T'_1), 2 \cdot \decwidth(T'_2)\}\\
    & = \max \{\nodeweight(b), r_1 + r_2, 2 \cdot \decwidth(T_1), 2 \cdot \decwidth(T_2)\}\\
    & \leq 2 \cdot \max \{r_1, r_2, \decwidth(T_1), \decwidth(T_2)\}\\
    & \codefn 2 \cdot \decwidth(T)
    \qedhere
  \end{align*}
\end{proof}

Proving the lower bound is similarly involved and follows a similar proof structure.
From a monoidal decomposition we construct inductively an inductive rank decomposition of bounded width.
The inductive step relative to composition nodes is the most involved and needs two additional lemmas, which allow us to transform inductive rank decompositions of the induced subgraphs into ones of two subgraphs that satisfy the conditions of \Cref{def:rec-rank-dec}.

Applying the inductive hypothesis gives us an inductive rank decomposition of \(\Gamma = \danglinggraph{G}{(L \mid R)}\), which is associated to \(g\) below left, and we need to construct one of \(\Gamma' \defn \danglinggraph{G + L \cdot F \cdot \transpose{L}}{(L \mid R + L \cdot (F + \transpose{F}) \cdot \transpose{P})}\), which is associated to \(f \dcomp g\) below right, of at most the same width.
\lemmawirestofutureIdeaFig{}

\begin{lem}\label{lemma:wires-to-future}
  Let \(T\) be an inductive rank decomposition of \(\Gamma = \danglinggraph{G}{(L \mid R)}\), with \(G \in \MatN(k,k)\), \(L \in \MatN(k,j)\) and \(R \in \MatN(k,m)\).
  Let \(F \in \MatN(j,j)\), \(P \in \MatN(m,j)\) and define \(\Gamma' \defn \danglinggraph{G + L \cdot F \cdot \transpose{L}}{(L \mid R + L \cdot (F + \transpose{F}) \cdot \transpose{P})}\).
  Then, there is an inductive rank decomposition \(T'\) of \(\Gamma'\) such that \(\decwidth(T') \leq \decwidth(T)\).
\end{lem}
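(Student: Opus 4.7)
The plan is to proceed by induction on the decomposition tree $T$. The base case $T = \emptydec$ is immediate, and for $T = \leafgenerator{\Gamma}$ I would simply take $T' \defn \leafgenerator{\Gamma'}$; its width equals $\rank(L \mid R + L \cdot (F + \transpose{F}) \cdot \transpose{P})$, which coincides with $\rank(L \mid R) = \rank(B)$, because the columns of $R + L \cdot (F+\transpose{F}) \cdot \transpose{P}$ are obtained from the columns of $R$ by adding linear combinations of the columns of $L$, so the column span of the augmented matrix is unchanged.

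For the inductive step $T = \nodegenerator{T_1}{\Gamma}{T_2}$, I would split $L = \begin{psmallmatrix} L_1 \\ L_2 \end{psmallmatrix}$ and $R = \begin{psmallmatrix} R_1 \\ R_2 \end{psmallmatrix}$ along the vertex partition of $\Gamma$ into $\Gamma_i = \danglinggraph{G_i}{(L_i \mid R_i \mid X_i)}$, with $X_1 = C$ and $X_2 = \transpose{C}$, where $G = \begin{psmallmatrix} G_1 & C \\ \zeromat & G_2 \end{psmallmatrix}$ is the block form given by $T$. A direct block computation, together with the equivalence relation on adjacency matrices (folding the strict lower-triangular block into the upper-triangular one), shows
\[\adjeqclass{G + L \cdot F \cdot \transpose{L}} = \adjeqclass{\begin{psmallmatrix} G_1 + L_1 F \transpose{L_1} & C + L_1 (F + \transpose{F}) \transpose{L_2} \\ \zeromat & G_2 + L_2 F \transpose{L_2}\end{psmallmatrix}}.\]
This identifies the candidate subgraphs $\Gamma'_1, \Gamma'_2$ at the root of the new decomposition and the new ``inter-block'' matrix $C' = C + L_1(F+\transpose{F})\transpose{L_2}$.

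To produce $T'_i$ I would apply the induction hypothesis to $T_i$, viewing $\Gamma_i$ as a graph with $L$-part $L_i$ (size $j$) and $R$-part $(R_i \mid X_i)$ (size $m + k_{3-i}$), keeping $F$ unchanged and choosing the auxiliary passing matrices
\[P^{(1)} \defn \begin{psmallmatrix} P \\ L_2 \end{psmallmatrix}, \qquad P^{(2)} \defn \begin{psmallmatrix} P \\ L_1 \end{psmallmatrix}.\]
A short calculation shows that with these choices the induction hypothesis outputs decompositions $T'_i$ of exactly the $\Gamma'_i$ dictated by the block formula above, so that setting $T' \defn \nodegenerator{T'_1}{\Gamma'}{T'_2}$ yields a valid inductive rank decomposition of $\Gamma'$.

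For the width bound, the induction hypothesis gives $\decwidth(T'_i) \leq \decwidth(T_i)$, and the root boundary rank of $\Gamma'$ equals $\rank(B)$ by the same column-operation argument used in the base case. Combining, $\decwidth(T') \leq \max\{\decwidth(T_1), \decwidth(T_2), \rank(B)\} = \decwidth(T)$. The main obstacle I expect is the matrix bookkeeping: verifying the adjacency equivalence after the ``transposition move'' and carefully picking the $P^{(i)}$ so that the induced boundary matrices of the subgraphs match the form required to invoke the induction hypothesis cleanly.
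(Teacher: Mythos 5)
Your proof follows essentially the same route as the paper's: the same treatment of the leaf case, the same block computation folding the lower-left block $L_2 F \transpose{L_1}$ into the off-diagonal block to obtain $C' = C + L_1 (F + \transpose{F}) \transpose{L_2}$, and the same choice of augmented passing matrices $\begin{psmallmatrix} P \\ L_2 \end{psmallmatrix}$ and $\begin{psmallmatrix} P \\ L_1 \end{psmallmatrix}$ when invoking the induction hypothesis on the two subtrees. One minor caveat: in the base case you assert $\rank(L \mid R + L\cdot(F+\transpose{F})\cdot\transpose{P}) = \rank(L \mid R)$, but over \(\MatN\) the column-operation argument only gives $\leq$ (undoing the operation would require subtraction), which is fortunately all that the width bound needs and all that the paper claims.
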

\begin{proof}
  Note that we can factor the boundary matrix of \(\Gamma'\) as \((L \mid R + L \cdot (F + \transpose{F}) \cdot \transpose{P}) = (L \mid R) \cdot \begin{psmallmatrix} \id{j} & (F + \transpose{F}) \cdot \transpose{P} \\ \zeromat & \id{m}\end{psmallmatrix}\).
  Then, we can bound its rank, \(\rank(L \mid R + L \cdot (F + \transpose{F}) \cdot \transpose{P}) \leq \rank (L \mid R)\).

  Proceed by induction on the decomposition tree \(T\).
  If it is just a leaf with label \(\Gamma\), then \(\Gamma\) has one vertex and we can define a decomposition for \(\Gamma'\) to be also just a leaf: \(T' \defn \leafgenerator{\Gamma'}\).
  We can bound its width with the width of \(T\): \(\decwidth(T') \defn \rank(L \mid R + L \cdot (F + \transpose{F}) \cdot \transpose{P}) \leq \rank(L \mid R) \codefn \decwidth(T)\).

  If \(T = \nodegenerator{T_1}{\Gamma}{T_2}\), then there are two subgraphs \(\Gamma_1 = \danglinggraph{G_1}{(L_1 \mid R_1 \mid C)}\) and \(\Gamma_2 = \danglinggraph{G_2}{(L_2 \mid R_2 \mid C)}\) such that \(T_i\) is an inductive rank decomposition of \(\Gamma_i\), and we can relate the adjacency and boundary matrices of \(\Gamma\) to those of \(\Gamma_{1}\) and \(\Gamma_{2}\), by definition of inductive rank decomposition: \(\adjeqclass{G} = \adjeqclass{\begin{psmallmatrix} G_1 & C \\ \zeromat & G_2 \end{psmallmatrix}}\) and \((L \mid R) = \begin{psmallmatrix} L_1 & R_1 \\ L_2 & R_2 \end{psmallmatrix}\).
  Similarly, we express the adjacency and boundary matrices of \(\Gamma'\) in terms of the same components: \(\adjeqclass{G + L \cdot F \cdot \transpose{L}} = \adjeqclass{\begin{psmallmatrix} G_1 + L_1 \cdot F \cdot \transpose{L_1} & C + L_1 \cdot (F + \transpose{F}) \cdot \transpose{L_2} \\ \zeromat & G_2 + L_2 \cdot F \cdot \transpose{L_2}\end{psmallmatrix}}\) and \((L \mid R + L \cdot (F + \transpose{F}) \cdot \transpose{P}) = \begin{psmallmatrix} L_1 & R_1 + L_1 \cdot (F + \transpose{F}) \cdot \transpose{P} \\ L_2 & R_2 + L_2 \cdot (F + \transpose{F}) \cdot \transpose{P} \end{psmallmatrix}\).
  We use these decompositions to define two subgraphs of \(\Gamma'\) and apply the induction hypothesis to them.
  \begin{align*}
    \Gamma'_1 \defn & \danglinggraph{G_1 + L_1 \cdot F \cdot \transpose{L_1}}{(L_1 \mid R_1 + L_1 \cdot (F + \transpose{F}) \cdot \transpose{P} \mid C + L_1 \cdot (F + \transpose{F}) \cdot \transpose{L_2})} \\
    =& \danglinggraph{G_1 + L_1 \cdot F \cdot \transpose{L_1}}{(L_1 \mid (R_1 \mid C) + L_1 \cdot (F + \transpose{F}) \cdot (\transpose{P} \mid \transpose{L_2}))}\\
    \text{ and }\\
    \Gamma'_2 \defn & \danglinggraph{G_2 + L_2 \cdot F \cdot \transpose{L_2}}{(L_2 \mid R_2 + L_2 \cdot (F + \transpose{F}) \cdot \transpose{P} \mid \transpose{C} + L_2 \cdot (F + \transpose{F}) \cdot \transpose{L_1})} \\
    =& \danglinggraph{G_2 + L_2 \cdot F \cdot \transpose{L_2}}{(L_2 \mid (R_2 \mid \transpose{C}) + L_2 \cdot (F + \transpose{F}) \cdot (\transpose{P} \mid \transpose{L_1}))}
  \end{align*}
  By induction, we have inductive rank decompositions \(T'_i\) of \(\Gamma'_i\) such that \(\decwidth(T'_i) \leq \decwidth(T_i)\).
  We defined \(\Gamma'_i\) so that \(T' \defn \nodegenerator{T'_1}{\Gamma'}{T'_2}\) would be an inductive rank decomposition of \(\Gamma'\).
  We can bound its width as desired.
  \begin{align*}
    & \decwidth(T')\\
    & \defn \max \{\decwidth(T'_1), \decwidth(T'_2), \rank(L \mid R + L \cdot (F + \transpose{F}) \cdot \transpose{P})\}\\
    & \leq \max \{\decwidth(T_1), \decwidth(T_2), \rank(L \mid R + L \cdot (F + \transpose{F}) \cdot \transpose{P})\}\\
    & \leq \max \{\decwidth(T_1), \decwidth(T_2), \rank(L \mid R)\}\\
    & \codefn \decwidth(T)
    \qedhere
  \end{align*}
\end{proof}

In order to obtain the subgraphs of the desired shape we need to add some extra connections to the boundaries.
This can be done thanks to \Cref{lemma:rank-on-boundary}, by taking \(M = \idmat{}\).
We are finally able to prove the lower bound for monoidal width.

\begin{prop}\label{prop:mwd-rwd-lower-bound}
  Let \(g = \fullboundariesgraph{G}{L}{R}{P}{F}\) in \(\propGraph\) and \(d \in \decset{g}\).
  Let \(\Gamma = \danglinggraph{G}{(L \mid R)}\).
  Then, there is an inductive rank decomposition \(\mTordec(d)\) of \(\Gamma\) s.t.
  \(\decwidth(\mTordec(d)) \leq 2 \cdot \max \{\decwidth(d), \rank(L), \rank(R)\}\).
\end{prop}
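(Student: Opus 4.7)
The plan is to mirror the proof of \Cref{prop:mwd-rwd-upper-bound} by proceeding by induction on the monoidal decomposition $d$ and using \Cref{lemma:wires-to-future} together with \Cref{lemma:rank-on-boundary} to reconcile the adjacency and boundary matrices produced by operations in $\propGraph$ with those handed back by the inductive hypothesis. The base case $d = \leafgenerator{g}$ is immediate: if $g$ has $k$ vertices, then every inductive rank decomposition of $\Gamma$ has width at most $k = \nodeweight(g) = \decwidth(d)$, since each boundary matrix has at most $k$ rows.

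For the tensor case $d = \nodegenerator{d_1}{\tensor}{d_2}$, with $g = g_1 \tensor g_2$ and $g_i = \fullboundariesgraph{G_i}{L_i}{R_i}{P_i}{F_i}$, all the composed data are block diagonal and the cross-connection $C$ vanishes. I would apply the induction hypothesis to obtain $\mTordec(d_i)$ for $\Gamma_i = \danglinggraph{G_i}{(L_i \mid R_i)}$. The subgraphs that fit under the root labelled $\Gamma$ in the combined inductive rank decomposition require boundaries enlarged by zero columns to account for $C = 0$; a single application of \Cref{lemma:rank-on-boundary} (with a zero-padding matrix) converts each $\mTordec(d_i)$ into a decomposition of the enlarged subgraph at the same width. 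Gluing them under a fresh root labelled $\Gamma$ yields $\mTordec(d)$.

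The composition case $d = \nodegenerator{d_1}{\dcomp_k}{d_2}$, with $g = g_1 \dcomp_k g_2$, is where the real work lies. Composition in $\matGraph$ introduces two kinds of new edges: within each $G_i$-block, via feedback through the $k$ shared wires (controlled by matrices built from $F_1, F_2$ and the $L_i, R_i$ blocks), and across the two blocks (giving a nonzero cross-connection $C$). I would still apply the induction hypothesis to $\Gamma_i = \danglinggraph{G_i}{(L_i \mid R_i)}$, then invoke \Cref{lemma:wires-to-future} to absorb the feedback contributions into the adjacency matrices of the two subgraphs, and finally \Cref{lemma:rank-on-boundary} to extend the boundaries by the cross-connection columns. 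Gluing the two transformed decompositions under a root labelled $\Gamma$ yields $\mTordec(d)$.

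The width estimate works out because the root contributes $\rank(L \mid R) \leq \rank(L) + \rank(R) \leq 2 \max\{\rank(L),\rank(R)\}$, while the subtree widths are controlled by the induction hypothesis. In the composition case the auxiliary ranks that appear satisfy $\rank(R_1), \rank(L_2) \leq k \leq \decwidth(d)$ (since those blocks have $k$ columns), so the factor of $2$ comfortably absorbs all contributions; the analogous block considerations handle the tensor case. The main obstacle I expect is bookkeeping in the composition case: one must track precisely how the normal form of \Cref{prop:normal-form-prop-graphs} distributes the matrices $L, R, P, F$ between the two factors, and apply the two lemmas in the right order so that the transformed subgraphs are literally those prescribed by \Cref{def:rec-rank-dec}. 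Lemmas \ref{lemma:wires-to-future} and \ref{lemma:rank-on-boundary} were crafted precisely for this reconciliation, so once the matrix algebra is lined up the bound follows mechanically from the inductive estimates.
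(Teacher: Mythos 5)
Your proposal follows essentially the same route as the paper's proof: induction on $d$, with the composition case handled by first applying the induction hypothesis to $\Gamma_i = \danglinggraph{G_i}{(L_i \mid R_i)}$, then \Cref{lemma:wires-to-future} to absorb the feedback through the shared wires (which in fact only modifies the second block, since $\overline{G}_1 = G_1$), then \Cref{lemma:rank-on-boundary} with $M = \idmat{}$ to adjust the boundaries, and with the same width bookkeeping via $\rank(R_1), \rank(L_2) \leq j$. The only detail the paper adds that you omit is the degenerate case where one of the two subgraphs is empty ($k_i = 0$), in which case one cannot form a binary node labelled $\Gamma$ over an empty child and must instead take $\mTordec(d)$ to be the decomposition of the nonempty part alone.
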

\begin{proof}
  Proceed by induction on the decomposition tree \(d\).
  If it is just a leaf with label \(g\), then its width is defined to be the number \(k\) of vertices of \(g\), \(\decwidth(d) \defn k\).
  Pick any inductive rank decomposition of \(\Gamma\) and define \(\mTordec(d) \defn T\).
  Surely, \(\decwidth(T) \leq k \codefn \decwidth(d)\)

  If \(d = \nodegenerator{d_1}{\dcomp_j}{d_2}\), then \(g\) is the composition of two morphisms: \(g = g_1 \dcomp g_2\), with \(g_i = \fullboundariesgraph{G_i}{L_i}{R_i}{P_i}{F_i}\).
  Given the partition of the vertices determined by \(g_1\) and \(g_2\), we can decompose \(g\) in another way, by writing \(\adjeqclass{G} = \adjeqclass{\begin{psmallmatrix} \overline{G}_1 & C \\ \zeromat & \overline{G}_2\end{psmallmatrix}}\) and \(B = (L \mid R) = \begin{psmallmatrix} \overline{L}_1 & \overline{R}_1 \\ \overline{L}_2 & \overline{R}_2 \end{psmallmatrix}\).
  Then, we have that \(\overline{G}_1 = G_1\), \(\overline{L}_1 = L_1\), \(P = P_2 \cdot P_1\), \(C = R_1 \cdot \transpose{L}_2\), \(\overline{R}_1 = R_1 \cdot \transpose{P}_2\), \(\overline{L}_2 = L_2 \cdot P_1\), \(\overline{R}_2 = R_2 + L_2 \cdot (F_1 + \transpose{F}_1) \cdot \transpose{P}_2\), \(\overline{G}_2 = G_2 + L_2 \cdot F_1 \cdot \transpose{L}_2\), and \(F = F_2 + P_2 \cdot F_1 \cdot \transpose{P}_2\).
  This corresponds to the following diagrammatic rewriting using the equations of \(\propGraph\).
  \[\mwdrankwidthlowerproofFigFuture{}\]
  We define \(\overline{B}_1 \defn (\overline{L}_1 \mid \overline{R}_1 \mid C)\) and \(\overline{B}_2 \defn (\overline{L}_2 \mid \overline{R}_2 \mid \transpose{C})\).
  In order to build an inductive rank decomposition of \(\Gamma\), we need rank decompositions of \(\overline{\Gamma}_i = \danglinggraph{\overline{G}_i}{\overline{B}_i}\).
  We obtain these in three steps.
  Firstly, we apply induction to obtain inductive rank decompositions \(\mTordec(d_i)\) of \(\Gamma_i = \danglinggraph{G_i}{(L_i \mid R_i)}\) such that \(\decwidth(\mTordec(d_i)) \leq 2 \cdot \max \{\decwidth(d_i), \rank(L_i), \rank(R_i)\}\).
  Secondly, we apply \Cref{lemma:wires-to-future} to obtain an inductive rank decomposition \(T'_2\) of \(\Gamma'_2 = \danglinggraph{G_2 + L_2 \cdot F_1 \cdot \transpose{L_2}}{(L_2 \mid R_2 + L_2 \cdot (F_1 + \transpose{F_1}) \cdot \transpose{P_2})}\) such that \(\decwidth(T'_2) \leq \decwidth(\mTordec(d_2))\).
  Lastly, we observe that \((\overline{R}_1 \mid C) = R_1 \cdot (\transpose{P_2} \mid \transpose{L_2})\) and \((\overline{L}_2 \mid \transpose{C}) = L_2 \cdot (P_1 \mid \transpose{R_1})\).
  Then we obtain that \(\overline{B}_1 = (L_1 \mid R_1) \cdot \begin{psmallmatrix} \id{n} & \zeromat & \zeromat\\ \zeromat &  \transpose{P_2} & \transpose{L_2} \end{psmallmatrix}\) and \(\overline{B}_2 = (L_2 \mid R_2 + L_2 \cdot (F_1 + \transpose{F_1}) \cdot \transpose{P_2}) \cdot \begin{psmallmatrix} P_1 & \zeromat & \transpose{R}_1 \\ \zeromat & \id{m} & \zeromat \end{psmallmatrix}\), and we can apply \Cref{lemma:rank-on-boundary}, with \(M = \idmat{}\), to get inductive rank decompositions \(T_i\) of \(\overline{\Gamma}_i\) such that \(\decwidth(T_1) \leq \decwidth(\mTordec(d_1))\) and \(\decwidth(T_2) \leq \decwidth(T'_2) \leq \decwidth(\mTordec(d_2))\).
  If \(k_1, k_2 > 0\), then we define \(\mTordec(d) \defn \nodegenerator{T_1}{\Gamma}{T_2}\), which is an inductive rank decomposition of \(\Gamma\) because \(\overline{\Gamma}_i\) satisfy the two conditions in \Cref{def:rec-rank-dec}.
  If \(k_1 = 0\), then \(\Gamma = \overline{\Gamma}_2\) and we can define \(\mTordec(d) \defn T_2\).
  Similarly, if \(k_2 = 0\), then \(\Gamma = \overline{\Gamma}_1\) and we can define \(\mTordec(d) \defn T_1\).
  In any case, we can compute the width of \(\mTordec(d)\) (if \(k_i = 0\) then \(T_i = \emptydec\) and \(\decwidth(T_i) = 0\)) using the inductive hypothesis, \Cref{lemma:wires-to-future}, \Cref{lemma:rank-on-boundary}, the fact that \(\rank(L) \geq \rank(L_1)\), \(\rank(R) \geq \rank(R_2)\) and \(j \geq \rank(R_1), \rank(L_2)\) because \(R_1 \colon j \to k_1\) and \(L_2 \colon j \to k_2\).
  \begin{align*}
    & \decwidth(T) \\
    & \defn \max \{\decwidth(T_1), \decwidth(T_2), \rank(L \mid R)\}\\
    & \leq \max \{\decwidth(\mTordec(d_1)), \decwidth(T'_2), \rank(L \mid R)\}\\
    & \leq \max \{\decwidth(\mTordec(d_1)), \decwidth(\mTordec(d_2)), \rank(L \mid R)\}\\
    & \leq \max \{\decwidth(\mTordec(d_1)), \decwidth(\mTordec(d_2)), \rank(L) + \rank(R)\}\\
    & \leq \max \{2 \cdot \decwidth(d_1), 2 \cdot \rank(L_1), 2 \cdot \rank(R_1), 2 \cdot \decwidth(d_2), 2 \cdot \rank(L_2), 2 \cdot \rank(R_2), \rank(L) + \rank(R)\}\\
    & \leq 2 \cdot \max \{\decwidth(d_1), \rank(L_1), \rank(R_1), \decwidth(d_2), \rank(L_2), \rank(R_2), \rank(L), \rank(R)\}\\
    & \leq 2 \cdot \max \{\decwidth(d_1), \decwidth(d_2), j, \rank(L), \rank(R)\}\\
    & \codefn 2 \cdot \max \{\decwidth(d), \rank(L), \rank(R)\}
  \end{align*}

  If \(d = \nodegenerator{d_1}{\tensor}{d_2}\), then \(g\) is the monoidal product of two morphisms: \(g = g_1 \tensor g_2\), with \(g_i = \fullboundariesgraph{G_i}{L_i}{R_i}{P_i}{F_i} \colon n_i \to m_i\).
  By exlicitly computing the monoidal product, we obtain that \(\adjeqclass{G} = \adjeqclass{\begin{psmallmatrix} G_1 & \zeromat \\ \zeromat & G_2 \end{psmallmatrix}}\), \(L = \begin{psmallmatrix} L_1 & \zeromat \\ \zeromat & L_2 \end{psmallmatrix}\), \(R = \begin{psmallmatrix} R_1 & \zeromat \\ \zeromat & R_2 \end{psmallmatrix}\), \(P = \begin{psmallmatrix} P_1 & \zeromat \\ \zeromat & P_2 \end{psmallmatrix}\) and \(F = \begin{psmallmatrix} F_1 & \zeromat \\ \zeromat & F_2 \end{psmallmatrix}\).
  By induction, we have inductive rank decompositions \(\mTordec(d_i)\) of \(\Gamma_i \defn \danglinggraph{G_i}{B_i}\), where \(B_i = (L_i \mid R_i)\), of bounded width: \(\decwidth(\mTordec(d_i)) \leq 2 \cdot \max \{\decwidth(d_i), \rank(L_i), \rank(R_i)\}\).
  Let \(\overline{B}_1 \defn (L_1 \mid \zeromat_{n_2} \mid R_1 \mid \zeromat_{m_2} \mid \zeromat_{k_2}) = B_1 \cdot \begin{psmallmatrix} \id{n_1} & \zeromat & \zeromat & \zeromat & \zeromat \\ \zeromat & \zeromat & \id{m_1} & \zeromat & \zeromat \end{psmallmatrix}\) and  \(\overline{B}_2 \defn (\zeromat_{n_1} \mid L_2 \mid \zeromat_{m_1} \mid R_2 \mid \zeromat_{k_1}) = B_2 \cdot \begin{psmallmatrix} \zeromat & \id{n_2} & \zeromat & \zeromat & \zeromat \\ \zeromat & \zeromat & \zeromat & \id{m_2} & \zeromat \end{psmallmatrix}\).
  By taking \(M = \idmat{}\) in \Cref{lemma:rank-on-boundary}, we can obtain inductive rank decompositions \(T_i\) of \(\overline{\Gamma}_i \defn \danglinggraph{G_i}{\overline{B}_i}\) such that \(\decwidth(T_i) \leq \decwidth(\mTordec(d_i))\).
  If \(k_1, k_2 > 0\), then we define \(\mTordec(d) \defn \nodegenerator{T_1}{\Gamma}{T_2}\), which is an inductive rank decomposition of \(\Gamma\) because \(\overline{\Gamma}_i\) satisfy the two conditions in \Cref{def:rec-rank-dec}.
  If \(k_1 = 0\), then \(\Gamma = \overline{\Gamma}_2\) and we can define \(\mTordec(d) \defn T_2\).
  Similarly, if \(k_2 = 0\), then \(\Gamma = \overline{\Gamma}_1\) and we can define \(\mTordec(d) \defn T_1\).
  In any case, we can compute the width of \(\mTordec(d)\) (if \(k_i = 0\) then \(T_i = \emptydec\) and \(\decwidth(T_i) = 0\)) using the inductive hypothesis and \Cref{lemma:rank-on-boundary}.
  \begin{align*}
    & \decwidth(T) \\
    & \defn \max \{\decwidth(T_1), \decwidth(T_2), \rank(L \mid R)\}\\
    & \leq \max \{\decwidth(\mTordec(d_1)), \decwidth(\mTordec(d_2)), \rank(L \mid R)\}\\
    & \leq \max \{\decwidth(\mTordec(d_1)), \decwidth(\mTordec(d_2)), \rank(L) + \rank(R)\}\\
    & \leq \max \{2 \cdot \decwidth(d_1), 2 \cdot \rank(L_1), 2 \cdot \rank(R_1), 2 \cdot \decwidth(d_2), 2 \cdot \rank(L_2), 2 \cdot \rank(R_2), \rank(L) + \rank(R)\}\\
    & \leq 2 \cdot \max \{\decwidth(d_1), \rank(L_1), \rank(R_1), \decwidth(d_2), \rank(L_2), \rank(R_2), \rank(L), \rank(R)\}\\
    & \leq 2 \cdot \max \{\decwidth(d_1), \decwidth(d_2), \rank(L), \rank(R)\}\\
    & \codefn 2 \cdot \max \{\decwidth(d), \rank(L), \rank(R)\}
    \qedhere
  \end{align*}
\end{proof}

From \Cref{prop:mwd-rwd-upper-bound}, \Cref{prop:mwd-rwd-lower-bound} and \Cref{prop:rec-rank-width-equivalent}, we obtain the main result of this section.

\begin{thm}\label{th:mwd-rwd}
  Let \(G\) be a graph and let \(g = \fullboundariesgraph{G}{\initmap{}}{\initmap{}}{\emptymat}{\emptymat}\) be the corresponding morphism in \(\propGraph\).
  Then, \(\frac{1}{2} \cdot \rankwidth(G) \leq \mwd(g) \leq 2 \cdot \rankwidth(G)\).
\end{thm}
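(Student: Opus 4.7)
The plan is to assemble the theorem from Propositions \ref{prop:mwd-rwd-upper-bound} and \ref{prop:mwd-rwd-lower-bound}, which translate between inductive rank decompositions and monoidal decompositions with a factor-of-two width distortion, together with Propositions \ref{prop:recursive-rwd-upper-bound} and \ref{prop:recursive-rwd-lower-bound} bridging inductive rank width and rank width. The special shape of $g$ --- with empty left and right boundary matrices --- is precisely what makes all auxiliary rank correction terms in these intermediate bounds vanish, so only the clean factor of $2$ survives.

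First I set $\Gamma \defn \danglinggraph{G}{\initmap{}}$, the graph with dangling edges corresponding to $g$: its boundary matrix has rank $0$. Specialising Propositions \ref{prop:recursive-rwd-upper-bound} and \ref{prop:recursive-rwd-lower-bound} to this zero-rank boundary yields the equality $\rrankwidth(\Gamma) = \rankwidth(G)$, since the additive correction $\rank(B)$ in the upper direction is now zero while the lower direction is unconditional.

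For the upper bound on $\mwd(g)$, I pick an optimal inductive rank decomposition $T$ of $\Gamma$, so that $\decwidth(T) = \rrankwidth(\Gamma) = \rankwidth(G)$. Proposition \ref{prop:mwd-rwd-upper-bound} then produces a monoidal decomposition $\rTomdec(T) \in \decset{g}$ with $\decwidth(\rTomdec(T)) \leq 2 \cdot \decwidth(T)$, giving $\mwd(g) \leq 2 \cdot \rankwidth(G)$. For the lower bound, I pick an optimal monoidal decomposition $d \in \decset{g}$ with $\decwidth(d) = \mwd(g)$; Proposition \ref{prop:mwd-rwd-lower-bound} produces an inductive rank decomposition $\mTordec(d)$ of $\Gamma$ whose width is bounded by $2 \cdot \max\{\decwidth(d), \rank(\initmap{}), \rank(\initmap{})\} = 2 \cdot \mwd(g)$, since both boundary ranks are zero. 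Thus $\rankwidth(G) = \rrankwidth(\Gamma) \leq 2 \cdot \mwd(g)$, which rearranges to $\tfrac{1}{2} \cdot \rankwidth(G) \leq \mwd(g)$.

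Because the heavy lifting is absorbed into the prior propositions, the argument is genuinely a short assembly; the only subtlety worth highlighting is the observation that the choice $L = R = \initmap{}$ and $P = F = \emptymat$ zeroes out precisely the rank correction terms in the intermediate statements, so no term worse than the constant $2$ appears in either direction.
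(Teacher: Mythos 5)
Your proof is correct and follows exactly the paper's route: the paper's own argument is precisely this assembly of Propositions~\ref{prop:mwd-rwd-upper-bound}, \ref{prop:mwd-rwd-lower-bound}, \ref{prop:recursive-rwd-upper-bound} and \ref{prop:recursive-rwd-lower-bound}, with the observation that the empty boundary makes the correction terms $\rank(B)$, $\rank(L)$, $\rank(R)$ vanish. Your write-up just makes the bookkeeping explicit where the paper leaves it implicit.
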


\section{Conclusion and future work}
We defined monoidal width for measuring the complexity of morphisms in monoidal categories.
The concrete examples that we aimed to capture are tree width and rank width.
In fact, we have shown that, by choosing suitable categorical algebras, monoidal width is equivalent to these widths.
We have also related monoidal width to the rank of matrices over the natural numbers.

Our future goal is to leverage the generality of monoidal categories to study other examples outside the graph theory literature.
In the same way Courcelle's theorem gives fixed-parameter tractability of a class of problems on graphs with parameter tree width or rank width, we aim to obtain fixed-parameter tractability of a class of problems on morphisms of monoidal categories with parameter monoidal width.
This result would rely on Feferman-Vaught-Mostowski type theorems specific to the operations of a particular monoidal category \(\cat{C}\) or particular class of monoidal categories, which would ensure that the problems at hand respect the compositional structure of these categories.

\noindent\textbf{Conjecture.}
Computing a compositional problem on the set of morphisms \(\cat{C}_k(X,Y)\) with \(k\)-bounded monoidal width with a compositional algorithm is linear in \(\nodeweight\).
Explicitly, computing the solution on \(f \in \cat{C}_{k}(X,Y)\) takes \(\mathcal{O}(c(k) \cdot \nodeweight(f))\), for some more than exponential function \(c \colon \naturals \to \naturals\).

\bibliographystyle{alphaurl}
\bibliography{main.bib}
\end{document}